\newtheorem{remark}{Remark}
\newtheorem{theorem}{Theorem}
\newtheorem{corollary}{Corollary}
\begin{document}
%
\title{Secrecy Capacity Bounds for Visible Light Communications With Signal-Dependent Noise}

\author{Jin-Yuan Wang,  \IEEEmembership{Member, IEEE},
        Xian-Tao Fu,
        Jun-Bo Wang, \IEEEmembership{Member, IEEE},\\
        Min Lin,  \IEEEmembership{Member, IEEE},
        Julian Cheng, \IEEEmembership{Senior Member, IEEE},
        and Mohamed-Slim Alouini, \IEEEmembership{Fellow, IEEE}

\thanks{Jin-Yuan Wang is with Key Laboratory of Broadband Wireless Communication and Sensor Network Technology, Nanjing University of Posts and Telecommunications, Nanjing 210003, China, and also with Shanghai Key Laboratory of Trustworthy Computing, East China Normal University, Shanghai 200062, China (corresponding author, e-mail: jywang@njupt.edu.cn).}
\thanks{Xian-Tao Fu and Min Lin are with College of Telecommunications \& Information Engineering, Nanjing University of Posts and Telecommunications, Nanjing 210003, China (e-mail: \{1019010206, linmin\}@njupt.edu.cn).}
\thanks{Jun-Bo Wang is with National Mobile Communications Research Laboratory, Southeast University, Nanjing 211111, China (e-mail: jbwang@seu.edu.cn).}
\thanks{Julian Cheng is with School of Engineering, The University of British Columbia, Kelowna, BC, V1V 1V7, Canada (e-mail: julian.cheng@ubc.ca).}
\thanks{Mohamed-Slim Alouini is with Computer, Electrical and Mathematical Science and Engineering Division, King Abdullah University of Science and Technology, Thuwal 23955-6900, Saudi Arabia (e-mail: slim.alouini@kaust.edu.sa).}
}


\maketitle \linespread{1.45}
\begin{abstract}
In physical-layer security, one of the most fundamental issues is the secrecy capacity.
The objective of this paper is to determine the secrecy capacity for an indoor visible light communication system consisting of a transmitter, a legitimate receiver and an eavesdropping receiver.
In such a system, both signal-independent and signal-dependent Gaussian noises are considered.
Under non-negativity and average optical intensity constraints, lower and upper secrecy capacity bounds are first derived by the variational method, the dual expression of the secrecy capacity, and the concept of ``the optimal input distribution that escapes to infinity''.
Numerical results show that the secrecy capacity upper and lower bounds are tight.
By an asymptotic analysis at large optical intensity, there is a small performance gap between the asymptotic upper and lower bounds.
Then, by adding a peak optical intensity constraint, we further analyze the exact and asymptotic secrecy capacity bounds. Finally, the tightness of the derived bounds is verified by numerical results.
\end{abstract}

\begin{keywords}
Physical-layer security, secrecy capacity, signal-dependent noise, signal-independent noise, visible light communications.
\end{keywords}

\IEEEpeerreviewmaketitle

\newpage
\baselineskip=8.5mm

\section{Introduction}
\label{section1}
With the rapid development of the fifth generation wireless network and the application of Internet of things technology,
new transmission technologies have been developed to meet the demands of explosive growth in data traffic.
One such new transmission technology is indoor visible light communication (VLC), which uses visible light generated by light-emitting diodes (LEDs) for communication and is considered as a promising solution for the pressing data traffic demands \cite{BIB01,BIB02}.

Different from conventional radio frequency wireless communications (RFWC), indoor VLC has several distinctive properties.
First, the optical intensity of the input signal in VLC is controlled to convey wireless information.
Second, the indoor VLC performs communication and illumination simultaneously.
The average optical intensity of indoor VLC should not fluctuate with time to satisfy the indoor illumination requirement.
Third, since the optical intensity cannot be negative, the input signal in indoor VLC should be non-negative.
Consequently, the developed theory and analysis in traditional RFWC are not directly applicable to indoor VLC.

For conventional RFWC systems, the performance is usually evaluated by the classic concept of ``Shannon capacity" \cite{BIB02_1}, a concept that predicts the maximum transmission rate via a channel at a given noise level.
Unfortunately, such a concept is unsuitable for VLC systems due to the distinctive properties of indoor VLC.
What is the channel capacity limit of indoor VLC?
Recently, the channel capacity of VLC was analyzed by the inverse source coding approach \cite{BIB03}.
However, the derived channel capacity is not in closed-form, and the evaluation of channel capacity is time-consuming.
While tractable and closed-form expressions of channel capacity bounds were derived \cite{BIB04,BIB05}, the LED's peak optical intensity constraint, which reflects the LED's maximum luminous ability, was not considered.
With additional peak constraint,
tight channel capacity bounds can also be derived along with the corresponding capacity-achieving input distribution \cite{BIB06,BIB12}.
However, these bounds do not consider the effects of the modulation schemes on the capacity performance.
Employing pulse amplitude modulation \cite{BIB07}, orthogonal frequency division multiplexing \cite{BIB08}, and color shift keying \cite{BIB09}, the authors investigated the channel capacities of VLC, respectively.
In addition, by considering some actual factors, such as spatially random receiver \cite{BIB10} and signal interference \cite{BIB11}, the authors further evaluated the channel capacities of VLC.
All these works \cite{BIB03,BIB04,BIB05,BIB06,BIB12,BIB07,BIB08,BIB09,BIB10,BIB11} assumed noises are independent of the input signal.
This assumption is reasonable if the ambient light is dominant or the receiver suffers from intense thermal noise.
However, in practical VLC systems, common indoor environments desire high received optical intensity to satisfy the illumination requirement \cite{BIB13,BIB14}.
At such a high optical intensity, this assumption neglects a basic issue:
the noise strength relies on the input signal due to the random nature of photon emission of the LEDs \cite{BIB15}.
By considering the signal-dependent noise, we further analyzed the channel capacity bounds of indoor VLC \cite{BIB16}.
In short, all these channel capacity bounds \cite{BIB03,BIB04,BIB05,BIB06,BIB12,BIB07,BIB08,BIB09,BIB10,BIB11,BIB13,BIB14,BIB15,BIB16} provide theoretical references for designing practical VLC systems having high data rates \cite{BIB17,BIB18,BIB19}.

Compared with conventional RFWC systems, VLC systems not only have higher data rate,
but also can provide better security.
This is because light cannot penetrate through the walls.
Despite better signal confinement,
the VLC channels still have open and broadcast features \cite{BIB20}.
As a result,
information security issue in VLC have become a major concern of network administrators.
Conventional security schemes are generally performed at the upper-layers of the network stack via password protection, access controls, and end-to-end encryption.
The safety of these schemes depends on the restricted storage capacity and computational power of eavesdroppers.
However, the encrypted data may arouse suspicion, and even the most theoretically robust encryption can be defeated by eavesdroppers using non-computational approaches such as side-channel analysis.
Recently, as opposed to traditional network security, physical-layer security (PLS) has been proposed.
The basic principle of PLS is to exploit the wireless channel characteristics to ensure the successful decoding of the legitimate receiver and prevent the eavesdropper from doing so.
Similar to Shannon capacity, the PLS theory was also first developed by Shannon \cite{BIB21}.
After that, the widely used concept ``secrecy capacity" was proposed by Wyner \cite{BIB22},
and then the research was extended to many RFWC scenarios \cite{BIB23,BIB24,BIB25,BIB26,BIB27}.
Although much work has been done, the derived PLS results in RFWC cannot be directly applied to VLC.

To determine the PLS performance of VLC,
we derived the upper and lower bounds on secrecy capacity \cite{BIB28}.
However, the channel model considers only the specular reflection.
With consideration of both specular and diffusive reflections of VLC channels,
a modified Monte-Carlo ray tracing channel model was recently proposed to derive secrecy capacity bounds of VLC channels \cite{BIB29}.
The secrecy capacity analysis of VLC was also extended to the multiple-input single-output scenario \cite{BIB30} and
the spatially random transceiver scenario \cite{BIB31}.
Moreover, the PLS of VLC was comprehensively discussed \cite{BIB31_1}.
In these works \cite{BIB28,BIB29,BIB30,BIB31,BIB31_1}, the corrupting noises are assumed to be independent of the signal, but the signal-dependent noise is ignored.
For an VLC system having the signal-dependent noise, the secrecy-capacity-achieving input distribution and the asymptotic secrecy capacities were further discussed \cite{BIB32,BIB32_1}.
However, the derived results depend on the assumption that the signal-dependent noises of the main channel and the eavesdropping channel are identical. Moreover, exact closed-form expressions of the secrecy capacity bounds have not yet been derived.

In this paper, we further analyze the secrecy capacity of a classic three-node indoor VLC system having the signal-dependent noise.
Without assuming identical signal-dependent noise variances at the legitimate receiver and the eavesdropper \cite{BIB32,BIB32_1}, we consider arbitrary noise variances. The main contributions of this paper are summarized as follows:
\begin{enumerate}
  \item We analyze the secrecy capacity bounds for the VLC system by considering the non-negativity and average optical intensity constraints. By using the variational method, we first derive a lower bound on secrecy capacity. Applying the dual expression of the secrecy capacity and the concept of ``the optimal distribution that escapes to infinity'', we then provide an upper bound on the secrecy capacity. Numerical results verify the accuracy of the derived theoretical bounds.
  \item By adding an additional peak optical intensity constraint, we further investigate the secrecy capacity bounds for the VLC. Based on the information theory, we derive novel lower and upper bounds on secrecy capacity. The accuracy of the derived theoretical expressions is also confirmed by numerical results.
  \item We analyze the asymptotic behaviors at high optical intensity. Based on the exact secrecy capacity bounds, we derive asymptotic secrecy capacity bounds when the optical intensity tends to infinity. Theoretical analysis shows that a small performance gap exists between the asymptotic upper and lower bounds on secrecy capacity.
\end{enumerate}

The reminder of this paper is organized as follows.
Section \ref{section2} details the system model.
Section \ref{section3} analyzes the secrecy capacity bounds when considering the non-negativity and average optical intensity constraints,
while Section \ref{section4} derives the secrecy capacity bounds when considering both non-negativity, average optical intensity and peak optical intensity constraints.
Numerical results are provided in Section \ref{section5}.
Finally, conclusions of the paper are presented in Section \ref{section6}.

\emph{Notations:} In this paper, ${\cal N}( {\mu ,{\sigma ^2}})$ denotes a Gaussian distribution having mean $\mu $ and variance ${\sigma ^2}$;
${f_X}(\cdot)$ and ${f_{Y\left| X \right.}}(\cdot)$ denote the probability density function (PDF) of $X$ and the conditional PDF of $Y$ given variable $X$;
${E_X}\left(  \cdot  \right)$ denotes the expectation operator with respect to $X$;
${\mathop{\rm var}} \left(  \cdot  \right)$ denotes the variance of a variable;
$I\left( { \cdot ; \cdot } \right)$ denotes the mutual information;
${\cal H}\left(  \cdot  \right)$ and ${\cal H}\left( { \cdot \left|  \cdot  \right.} \right)$ denote the entropy and the conditional entropy;
$D( \cdot || \cdot )$ denotes the relative entropy.
We use $\ln \left(  \cdot  \right)$ for the natural logarithm and $Ei(x) = \int_{ - \infty }^x {e^t}/t {\rm d}t$ for the exponential integral function [36, P883].

\section{System Model}
\label{section2}
Consider an indoor VLC system consisting of a transmitter (i.e., Alice), a legitimate receiver (i.e., Bob) and an eavesdropping receiver (i.e., Eve), as shown in Fig. \ref{fig1}.
In such a system, Alice is equipped with a single LED to transmit optical intensity signals to Bob in the presence of Eve.
Both Bob and Eve are equipped with one photodiode (PD) individually to perform the optical-to-electrical conversion.
At the receiver of Bob or Eve, the main noise includes signal-independent and signal-dependent noises \cite{BIB28}.
The received signals at Bob and Eve can be expressed as
\begin{eqnarray}
\left\{ {\begin{array}{*{20}{c}}
{{Y_{\rm{B}}} = {H_{\rm{B}}}X + \sqrt {{H_{\rm{B}}}X} {Z_{{\rm{B,1}}}} + {Z_{{\rm{B,0}}}}}\\
{{Y_{\rm{E}}} = {H_{\rm{E}}}X + \sqrt {{H_{\rm{E}}}X} {Z_{{\rm{E,1}}}} + {Z_{{\rm{E,0}}}}}
\end{array}} \right.,
\label{eq1}
\end{eqnarray}
where $X$ denotes the transmit optical intensity signal from Alice.
${H_{\rm{B}}}$ and ${H_{\rm{E}}}$ denote the channel gains of the main channel (i.e., Alice-Bob channel) and the eavesdropping channel (i.e., Alice-Eve channel), respectively.
At Bob, ${Z_{{\rm{B,0}}}} \sim {\cal N}\left( {0,\sigma _{\rm{B}}^2} \right)$ and ${Z_{{\rm{B,1}}}} \sim {\cal N}\left( {0,\varsigma _{\rm{B}}^2\sigma _{\rm{B}}^2} \right)$ stand respectively for the signal-independent and signal-dependent Gaussian noises, where $\varsigma _{\rm{B}}^2 > 0$ denotes the ratio of the signal-dependent noise variance to the signal-independent noise variance at Bob.
Similarly, ${Z_{{\rm{E,0}}}} \sim {\cal N}\left( {0,\sigma _{\rm{E}}^2} \right)$ and ${Z_{{\rm{E,1}}}} \sim {\cal N}\left( {0,\varsigma _{\rm{E}}^2\sigma _{\rm{E}}^2} \right)$ denote the signal-independent and signal-dependent Gaussian noises at Eve, where $\varsigma _{\rm{E}}^2 > 0$ is the ratio of the signal-dependent noise variance to the signal-independent noise variance at Eve.
Furthermore, we also assume ${Z_{{\rm{B,0}}}}$ and ${Z_{{\rm{E,0}}}}$ to be independent of each other.

\begin{figure}
\centering
\rotatebox[origin=c]{-90}{\includegraphics[width=8cm]{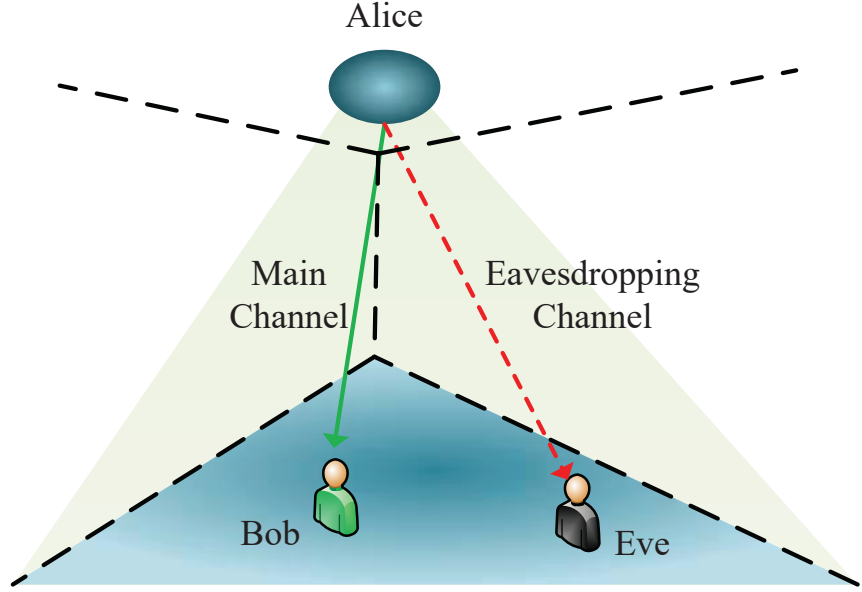}}
\caption{An indoor VLC system with Alice, Bob and Eve}
\label{fig1}
\end{figure}

For indoor VLC, we consider the following three signal constraints \cite{BIB28}.
\begin{description}
  \item[1)] \emph{Non-negativity}: In the received signal model (\ref{eq1}), the input signal $X$ is a non-negative random variable representing the intensity of the optical signal. Therefore, the non-negativity constraint is given by
\begin{eqnarray}
X \ge 0.
\label{eq2}
\end{eqnarray}
  \item[2)] \emph{Peak optical intensity constraint}: Because of the practical and safety restrictions, the intensity of the input signal is generality constrained by a peak optical intensity constraint. Therefore, we have
\begin{eqnarray}
X \le A,
\label{eq3}
\end{eqnarray}
where $A$ is the peak optical intensity of the LED at Alice.
  \item[3)] \emph{Average optical intensity constraint}: Due to the illumination requirement in VLC, the average optical intensity cannot fluctuate with time but can be adjusted according to the users' requirement. Therefore, the average optical intensity constraint is expressed as
\begin{eqnarray}
 {E_X}{\rm{(}}X{\rm{) = }}\xi P, \label{eq4}
\end{eqnarray}
where $\xi  \in \left( {0,1} \right]$ is the dimming target, $P \in \left( {0,A} \right]$ is the nominal optical intensity of the LED at Alice.
\end{description}

In VLC, the received optical intensity of the line-of-sight (LoS) path dominates that of the reflection paths, and thus we consider only the LoS path and ignore reflections from surrounding surfaces \cite{BIB35_1}.
In the received signal model (\ref{eq1}), the LoS channel gain ${H_k}$ ($k = {\rm{B}}$ or ${\rm{E}}$) can be expressed as
\begin{eqnarray}
{H_k} = \left\{ \begin{array}{l}
\frac{{(m + 1){A_r}}}{{2\pi D_k^2}}{T_s}g{\cos ^m}({\varphi _k})\cos ({\psi _k}),\;{\rm{if}}\;0 \le {\psi _k} \le \Psi \\
\;\;\;\;\;\;\;\;\;\;\;\;\;\;\;\;\;\;\;\;\;0,\;\;\;\;\;\;\;\;\;\;\;\;\;\;\;\;\;\;\;\;\;{\rm{otherwise}}
\end{array} \right.,
\label{eq5}
\end{eqnarray}
where $m$ is the order of Lambertian emission;
${A_r}$ is the physical area of the PD;
${T_s}$ and $g$ are the optical filter gain and concentrator gain of the PD, respectively;
$\Psi $ is the field of view of the PD;
${D_k}$, ${\varphi _k}$ and ${\psi _k}$ are the link distance, the irradiance angle and the incidence angle from Alice to Bob ($k = {\rm{B}}$) or Eve ($k = {\rm{E}}$).

\section{Secrecy Capacity for VLC Having Non-negativity and Average Optical Intensity Constraints}
\label{section3}
In this section, we analyze the secrecy capacity of the considered VLC system having the non-negativity in (\ref{eq2}) and the average optical intensity constraint in (\ref{eq4}). Specifically, we will provide tight secrecy capacity bounds and investigate the asymptotic behavior in high optical intensity regime.

The secrecy capacity represents the maximum transmission rate at which the legitimate receiver can reliably decode the transmitted message,
while the eavesdropping receiver cannot infer information at any positive rate \cite{BIB33}.
If the main channel is worse than the eavesdropping channel, then the random variables $X$, ${Y_{\rm{B}}}$ and ${Y_{\rm{E}}}$ form a Markov chain $X \to {Y_{\rm{E}}} \to {Y_{\rm{B}}}$, and thus the secrecy capacity is zero; otherwise, the random variables $X$, ${Y_{\rm{B}}}$ and ${Y_{\rm{E}}}$ form a Markov chain $X \to {Y_{\rm{B}}} \to {Y_{\rm{E}}}$,
and thus the main channel is stochastically degraded with respect to the eavesdropping channel.
In this case, we can derive the secrecy capacity by solving the following functional optimization problem
\begin{eqnarray}
&&{C_{\rm s}} = \mathop {\max }\limits_{{f_X}\left( x \right)} \left[ {I\left( {X;{Y_{\rm{B}}}} \right) - I\left( {X;{Y_{\rm{E}}}} \right)} \right] \nonumber\\
{\rm{s}}{\rm{.t}}{\rm{.}}\;&&\int_0^\infty  {{f_X}\left( x \right){\rm{d}}x}  = 1 \label{eq6}\\
&&{E_X}\left( X \right) = \int_0^\infty  {x{f_X}\left( x \right){\rm{d}}x}  = \xi P, \nonumber
\end{eqnarray}
where ${C_{\rm s}}$ denotes the secrecy capacity, and ${f_X}\left( x \right)$ denotes the PDF of $X$.
From \cite{BIB32}, there exists a unique discrete input PDF $f_X^*\left( x \right)$ that achieves the secrecy capacity of problem (\ref{eq6}).
However, the exact expression of $f_X^*\left( x \right)$ is still unknown.
Consequently, it is challenging to obtain a closed-form expression of the secrecy capacity for problem (\ref{eq6}).
Alternatively, we will derive lower and upper bounds on the secrecy capacity.

\subsection{Lower Bound on Secrecy Capacity}
\label{section3_1}
In the optimization problem (\ref{eq6}), by choosing an arbitrary input PDF $f_X(x)$ satisfying constraints (\ref{eq2}) and (\ref{eq4}), we can obtain a lower bound on the secrecy capacity as
\begin{eqnarray}
{C_{\rm s}} &\ge& {\left. {\left[ {I\left( {X\;;{Y_{\rm{B}}}} \right) - I\left( {X\;;{Y_{\rm{E}}}} \right)} \right]} \right|_{\forall {f_X}\left( x \right)\;{\rm{satifies}}\;(2)\;{\rm and}\;(4)}} \nonumber\\
 &=& {\cal H}\left( {{Y_{\rm{B}}}} \right) - {\cal H}\left( {{Y_{\rm{B}}}\left| X \right.} \right) - {\cal H}\left( {{Y_{\rm{E}}}} \right){\rm{ + }}{\cal H}\left( {{Y_{\rm{E}}}\left| X \right.} \right).
\label{eq7}
\end{eqnarray}

According to the received signal model (\ref{eq1}), the conditional PDF ${f_{{Y_k}\left| X \right.}}\left( {y\left| X \right.} \right)$ is given by
\begin{equation}
{f_{{Y_k}\left| X \right.}}\left( {y\left| X \right.} \right) = \frac{{\exp \left[ { - \frac{{{{\left( {y - {H_k}X} \right)}^2}}}{{2\left( {1 + {H_k}X\varsigma _k^2} \right)\sigma _k^2}}} \right]}}{{\sqrt {2\pi \left( {1 + {H_k}X\varsigma _k^2} \right)\sigma _k^2} }},\;k = {\rm{B}}\;{\rm{or}}\;{\rm{E}}{\rm{.}}
\label{eq8}
\end{equation}
Then, the conditional entropy ${\cal H}\left( {{Y_k}\left| X \right.} \right)$ in (\ref{eq7}) is written as
\begin{equation}
{\cal H}\left( {{Y_k}\left| X \right.} \right){\rm{ = }}\frac{1}{2}\ln \left( {2\pi e\sigma _k^2} \right){\rm{ + }}\frac{1}{2}{E_X}\left[ {\ln \left( {1 + {H_k}\varsigma _k^2X} \right)} \right],\;k = {\rm{B}}\;{\rm{or}}\;{\rm{E}}.
\label{eq9}
\end{equation}

Furthermore, ${\cal H}\left( {{Y_{\rm{E}}}} \right)$ is upper-bounded by the differential entropy of a Gaussian random variable having a variance ${\mathop{\rm var}} \left( {{Y_{\rm{E}}}} \right)$ [38, Theorem 8.6.5], i.e.,
\begin{equation}
{\cal H}\left( {{Y_{\rm{E}}}} \right) \le \frac{1}{2}\ln \left[ {2\pi e{\mathop{\rm var}} \left( {{Y_{\rm{E}}}} \right)} \right].
\label{eq10}
\end{equation}

According to (9) in \cite{BIB34}, we know that the output entropy is always larger than the input entropy (OELIE).
As a result, ${\cal H}\left( {{Y_{\rm{B}}}} \right)$ can be lower-bounded by
\begin{equation}
{\cal H}\left( {{Y_{\rm{B}}}} \right) \ge {\cal H}\left( X \right) + {f_{{\rm{low}}}}\left( {{H_{\rm{B}}},\xi ,P} \right),
\label{eq11}
\end{equation}
where ${f_{{\rm{low}}}}\left( {{H_{\rm{B}}},\xi ,P} \right)$ is given by
\begin{eqnarray}
{f_{{\rm{low}}}}\left( {{H_{\rm{B}}},\xi ,P} \right) &=& \frac{1}{2}\ln \left( {H_{\rm{B}}^2 + \frac{{2{H_{\rm{B}}}\varsigma _{\rm{B}}^2\sigma _{\rm{B}}^2}}{{\xi P}}} \right) - \frac{{{H_{\rm{B}}}\xi P + \varsigma _{\rm{B}}^2\sigma _{\rm{B}}^2}}{{\varsigma _{\rm{B}}^2\sigma _{\rm{B}}^2}} \nonumber\\
 &&+ \frac{{\sqrt {{H_{\rm{B}}}\xi P\left( {{H_{\rm{B}}}\xi P + 2\varsigma _{\rm{B}}^2\sigma _{\rm{B}}^2} \right)} }}{{\varsigma _{\rm{B}}^2\sigma _{\rm{B}}^2}}.
\label{eq12}
\end{eqnarray}

Substituting (\ref{eq9}), (\ref{eq10}) and (\ref{eq11}) into (\ref{eq7}), we can rewrite the lower bound on secrecy capacity as
\begin{eqnarray}
{C_s} &\ge& {\cal H}\left( X \right) + {f_{{\rm{low}}}}\left( {{H_{\rm{B}}},\xi ,P} \right) + \frac{1}{2}\ln \left( {\frac{{\sigma _{\rm{E}}^2}}{{\sigma _{\rm{B}}^2}}} \right) \nonumber\\
 &&+ \frac{1}{2}{E_X}\left[ {\ln \left( {\frac{{1 + {H_{\rm{E}}}\varsigma _{\rm{E}}^2X}}{{1 + {H_{\rm{B}}}\varsigma _{\rm{B}}^2X}}} \right)} \right] - \frac{1}{2}\ln \left[ {2\pi e{\mathop{\rm var}} \left( {{Y_{\rm{E}}}} \right)} \right].
\label{eq13}
\end{eqnarray}
As can be seen from (\ref{eq13}), we can obtain a lower bound on secrecy capacity by choosing an arbitrary input PDF satisfying the constraints in problem (\ref{eq6}).
However, we should choose a good input PDF to obtain a tight lower bound.
By using the variational method, we derive a tight secrecy capacity lower bound in the following theorem.

\begin{theorem}
For indoor VLC having constraints (\ref{eq2}) and (\ref{eq4}), we derive a lower bound on the secrecy capacity as
\begin{eqnarray}
{C_{{\rm{Low}}}} &=& \frac{1}{2}\ln \left[ {\frac{{e{\xi ^2}{P^2}\sigma _{\rm{E}}^2}}{{2\pi \sigma _{\rm{B}}^2\left( {H_{\rm{E}}^2{\xi ^2}{P^2} + {H_{\rm{E}}}\xi P\varsigma _{\rm{E}}^2\sigma _{\rm{E}}^2 + \sigma _{\rm{E}}^2} \right)}}} \right] + {f_{{\rm{low}}}}\left( {{H_{\rm{B}}},\xi ,P} \right) \nonumber\\
 &&+ \frac{1}{2}\left[ {{e^{\frac{1}{{{H_{\rm{B}}}\varsigma _{\rm{B}}^2\xi P}}}}Ei\left( { - \frac{1}{{{H_{\rm{B}}}\varsigma _{\rm{B}}^2\xi P}}} \right) - {e^{\frac{1}{{{H_{\rm{E}}}\varsigma _{\rm{E}}^2\xi P}}}}Ei\left( { - \frac{1}{{{H_{\rm{E}}}\varsigma _{\rm{E}}^2\xi P}}} \right)} \right].
\label{eq14}
\end{eqnarray}
\label{the1}
\end{theorem}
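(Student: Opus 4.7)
\emph{Proof proposal.} My plan is to exhibit a specific feasible input PDF, substitute it into the generic lower bound (\ref{eq13}), and match the result against (\ref{eq14}). The variational method motivates the choice: among densities supported on $[0,\infty)$ with prescribed mean $\xi P$, the exponential density $f_X^{\star}(x)=\frac{1}{\xi P}\exp(-x/(\xi P))$ is the unique maximizer of the differential entropy $\mathcal{H}(X)$, which is the dominant $f_X$-dependent term in (\ref{eq13}). I therefore adopt $f_X^{\star}$ as the trial distribution (it satisfies (\ref{eq2}) and (\ref{eq4}) by construction) and reduce the problem to three explicit evaluations.

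First, for the exponential law one has $\mathcal{H}(X)=1+\ln(\xi P)$ and $\mathrm{var}(X)=\xi^{2}P^{2}$. Second, applying the law of total variance to (\ref{eq1}) with $k=\mathrm{E}$ gives $\mathrm{var}(Y_{\mathrm{E}})=H_{\mathrm{E}}^{2}\mathrm{var}(X)+E_{X}[(1+H_{\mathrm{E}}\varsigma_{\mathrm{E}}^{2}X)\sigma_{\mathrm{E}}^{2}]=H_{\mathrm{E}}^{2}\xi^{2}P^{2}+H_{\mathrm{E}}\xi P\varsigma_{\mathrm{E}}^{2}\sigma_{\mathrm{E}}^{2}+\sigma_{\mathrm{E}}^{2}$, which is exactly the denominator inside the first logarithm of (\ref{eq14}).

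The principal calculation is the expected log-ratio $\frac{1}{2}E_{X}[\ln((1+H_{\mathrm{E}}\varsigma_{\mathrm{E}}^{2}X)/(1+H_{\mathrm{B}}\varsigma_{\mathrm{B}}^{2}X))]$. I would split it into two sub-integrals and, for each $c>0$, invoke the identity $\int_{0}^{\infty}\ln(1+cx)\frac{1}{\xi P}e^{-x/(\xi P)}dx=-e^{1/(c\xi P)}Ei(-1/(c\xi P))$, derivable by integration by parts (differentiating $\ln(1+cx)$, integrating the density) followed by the substitution $s=(1+cx)/(c\xi P)$, which puts the remaining tail integral into the form of the paper's $Ei(x)=\int_{-\infty}^{x}e^{t}/t\,dt$. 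Applied with $c=H_{\mathrm{B}}\varsigma_{\mathrm{B}}^{2}$ and $c=H_{\mathrm{E}}\varsigma_{\mathrm{E}}^{2}$, this yields the two-term $Ei$ bracket of (\ref{eq14}).

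Finally, I would substitute the three evaluations together with $f_{\mathrm{low}}(H_{\mathrm{B}},\xi,P)$ from (\ref{eq12}) into (\ref{eq13}), and combine $1+\ln(\xi P)$, $\frac{1}{2}\ln(\sigma_{\mathrm{E}}^{2}/\sigma_{\mathrm{B}}^{2})$ and $-\frac{1}{2}\ln[2\pi e\,\mathrm{var}(Y_{\mathrm{E}})]$ into the single logarithm heading (\ref{eq14}). The main obstacle is the exponential-integral identity in the third step: sign and argument conventions of $Ei(\cdot)$ on negative arguments are easy to mishandle, especially since the two coefficients $H_{\mathrm{B}}\varsigma_{\mathrm{B}}^{2}$ and $H_{\mathrm{E}}\varsigma_{\mathrm{E}}^{2}$ must be tracked separately through the integration by parts and the change of variables; everything downstream is routine consolidation of logarithms.
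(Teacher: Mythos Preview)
Your proposal is correct and follows essentially the same approach as the paper: choose the exponential density on $[0,\infty)$ with mean $\xi P$ (the max-entropy solution under (\ref{eq2}) and (\ref{eq4})), compute $\mathcal{H}(X)=\ln(e\xi P)$, evaluate $\mathrm{var}(Y_{\mathrm E})$, reduce $E_X[\ln((1+H_{\mathrm E}\varsigma_{\mathrm E}^2 X)/(1+H_{\mathrm B}\varsigma_{\mathrm B}^2 X))]$ to exponential integrals via integration by parts, and substitute into (\ref{eq13}). Your use of the law of total variance for $\mathrm{var}(Y_{\mathrm E})$ is a slightly cleaner packaging of the paper's explicit decomposition (\ref{eq38})--(\ref{eq41}), but the substance is identical.
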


\begin{proof}
See Appendix \ref{Appendix_A}.
\end{proof}

\begin{corollary}
In \emph{Theorem \ref{the1}}, when we ignore the signal-dependent noise (i.e., ${\varsigma _{\rm{B}}} \to 0$ and ${\varsigma _{\rm{E}}} \to 0$), the secrecy capacity lower bound in (\ref{eq14}) reduces to
\begin{equation}
\mathop {\lim }\limits_{\scriptstyle{\varsigma _{\rm{B}}} \to 0\hfill\atop
\scriptstyle{\varsigma _{\rm{E}}} \to 0\hfill} {C_{{\rm{Low}}}} = \frac{1}{2}\ln \left[ {\frac{{eH_{\rm{B}}^2{\xi ^2}{P^2}\sigma _{\rm{E}}^2}}{{2\pi \sigma _{\rm{B}}^2\left( {H_{\rm{E}}^2{\xi ^2}{P^2} + \sigma _{\rm{E}}^2} \right)}}} \right].
\label{eq15}
\end{equation}
\label{cor1}
\end{corollary}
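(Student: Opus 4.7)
The plan is to evaluate the joint limit in (\ref{eq14}) term by term, matching each piece against the target expression in (\ref{eq15}). Observe that $C_{\rm Low}$ naturally decomposes into three independent blocks: the leading logarithm, the function $f_{\rm low}(H_{\rm B},\xi,P)$ defined in (\ref{eq12}), and the bracket of exponential integral terms. The first block depends only on $\varsigma_{\rm E}$, the second only on $\varsigma_{\rm B}$, and the third has one summand for each variance, so the double limit factors cleanly.

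First I would handle the leading logarithm. As $\varsigma_{\rm E}\to 0$, the middle summand $H_{\rm E}\xi P\varsigma_{\rm E}^2\sigma_{\rm E}^2$ in the denominator vanishes by continuity, yielding
\begin{equation*}
\frac{1}{2}\ln\!\left[\frac{e\xi^2 P^2\sigma_{\rm E}^2}{2\pi\sigma_{\rm B}^2\left(H_{\rm E}^2\xi^2 P^2+\sigma_{\rm E}^2\right)}\right],
\end{equation*}
which already contains all the factors of (\ref{eq15}) except for the $H_{\rm B}^2$ in the numerator. Second, for the exponential-integral bracket, let $t_k=1/(H_k\varsigma_k^2\xi P)$, so that $\varsigma_k\to 0$ corresponds to $t_k\to +\infty$. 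Using the standard asymptotic expansion $Ei(-t)\sim -e^{-t}\bigl(\tfrac{1}{t}-\tfrac{1}{t^2}+\cdots\bigr)$ for $t\to+\infty$, we obtain $e^{t_k}Ei(-t_k)=O(1/t_k)\to 0$. Hence both summands inside the bracket tend to zero, and this block contributes nothing to the limit.

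The main obstacle is the $f_{\rm low}$ block, because the last two summands of (\ref{eq12}) individually blow up like $1/\varsigma_{\rm B}^2$ as $\varsigma_{\rm B}\to 0$; the identity must come out of a delicate cancellation. To handle this I would introduce the abbreviation $u=\varsigma_{\rm B}^2\sigma_{\rm B}^2$ and rewrite
\begin{equation*}
f_{\rm low}(H_{\rm B},\xi,P)=\tfrac{1}{2}\ln\!\Bigl(H_{\rm B}^2+\tfrac{2H_{\rm B}u}{\xi P}\Bigr)-\tfrac{H_{\rm B}\xi P}{u}-1+\tfrac{H_{\rm B}\xi P}{u}\sqrt{1+\tfrac{2u}{H_{\rm B}\xi P}}.
\end{equation*}
Applying the Taylor expansion $\sqrt{1+a}=1+\tfrac{a}{2}-\tfrac{a^2}{8}+O(a^3)$ with $a=2u/(H_{\rm B}\xi P)$ shows that the two divergent $H_{\rm B}\xi P/u$ pieces cancel exactly, as do the $\pm 1$ constants, leaving a remainder of order $u$ that vanishes. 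The logarithmic term tends to $\tfrac{1}{2}\ln H_{\rm B}^2$, supplying precisely the missing $H_{\rm B}^2$ factor.

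Finally I would add the three limits. Combining $\tfrac{1}{2}\ln H_{\rm B}^2$ with the first-block logarithm via $\ln a+\ln b=\ln(ab)$ recovers (\ref{eq15}) exactly, completing the proof. The only genuinely nontrivial step is the Taylor cancellation in the $f_{\rm low}$ analysis; everything else is a direct continuity or asymptotic argument.
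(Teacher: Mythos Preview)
Your proof is correct and complete; the paper itself omits the argument entirely, stating only that ``the proof of \emph{Corollary \ref{cor1}} is straightforward and hence omitted.'' Your term-by-term limit analysis---in particular the Taylor cancellation showing $f_{\rm low}(H_{\rm B},\xi,P)\to\ln H_{\rm B}$ and the asymptotic $e^{t}Ei(-t)\to 0$ for the exponential-integral block---is precisely the verification the authors left to the reader.
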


\begin{proof}
The proof of \emph{Corollary \ref{cor1}} is straightforward and hence omitted.
\end{proof}

\begin{remark}
For the VLC having only the signal-independent noise,
we derive the secrecy capacity lower bound (\ref{eq15}) in \emph{Corollary \ref{cor1}} based on the principle of OELIE (\ref{eq11}),
while the authors in \cite{BIB28} also derived a lower bound (8) by using the entropy power inequality (EPI).
It can be easily shown that the lower bound (\ref{eq15}) in \emph{Corollary \ref{cor1}} is smaller than the lower bound (8) in \cite{BIB28}.
This finding suggests that the EPI is a more efficient approach to analyze the secrecy capacity lower bound when considering only the signal-independent noise.
Unfortunately, the EPI approach cannot be applied to the PLS analysis of VLC having the signal-dependent noise.
\label{rem1}
\end{remark}

\subsection{Upper Bound on Secrecy Capacity}
\label{section3_2}
In this subsection, the derivation of the upper bound is based on the dual expression of the secrecy capacity.
For an arbitrary conditional PDF ${g_{{Y_{\rm{B}}}|{Y_{\rm E}}}}({y_{\rm{B}}}|{Y_{\rm{E}}})$, the following inequality holds \cite{BIB28}
\begin{equation}
I(X;{Y_{\rm{B}}}|{Y_{\rm{E}}}) \le {E_{X{Y_{\rm{E}}}}}\left[ {D\left( {\left. {{f_{{Y_{\rm{B}}}|X{Y_{\rm E}}}}({y_{\rm{B}}}|X,{Y_{\rm{E}}})} \right\|{g_{{Y_{\rm{B}}}|{Y_{\rm E}}}}({y_{\rm{B}}}|{Y_{\rm{E}}})} \right)} \right].
\label{eq16}
\end{equation}
In the inequality (\ref{eq16}), selecting any ${g_{{Y_{\rm{B}}}|{Y_{\rm E}}}}({y_{\rm{B}}}|{Y_{\rm{E}}})$ will result in an upper bound of $I(X;{Y_{\rm{B}}}|{Y_{\rm{E}}})$.
To obtain a tight upper bound, we have
\begin{equation}
I(X;{Y_{\rm{B}}}|{Y_{\rm{E}}}) = \mathop {\min }\limits_{{g_{{Y_{\rm{B}}}|{Y_{\rm E}}}}({y_{\rm{B}}}|{Y_{\rm{E}}})} {E_{X{Y_{\rm{E}}}}}\left[ {D\left( {\left. {{f_{{Y_{\rm{B}}}|X{Y_{\rm E}}}}({y_{\rm{B}}}|X,{Y_{\rm{E}}})} \right\|{g_{{Y_{\rm{B}}}|{Y_{\rm E}}}}({y_{\rm{B}}}|{Y_{\rm{E}}})} \right)} \right].
\label{eq17}
\end{equation}

From problem (\ref{eq6}), the secrecy capacity can be re-expressed as
\begin{equation}
{C_{\rm s}} = \mathop {\max }\limits_{{f_X}\left( x \right)} I\left( {X;{Y_{\rm{B}}}\left| {{Y_{\rm{E}}}} \right.} \right).
\label{eq18}
\end{equation}
In (\ref{eq18}), there exists a unique input PDF $f_{X^*}(x)$ that maximizes $I(X;{Y_{\rm{B}}}|{Y_{\rm{E}}})$ subject to the constraints in problem (\ref{eq6}).
Therefore, we have the dual expression of secrecy capacity as
\begin{eqnarray}
{C_{\rm s}} &=& \mathop {\max }\limits_{{f_X}(x)} \mathop {\min }\limits_{{g_{{Y_{\rm{B}}}|{Y_{\rm E}}}}({y_{\rm{B}}}|{Y_{\rm{E}}})} {E_{X{Y_{\rm{E}}}}}\left[ {D\left( {\left. {{f_{{Y_{\rm{B}}}|X{Y_{\rm E}}}}({y_{\rm{B}}}|X,{Y_{\rm{E}}})} \right\|{g_{{Y_{\rm{B}}}|{Y_{\rm E}}}}({y_{\rm{B}}}|{Y_{\rm{E}}})} \right)} \right] \nonumber\\
 &=& \mathop {\min }\limits_{{g_{{Y_{\rm{B}}}|{Y_{\rm E}}}}({y_{\rm{B}}}|{Y_{\rm{E}}})} {E_{{X^*}{Y_{\rm{E}}}}}\left[ {D\left( {\left. {{f_{{Y_{\rm{B}}}|X{Y_{\rm E}}}}({y_{\rm{B}}}|X,{Y_{\rm{E}}})} \right\|{g_{{Y_{\rm{B}}}|{Y_{\rm E}}}}({y_{\rm{B}}}|{Y_{\rm{E}}})} \right)} \right],
\label{eq19}
\end{eqnarray}
where ${X^*}$ denotes the optimal input.

To obtain a tight upper bound on secrecy capacity,
we should choose a tractable and suitable ${g_{{Y_{\rm{B}}}|{Y_{\rm E}}}}({y_{\rm{B}}}|{Y_{\rm{E}}})$.
Then, the following theorem is obtained.

\begin{theorem}
For indoor VLC having constraints (\ref{eq2}) and (\ref{eq4}), we derive an upper bound on the secrecy capacity as
\begin{eqnarray}
{C_{{\rm{Upp}}}} = \left\{ \begin{array}{l}
\ln \left( {\sqrt {\frac{{{\rm{4}}e{H_{\rm{E}}}\varsigma _{\rm{E}}^2\sigma _{\rm{E}}^2}}{{{\pi ^2}M}}} {\rm{ + }}\sqrt {\frac{{{\rm{2}}e\xi P{H_{\rm{B}}}{H_{\rm{E}}}\varsigma _{\rm{E}}^2\sigma _{\rm{E}}^2}}{{M\pi \varsigma _{\rm{B}}^2\sigma _{\rm{B}}^2}}} } \right),\; {\rm{if}}\;\frac{1}{{\sqrt {2\pi } }} \ge \frac{{{H_{\rm{E}}}}}{{{H_{\rm{B}}}}}\left( {\sqrt {\frac{{{H_{\rm{B}}}\varsigma _{\rm B}^2\sigma_{\rm B}^2}}{{2\pi M}}}  + \frac{{{H_{\rm{B}}}}}{2}\sqrt {\frac{{\xi P}}{M}} } \right)\\
\quad\quad\quad\quad\quad \frac{1}{2}\ln \left( {\frac{{4e{H_{\rm{B}}}\varsigma _{\rm{E}}^2\sigma _{\rm{E}}^2}}{{{\pi ^2}{H_{\rm{E}}}\varsigma _{\rm{B}}^2\sigma _{\rm{B}}^2}}} \right),\;\;\;\;\;\;\;\quad {\rm{otherwise}}
\end{array} \right.,
\label{eq20}
\end{eqnarray}
where $M = H_{\rm{E}}^2\varsigma _{\rm{B}}^2\sigma _{\rm{B}}^2/{H_{\rm{B}}} + {H_{\rm{E}}}\varsigma _{\rm{E}}^2\sigma _{\rm{E}}^2$.

\label{the2}
\end{theorem}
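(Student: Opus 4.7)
\emph{Proof plan for Theorem \ref{the2}.} The strategy is to invoke the dual expression (\ref{eq19}) of the secrecy capacity and construct an explicit auxiliary conditional density $g_{Y_{\rm B}|Y_{\rm E}}(y_{\rm B}|y_{\rm E})$ so that the resulting relative-entropy expectation can be evaluated in closed form and then bounded uniformly over all input distributions satisfying (\ref{eq2}) and (\ref{eq4}). Because any feasible choice of $g$ already produces a valid upper bound, the tightness of (\ref{eq20}) hinges on a judicious construction rather than on a further optimization.

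First I would exploit the fact that, conditional on $X$, the outputs $Y_{\rm B}$ and $Y_{\rm E}$ are independent, so that $f_{Y_{\rm B}|X,Y_{\rm E}}(y_{\rm B}|x,y_{\rm E})$ collapses to the Gaussian density (\ref{eq8}). This permits $g$ to depend on $y_{\rm E}$ in any convenient way. A natural and tractable choice, motivated by the signal-dependent noise term $\sqrt{H_{\rm B}X}Z_{{\rm B},1}$, is a mixture
\begin{equation*}
g_{Y_{\rm B}|Y_{\rm E}}(y_{\rm B}|y_{\rm E}) = \alpha\,\phi_1(y_{\rm B};y_{\rm E}) + (1-\alpha)\,\phi_2(y_{\rm B};y_{\rm E}),\quad \alpha\in[0,1],
\end{equation*}
where $\phi_1$ is a density decaying like $1/\sqrt{y_{\rm B}}$ (so that $-\ln\phi_1$ matches the behavior of $\tfrac{1}{2}\ln(1+H_{\rm B}\varsigma_{\rm B}^2 X)$ in the shot-noise regime) and $\phi_2$ is a Gaussian-type density centered near $H_{\rm B}y_{\rm E}/H_{\rm E}$ that exploits the correlation between $Y_{\rm B}$ and $Y_{\rm E}$ induced by their common dependence on $X$. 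The constant $M = H_{\rm E}^2\varsigma_{\rm B}^2\sigma_{\rm B}^2/H_{\rm B} + H_{\rm E}\varsigma_{\rm E}^2\sigma_{\rm E}^2$ that appears in (\ref{eq20}) should emerge from the normalization of $\phi_2$ after the $y_{\rm E}$-integration.

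Next I would expand $D(f_{Y_{\rm B}|X,Y_{\rm E}}\Vert g_{Y_{\rm B}|Y_{\rm E}})$ by bounding $-\ln g$ through the mixture inequality $-\ln(\alpha\phi_1+(1-\alpha)\phi_2)\le \min\{-\ln\alpha-\ln\phi_1,\,-\ln(1-\alpha)-\ln\phi_2\}$, take expectations over $(X,Y_{\rm E})$, and apply Jensen's inequality to handle the $E[\sqrt{H_{\rm B}X}]$ and $E[\ln(1+H_{\rm B}\varsigma_{\rm B}^2 X)]$ terms against the mean constraint $E[X]=\xi P$. The concept of ``the optimal input distribution that escapes to infinity'' enters here to argue that the Jensen upper bounds are essentially tight when evaluated at the capacity-achieving $X^\star$, so that the right-hand side of (\ref{eq19}) is bounded above by a function of the single free parameter $\alpha$ together with $\xi P$ and the system constants.

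The main obstacle is the final case analysis. Differentiating the resulting expression in $\alpha$ produces a first-order condition whose root lies in the interior of $(0,1)$ precisely when the parameter inequality $\frac{1}{\sqrt{2\pi}} \ge \frac{H_{\rm E}}{H_{\rm B}}\left(\sqrt{H_{\rm B}\varsigma_{\rm B}^2\sigma_{\rm B}^2/(2\pi M)} + (H_{\rm B}/2)\sqrt{\xi P/M}\right)$ holds; substituting this interior optimum yields the first, $\xi P$-dependent branch of (\ref{eq20}). In the complementary regime the maximum is attained at a boundary value of $\alpha$, degenerating the mixture to a single component and producing the simpler second branch. Verifying that the normalizations of $\phi_1$ and $\phi_2$ reproduce exactly the constants $4e/\pi^2$, $M$, and the coefficient $2e H_{\rm B}H_{\rm E}\varsigma_{\rm E}^2\sigma_{\rm E}^2/(M\pi\varsigma_{\rm B}^2\sigma_{\rm B}^2)$ is the most error-prone step and requires careful bookkeeping of the Gaussian cross-entropy and the $y_{\rm E}$-marginal.
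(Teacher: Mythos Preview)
Your starting point---invoking the dual expression (\ref{eq19}) and constructing an explicit $g_{Y_{\rm B}|Y_{\rm E}}$---matches the paper, but the construction itself diverges from what the paper actually does. The paper does not use a mixture. It first splits the relative entropy into $I_1=-\mathcal{H}(Y_{\rm B}|X^*,Y_{\rm E})$ and the cross-entropy term $I_2$; the constant $M$ (together with $N=H_{\rm E}^2\sigma_{\rm B}^2/H_{\rm B}^2+\sigma_{\rm E}^2$) appears already in $I_1$, after rewriting $Y_{\rm E}=(H_{\rm E}/H_{\rm B})Y_{\rm B}+\text{noise}$ and computing $\mathcal{H}(Y_{\rm E}|X,Y_{\rm B})$ with conditional variance $MX+N$---not from any normalization of the auxiliary density. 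For $I_2$ the paper chooses a \emph{single} Laplace density
\[
g_{Y_{\rm B}|Y_{\rm E}}(y_{\rm B}|y_{\rm E})=\frac{1}{2s^2}\,e^{-|y_{\rm B}-\mu y_{\rm E}|/s^2}
\]
with two free scalars $\mu,s$; the term $|y_{\rm B}-\mu y_{\rm E}|$ is handled by the triangle inequality, $s^2$ is optimized pointwise, Jensen is applied to $\ln(\cdot)$ and $\sqrt{\cdot}$, and the ``escapes to infinity'' device replaces expectations such as $E_{X^*}[(1+H_{\rm B}\varsigma_{\rm B}^2X)/(1+H_{\rm E}\varsigma_{\rm E}^2X)]$ by their large-$P$ limits. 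The two branches of (\ref{eq20}) then come from a case analysis on the \emph{shift} parameter $\mu$ (the three regions $\mu\le 0$, $0\le\mu\le H_{\rm B}/H_{\rm E}$, $\mu\ge H_{\rm B}/H_{\rm E}$), not from an interior/boundary dichotomy for a mixture weight.

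Beyond being a different route, your mixture plan has concrete gaps. A density that ``decays like $1/\sqrt{y_{\rm B}}$'' is not integrable on $\mathbb{R}$ (or on $[0,\infty)$), so $\phi_1$ as described is not a PDF without truncation, and any truncation reintroduces scale parameters you would then have to optimize. The bound $-\ln(\alpha\phi_1+(1-\alpha)\phi_2)\le\min\{-\ln\alpha-\ln\phi_1,\,-\ln(1-\alpha)-\ln\phi_2\}$, while true pointwise, does not pass through the expectation in closed form: evaluating $E[\min\{\cdot,\cdot\}]$ forces you to split the $(y_{\rm B},y_{\rm E})$-domain according to which branch is active, defeating the purpose of a tractable test density. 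Finally, because $M$ enters the paper's bound through the $I_1$ term and not through the auxiliary density, your expectation that it will emerge from a normalization of $\phi_2$ is misplaced, and without that piece the bookkeeping cannot reproduce the specific constants in (\ref{eq20}).
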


\begin{proof}
See Appendix \ref{Appendix_B}.
\end{proof}

\begin{corollary}
In \emph{Theorem \ref{the2}}, when we ignore the signal-dependent noise (i.e., ${\varsigma _{\rm{B}}} \to 0$ and ${\varsigma _{\rm{E}}} \to 0$), the secrecy capacity upper bound (\ref{eq20}) reduces to
\begin{eqnarray}
 \mathop {\lim }\limits_{{\varsigma _{\rm{B}}} \to 0\hfill\atop
{\varsigma _{\rm{E}}} \to 0\hfill} {C_{{\rm{Upp}}}} = \left\{ \begin{array}{l}
\ln \left[ {\frac{{4e\left( {\frac{{{\sigma _{\rm{B}}}}}{{\sqrt {2\pi } }} + \frac{{{H_{\rm{B}}}\xi P}}{2}} \right)}}{{\sqrt {2\pi e\sigma _{\rm{B}}^2\left( {1 + \frac{{H_{\rm{E}}^2\sigma _{\rm{B}}^2}}{{H_{\rm{B}}^2\sigma _{\rm{E}}^2}}} \right)} }}} \right],\;{\rm{if }}\,\frac{1}{{\sqrt {2\pi } }} \ge \frac{{{H_{\rm{E}}}}}{{\sqrt {H_{\rm{E}}^2\sigma _{\rm{B}}^2 + H_{\rm{B}}^2\sigma _{\rm{E}}^2} }}\left( {\frac{{{\sigma _{\rm{B}}}}}{{\sqrt {2\pi } }} + \frac{{{H_{\rm{B}}}\xi P}}{2}} \right)\\
\ln \left( {\frac{{2\sqrt e {H_{\rm{B}}}{\sigma _{\rm{E}}}}}{{\pi {H_{\rm{E}}}{\sigma _{\rm{B}}}}}} \right),\;\;\;\;\;\;\;\;\;\;\quad {\rm{otherwise}}
\end{array} \right..
\label{eq21}
\end{eqnarray}
\label{cor2}
\end{corollary}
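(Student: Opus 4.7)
The plan is to obtain (\ref{eq21}) by passing to the limit directly in (\ref{eq20}), treating the two cases separately and keeping careful track of how $M=H_{\rm E}^2\varsigma_{\rm B}^2\sigma_{\rm B}^2/H_{\rm B}+H_{\rm E}\varsigma_{\rm E}^2\sigma_{\rm E}^2$ scales as the signal-dependent noise ratios vanish. Since $M$ is quadratic in $\varsigma_{\rm B}$ and $\varsigma_{\rm E}$ while the numerators inside the square roots are also quadratic in the $\varsigma$'s, the indeterminate $0/0$ behaviour should resolve into a finite limit once the ratios are computed.

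For the first (nontrivial) branch, I would rewrite each radical by dividing numerator and denominator by $\varsigma_{\rm B}^2\sigma_{\rm B}^2$ (or $\varsigma_{\rm E}^2\sigma_{\rm E}^2$), so that only the ratio $\varsigma_{\rm E}^2\sigma_{\rm E}^2/(\varsigma_{\rm B}^2\sigma_{\rm B}^2)$ appears together with $M/(\varsigma_{\rm B}^2\sigma_{\rm B}^2)$. Under the simultaneous limit taken with $\varsigma_{\rm B},\varsigma_{\rm E}\to 0$ at a common rate, these ratios stabilise and the first radical tends to $\tfrac{2}{\pi}\sqrt{eH_{\rm B}\sigma_{\rm E}^2/(H_{\rm E}\sigma_{\rm B}^2+H_{\rm B}\sigma_{\rm E}^2)}$, while the second radical reduces to a multiple of $\sqrt{\xi P}$ with the same denominator $H_{\rm E}^2\sigma_{\rm B}^2+H_{\rm B}^2\sigma_{\rm E}^2$ that appears in (\ref{eq21}). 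Combining, factoring $4e$, and pulling the common $\sqrt{2\pi e\sigma_{\rm B}^2(1+H_{\rm E}^2\sigma_{\rm B}^2/H_{\rm B}^2\sigma_{\rm E}^2)}$ into the denominator should reproduce the first case of (\ref{eq21}) exactly.

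For the second branch, the substitution is much easier: once $M$ and its companion factors simplify in the limit, the argument of the logarithm becomes the ratio $4eH_{\rm B}\sigma_{\rm E}^2/(\pi^2 H_{\rm E}\sigma_{\rm B}^2)$ (up to the $\tfrac{1}{2}$ absorbed into the square root), yielding the stated $\ln\!\bigl(2\sqrt{e}H_{\rm B}\sigma_{\rm E}/(\pi H_{\rm E}\sigma_{\rm B})\bigr)$. I would then verify that the Case-1 condition in (\ref{eq20}) transforms into that of (\ref{eq21}) by the same algebraic manipulations---multiplying through by $\sqrt{M}$ and using $M\to 0$ with the appropriate ratios---so that the dichotomy matches. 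Because \emph{Corollary \ref{cor1}} was dismissed as straightforward and this result is analogous in spirit, I would similarly present the calculation inline rather than in an appendix.

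The main obstacle will be the joint limit: written as it stands, (\ref{eq20}) depends on the relative rates of $\varsigma_{\rm B}$ and $\varsigma_{\rm E}$, and a naive substitution leaves several $0/0$ expressions that are genuinely indeterminate. Fixing a common vanishing rate (or equivalently noting that the $\varsigma_k$'s enter only through the ratio that stabilises in the limit) is the key technical point that must be stated explicitly to justify the reduction to a formula containing no $\varsigma$'s.
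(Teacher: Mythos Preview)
Your plan to pass to the limit directly in (\ref{eq20}) does not work, and the ``common rate'' fix you propose does not resolve the indeterminacy. Consider the second branch of (\ref{eq20}): with $\varsigma_{\rm B}=\varsigma_{\rm E}=\varsigma\to 0$ one gets
\[
\tfrac12\ln\!\Big(\frac{4e H_{\rm B}\sigma_{\rm E}^2}{\pi^2 H_{\rm E}\sigma_{\rm B}^2}\Big)
=\ln\!\Big(\frac{2\sqrt{e}\,\sqrt{H_{\rm B}}\,\sigma_{\rm E}}{\pi\sqrt{H_{\rm E}}\,\sigma_{\rm B}}\Big),
\]
which is \emph{not} the stated $\ln\!\big(2\sqrt{e}\,H_{\rm B}\sigma_{\rm E}/(\pi H_{\rm E}\sigma_{\rm B})\big)$. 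Likewise, in the first branch your limit produces the combination $H_{\rm E}\sigma_{\rm B}^2+H_{\rm B}\sigma_{\rm E}^2$ rather than the $H_{\rm E}^2\sigma_{\rm B}^2+H_{\rm B}^2\sigma_{\rm E}^2$ that appears in (\ref{eq21}). So the expression (\ref{eq20}) simply does not specialise to (\ref{eq21}) along any natural path $\varsigma_{\rm B},\varsigma_{\rm E}\to 0$.

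The reason is structural: (\ref{eq20}) was obtained in Appendix~\ref{Appendix_B} only after invoking the escape-to-infinity approximations (\ref{eq62})--(\ref{eq63}) and the bound $\tfrac{H_{\rm B}X}{2\sqrt{MX+N}}\le \tfrac{H_{\rm B}\sqrt X}{2\sqrt M}$ in (\ref{eq60}), all of which presuppose $\varsigma_{\rm B},\varsigma_{\rm E}>0$. When $\varsigma_{\rm B}=\varsigma_{\rm E}=0$ one has $M=0$, the ratio $(1+H_{\rm B}X\varsigma_{\rm B}^2)/(1+H_{\rm E}X\varsigma_{\rm E}^2)$ is identically $1$, and those steps are either vacuous or give entirely different limits. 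The correct ``straightforward'' argument is therefore not to take a limit in (\ref{eq20}) but to return to (\ref{eq59}) (or (\ref{eq61})) with $\varsigma_{\rm B}=\varsigma_{\rm E}=0$: then $MX+N=N=H_{\rm E}^2\sigma_{\rm B}^2/H_{\rm B}^2+\sigma_{\rm E}^2$ is constant, the expectations are exact (no asymptotics needed), and the optimisation over $\mu$ reproduces (\ref{eq21}) with $N$ playing the role that $M$ did. This is what the paper's omitted proof intends.
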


\begin{proof}
The proof of \emph{Corollary \ref{cor2}} is straightforward and thus omitted here.
\end{proof}

\begin{remark}
Under constraints (\ref{eq2}) and (\ref{eq4}), a secrecy capacity upper bound (16) for VLC having only the signal-independent noise was also derived \cite{BIB28}.
As can be observed, the derived upper bound (\ref{eq21}) in \emph{Corollary \ref{cor2}} is the same as the upper bound (16) in \cite{BIB28}. This observation indicates that the result in \cite{BIB28} is just a special case of \emph{Theorem \ref{the2}} in this paper.
\label{rem2}
\end{remark}

\subsection{Asymptotic Behavior Analysis}
\label{section3_3}
In indoor VLC systems, we are more interested in the secrecy performance in the high optical intensity regime \cite{BIB16}.
Therefore, we let the nominal optical intensity of the LED $P$ tend to infinity.
According to \emph{Theorem \ref{the1}} and \emph{Theorem \ref{the2}}, the asymptotic performance is derived in the following corollary.

\begin{corollary}
For indoor VLC having constraints (\ref{eq2}) and (\ref{eq4}),
we derive asymptotic lower and upper bounds on secrecy capacity as
\begin{eqnarray}
\left\{ \begin{array}{l}
\displaystyle \mathop {\lim }\limits_{P \to \infty } {C_{{\rm{Low}}}} = \frac{1}{2}\ln \left( {\frac{{e{H_{\rm{B}}}\varsigma _{\rm{E}}^2\sigma _{\rm{E}}^2}}{{2\pi {H_{\rm{E}}}\varsigma _{\rm{B}}^2\sigma _{\rm{B}}^2}}} \right)\\
\displaystyle \mathop {\lim }\limits_{P \to \infty } {C_{{\rm{Upp}}}} = \frac{1}{2}\ln \left( {\frac{{4e{H_{\rm{B}}}\varsigma _{\rm{E}}^2\sigma _{\rm{E}}^2}}{{{\pi ^2}{H_{\rm{E}}}\varsigma _{\rm{B}}^2\sigma _{\rm{B}}^2}}} \right)
\end{array} \right..
\label{eq22}
\end{eqnarray}

\label{cor3}
\end{corollary}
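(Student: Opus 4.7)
The plan is to take $P \to \infty$ term by term in the closed-form bounds (\ref{eq14}) and (\ref{eq20}); once Theorems \ref{the1} and \ref{the2} are in hand, this reduces to routine asymptotic analysis of logarithms, square roots, and the exponential integral.

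For the lower bound, I would split $C_{\rm Low}$ in (\ref{eq14}) into three groups. First, the leading logarithm contains the ratio $\xi^2 P^2/(H_E^2\xi^2 P^2 + H_E\xi P \varsigma_E^2\sigma_E^2 + \sigma_E^2)$, whose $P^2$ terms dominate, giving limit $1/H_E^2$, so the term tends to $\frac{1}{2}\ln\bigl(e\sigma_E^2/(2\pi H_E^2 \sigma_B^2)\bigr)$. Second, for $f_{\rm low}(H_B,\xi,P)$ in (\ref{eq12}), I would write $\sqrt{H_B\xi P(H_B\xi P + 2\varsigma_B^2\sigma_B^2)} = H_B\xi P\sqrt{1 + 2\varsigma_B^2\sigma_B^2/(H_B\xi P)}$ and apply $\sqrt{1+u} = 1 + u/2 + O(u^2)$. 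The leading $H_B\xi P/(\varsigma_B^2\sigma_B^2)$ and the additive $1$ cancel the subtracted term $(H_B\xi P + \varsigma_B^2\sigma_B^2)/(\varsigma_B^2\sigma_B^2)$, leaving $\lim f_{\rm low} = \ln H_B$. Third, for the exponential-integral piece, the arguments $1/(H_k\varsigma_k^2\xi P) \to 0^+$, so $e^{1/(H_k \varsigma_k^2 \xi P)} \to 1$, and using $Ei(-x) = \gamma + \ln x + O(x)$ as $x \to 0^+$ the difference collapses to $\frac{1}{2}\ln\bigl(H_E\varsigma_E^2/(H_B\varsigma_B^2)\bigr)$. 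Summing the three contributions and consolidating logarithms yields the claimed $\frac{1}{2}\ln\bigl(eH_B\varsigma_E^2\sigma_E^2/(2\pi H_E\varsigma_B^2\sigma_B^2)\bigr)$.

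For the upper bound, I would first determine which branch of (\ref{eq20}) is active in the limit. Since $M = H_E^2\varsigma_B^2\sigma_B^2/H_B + H_E\varsigma_E^2\sigma_E^2$ is independent of $P$, the right-hand side of the branching condition grows like $\tfrac{H_E}{2}\sqrt{\xi P/M}$, whereas the left-hand side $1/\sqrt{2\pi}$ is constant. Hence the condition fails for all sufficiently large $P$, and the ``otherwise'' branch governs. Since that expression is already $P$-independent, its limit is itself, matching the statement.

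The main obstacle is the small-argument expansion of $Ei(\cdot)$: I must carry enough terms to confirm that the Euler-Mascheroni constants attached to each $Ei$ cancel in the difference, leaving only the ratio of log arguments. A secondary bookkeeping point is verifying that the $O(1/P)$ residuals from the square-root expansion in $f_{\rm low}$ truly vanish rather than accumulate against the subtracted linear term; once this cancellation is pinned down, the remainder of the derivation is algebraic manipulation of logarithms.
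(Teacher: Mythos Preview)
Your proposal is correct and follows essentially the same term-by-term strategy as the paper's Appendix~\ref{Appendix_C}. The only differences are in execution: the paper quotes $\lim_{P\to\infty} f_{\rm low}(H_{\rm B},\xi,P)=\ln H_{\rm B}$ from \cite{BIB34} rather than expanding the square root, and it handles the exponential-integral difference by reverting to the expectation representation (\ref{eq42}) and passing the limit inside, whereas you use the small-argument series $Ei(-x)=\gamma+\ln x+O(x)$ directly; your explicit expansions are slightly more self-contained.
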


\begin{proof}
See Appendix \ref{Appendix_C}.
\end{proof}

\begin{remark}
From \emph{Corollary \ref{cor3}}, both the asymptotic lower and upper bounds do not scale with the nominal optical intensity $P$ but converge to real and positive constants.
Moreover, the gap between the asymptotic upper bound and the asymptotic lower bound is given by
\begin{eqnarray}
{C_{{\rm{gap}}}} &=& \frac{1}{2}\ln \left( {\frac{{4e{H_{\rm{B}}}\varsigma _{\rm{E}}^2\sigma _{\rm{E}}^2}}{{{\pi ^2}{H_{\rm{E}}}\varsigma _{\rm{B}}^2\sigma _{\rm{B}}^2}}} \right) - \frac{1}{2}\ln \left( {\frac{{e{H_{\rm{B}}}\varsigma _{\rm{E}}^2\sigma _{\rm{E}}^2}}{{2\pi {H_{\rm{E}}}\varsigma _{\rm{B}}^2\sigma _{\rm{B}}^2}}} \right) \nonumber\\
 &=& \frac{1}{2}\ln \left( {\frac{8}{\pi }} \right) \approx 0.4674\;{\rm{nat/transmission}}{\rm{.}}
\label{eq23}
\end{eqnarray}
This indicates that the asymptotic lower and upper bounds on secrecy capacity do not coincide,
but the asymptotic performance gap is 0.4674 nat/transmission, which is small.
\label{rem3}
\end{remark}

\section{Secrecy Capacity for VLC Having Non-negativity, Average Optical Intensity and Peak Optical Intensity Constraints}
\label{section4}
In this section, we further derive exact and asymptotic secrecy capacity bounds for the VLC system when an additional peak optical intensity constraint is imposed on the channel input.

By considering constraints (\ref{eq2}), (\ref{eq3}) and (\ref{eq4}), we can derive the secrecy capacity by solving the following functional problem, i.e.,
\begin{eqnarray}
&&{C_{\rm s}} = \mathop {\max }\limits_{{f_X}\left( x \right)} \left[ {I\left( {X;{Y_{\rm{B}}}} \right) - I\left( {X;{Y_{\rm{E}}}} \right)} \right] \nonumber\\
{\rm{s}}{\rm{.t}}{\rm{.}}\;&&\int_0^A {{f_X}\left( x \right){\rm{d}}x}  = 1  \label{eq24}\\
&&E\left( X \right) = \int_0^A {x{f_X}\left( x \right){\rm{d}}x}  = \xi P. \nonumber
\end{eqnarray}
Similar to problem (\ref{eq6}), it is challenging to obtain the exact secrecy capacity expression for problem (\ref{eq24}).
We will derive tight lower and upper bounds on secrecy capacity in the following two subsections.

\subsection{Lower Bound on Secrecy Capacity}
\label{section4_1}
By choosing an arbitrary input PDF in problem (\ref{eq24}) satisfying constraints (\ref{eq2}), (\ref{eq3}) and (\ref{eq4}),
we can obtain a lower bound on the secrecy capacity as
\begin{eqnarray}
{C_{\rm s}} &\ge& {\left. {\left[ {I\left( {X\;;{Y_{\rm{B}}}} \right) - I\left( {X\;;{Y_{\rm{E}}}} \right)} \right]} \right|_{\forall {f_X}\left( x \right)\;{\rm{satifies}}\;(2),\,\;(3)\;{\rm and}\;(4)}} \nonumber\\
 &=& {\cal H}\left( {{Y_{\rm{B}}}} \right) - {\cal H}\left( {{Y_{\rm{B}}}\left| X \right.} \right) - {\cal H}\left( {{Y_{\rm{E}}}} \right){\rm{ + }}{\cal H}\left( {{Y_{\rm{E}}}\left| X \right.} \right).
\label{eq25}
\end{eqnarray}

In this case, the secrecy capacity lower bound (\ref{eq13}) also holds.
Define the average to peak optical intensity ratio as $\alpha  = {{\xi P} \mathord{\left/
 {\vphantom {{\xi P} A}} \right.
 \kern-\nulldelimiterspace} A}$, we derive a lower bound on secrecy capacity in the following theorem.

\begin{theorem}
For VLC having constraints (\ref{eq2}), (\ref{eq3}) and (\ref{eq4}),
we derive a lower bound on the secrecy capacity as
\begin{eqnarray}
{C'_{{\rm{Low}}}} = \left\{ \begin{array}{l}
{C_1},\,{\rm{if}}\,\alpha  = 0.5\\
{C_2},\,{\rm{if}}\,\alpha  \ne 0.5
\end{array} \right.,
\label{eq26}
\end{eqnarray}
where ${C_1}$ and ${C_2}$ are defined as
\begin{eqnarray}
{C_1} &=& {f_{{\rm{low}}}}\left( {{H_{\rm{B}}},\xi ,P} \right) + \frac{1}{2}\ln \left( {\frac{{6{A^2}\sigma _{\rm{E}}^2}}{{\pi e\sigma _{\rm{B}}^2\left( {H_{\rm{E}}^2{A^2} + 6A{H_{\rm{E}}}\varsigma _{\rm{E}}^2\sigma _{\rm{E}}^2 + 12\sigma _{\rm{E}}^2} \right)}}} \right) \nonumber\\
 &&+ \frac{1}{2}\ln \left( {\frac{{1 + {H_{\rm{E}}}\varsigma _{\rm{E}}^2A}}{{1 + {H_{\rm{B}}}\varsigma _{\rm{B}}^2A}}} \right) - \frac{{\ln \left( {1 + {H_{\rm{B}}}\varsigma _{\rm{B}}^2A} \right)}}{{2A{H_{\rm{B}}}\varsigma _{\rm{B}}^2}} + \frac{{\ln \left( {1 + {H_{\rm{E}}}\varsigma _{\rm{E}}^2A} \right)}}{{2A{H_{\rm{E}}}\varsigma _{\rm{E}}^2}},
\label{eq27}
\end{eqnarray}
\begin{eqnarray}
{C_2} \!\!\!&=&\!\!\! {f_{{\rm{low}}}}\left(\! {{H_{\rm{B}}},\xi ,P} \!\right) \!-\! c\xi P \!+\! \frac{1}{2}\ln\!\! \left(\! {\frac{{\sigma _{\rm{E}}^2}}{{\sigma _{\rm{B}}^2}}\frac{{{{({e^{cA}} - 1)}^2}}}{{{c^2}}}} \!\right) \!+\! \frac{1}{{2\left( {{e^{cA}} - 1} \right)}}\!\left\{\! {\ln \!\!\left(\! {\frac{{1 + {H_{\rm{E}}}\varsigma _{\rm{E}}^2A}}{{1 + {H_{\rm{B}}}\varsigma _{\rm{B}}^2A}}} \right){e^{cA}}} \right. \nonumber\\
 &&- {e^{ - \frac{c}{{{H_{\rm{E}}}\varsigma _{\rm{E}}^2}}}}\left[ {Ei\left( {\frac{c}{{{H_{\rm{E}}}\varsigma _{\rm{E}}^2}}\left( {1 + {H_{\rm{E}}}\varsigma _{\rm{E}}^2A} \right)} \right) - Ei\left( {\frac{c}{{{H_{\rm{E}}}\varsigma _{\rm{E}}^2}}} \right)} \right] \nonumber\\
&&\left. { + {e^{ - \frac{c}{{{H_{\rm{B}}}\varsigma _{\rm{B}}^2}}}}\left[ {Ei\left( {\frac{c}{{{H_{\rm{B}}}\varsigma _{\rm{B}}^2}}\left( {1 + {H_{\rm{B}}}\varsigma _{\rm{B}}^2A} \right)} \right) - Ei\left( {\frac{c}{{{H_{\rm{B}}}\varsigma _{\rm{B}}^2}}} \right)} \right]} \right\} \nonumber\\
 &&- \frac{1}{2}\ln \left[ {2\pi e\left( {H_{\rm{E}}^2\left( {\frac{{A\left( {cA - 2} \right)}}{{c\left( {1 - {e^{ - cA}}} \right)}} + \frac{2}{{{c^2}}} - {\xi ^2}{P^2}} \right) + {H_{\rm{E}}}\xi P\varsigma _{\rm{E}}^2\sigma _{\rm{E}}^2 + \sigma _{\rm{E}}^2} \right)} \right],
\label{eq28}
\end{eqnarray}
where $c$ in (\ref{eq28}) is the solution to the following equation
\begin{equation}
\alpha  = \frac{1}{{1 - {e^{ - cA}}}} - \frac{1}{{cA}}.
\label{eq29}
\end{equation}

\label{the3}
\end{theorem}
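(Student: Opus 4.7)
The plan is to apply the general lower bound (\ref{eq13}), which holds for any input PDF satisfying (\ref{eq2}) and (\ref{eq4}), using a judiciously chosen distribution that also respects the peak constraint (\ref{eq3}). Since (\ref{eq13}) contains the term $\mathcal{H}(X)$ and in addition only involves first-order expectations of $X$ and $\ln(1+H_k\varsigma_k^2 X)$ plus $\mathrm{var}(Y_{\rm E})$, a natural candidate is the maximum-entropy distribution on $[0,A]$ with mean $\xi P$. When $\alpha=\xi P/A=0.5$, this reduces to the uniform distribution $f_X(x)=1/A$ on $[0,A]$; for $\alpha\neq 0.5$, standard Lagrangian arguments give the truncated exponential
\begin{equation*}
f_X(x)=\frac{c\,e^{cx}}{e^{cA}-1},\quad 0\le x\le A,
\end{equation*}
where $c$ is the unique real solution of $E(X)=\xi P$, which upon explicit integration yields precisely (\ref{eq29}). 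The split of $C'_{\rm Low}$ into $C_1$ and $C_2$ is a technical convenience: many terms of $C_2$ are $0/0$ at $c=0$, even though $c\to 0$ corresponds to $\alpha\to 0.5$.

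With the input fixed, I would compute in each case the three ingredients that (\ref{eq13}) requires. For the uniform case, $\mathcal{H}(X)=\ln A$, and direct integration gives $E_X[\ln(1+H_k\varsigma_k^2 X)]=\frac{(1+H_k\varsigma_k^2 A)\ln(1+H_k\varsigma_k^2 A)}{A H_k\varsigma_k^2}-1$, while $\mathrm{var}(X)=A^2/12$ leads to $\mathrm{var}(Y_{\rm E})=(H_{\rm E}^2A^2+6H_{\rm E}\varsigma_{\rm E}^2\sigma_{\rm E}^2 A+12\sigma_{\rm E}^2)/12$. Substituting into (\ref{eq13}) and collecting the $\ln$-terms produces $C_1$ in (\ref{eq27}). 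For the truncated exponential, the entropy is $\mathcal{H}(X)=\ln\frac{e^{cA}-1}{c}-c\xi P$, and $E(X^2)$ follows from two integrations by parts, yielding $\mathrm{var}(X)=\frac{A(cA-2)}{c(1-e^{-cA})}+\frac{2}{c^2}-\xi^2P^2$, which accounts for the bracketed variance expression in (\ref{eq28}).

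The main obstacle is evaluating $E_X[\ln(1+H_k\varsigma_k^2 X)]$ under the truncated exponential. The plan here is integration by parts with $u=\ln(1+H_k\varsigma_k^2 x)$ and $dv=\frac{ce^{cx}}{e^{cA}-1}dx$, followed by the substitution $s=\frac{c}{H_k\varsigma_k^2}(1+H_k\varsigma_k^2 x)$ in the remaining integral. This converts it into $\int e^s/s\,ds$ over a shifted interval, which by definition produces the difference of exponential integrals
\begin{equation*}
\frac{e^{-c/(H_k\varsigma_k^2)}}{e^{cA}-1}\Bigl[\mathrm{Ei}\bigl(\tfrac{c(1+H_k\varsigma_k^2 A)}{H_k\varsigma_k^2}\bigr)-\mathrm{Ei}\bigl(\tfrac{c}{H_k\varsigma_k^2}\bigr)\Bigr],
\end{equation*}
appearing in (\ref{eq28}) with the appropriate signs for $k={\rm B},{\rm E}$.

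Finally I would substitute everything into (\ref{eq13}), recalling that the $f_{\rm low}(H_{\rm B},\xi,P)$ term is imported unchanged from (\ref{eq12}) and collect the $\mathcal{H}(X)-cE(X)-\frac{1}{2}\ln[2\pi e\,\mathrm{var}(Y_{\rm E})]+\frac{1}{2}\ln(\sigma_{\rm E}^2/\sigma_{\rm B}^2)$ pieces. Algebraic simplification, in particular pulling $\ln[(e^{cA}-1)^2/c^2]$ out of the entropy and combining it with the $-c\xi P$ contribution, produces $C_2$ as stated in (\ref{eq28}). Because (\ref{eq13}) is a valid lower bound for any admissible $f_X$, this establishes the theorem.
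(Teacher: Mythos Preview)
Your proposal is correct and follows essentially the same route as the paper's proof: choose the maximum-entropy input on $[0,A]$ with mean $\xi P$ (uniform for $\alpha=0.5$, truncated exponential for $\alpha\neq 0.5$), evaluate $\mathcal{H}(X)$, $\mathrm{var}(Y_{\rm E})$, and $E_X\!\left[\ln\frac{1+H_{\rm E}\varsigma_{\rm E}^2 X}{1+H_{\rm B}\varsigma_{\rm B}^2 X}\right]$ (the latter via integration by parts leading to the $Ei$ terms), and substitute into (\ref{eq13}). The only minor slip is the phrase ``$\mathcal{H}(X)-cE(X)$'' in your final paragraph, since the $-c\xi P$ term is already part of $\mathcal{H}(X)$; this is just a wording issue and does not affect the argument.
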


\begin{proof}
See Appendix \ref{Appendix_D}.
\end{proof}

From Appendix \ref{Appendix_D}, the maxentropic input PDF is used to obtain the lower bound of the secrecy capacity in \emph{Theorem \ref{the3}}.
For different $\alpha $ values, the maxentropic input PDFs are different.
The properties of such PDFs are provided in the following theorem.

\begin{theorem}
If $\alpha  = 0.5$, then the maxentropic input PDF (\ref{eq75}) is uniformly distributed in $[0,A]$;
if $\alpha  \in (0.5,1]$, then $c > 0$ in (\ref{eq29}), and the maxentropic input PDF (\ref{eq79}) is a monotonically increasing function of $x \in (0,A]$;
if $\alpha  \in (0,0.5)$, then $c < 0$ in (\ref{eq29}) and the maxentropic input PDF (\ref{eq79}) is a monotonically decreasing function of $x \in (0,A]$.
Moreover, the curves of maxentropic input PDF (\ref{eq79}) with $\alpha $ and $1 - \alpha $ are symmetric with respect to $X = A/2$.
\label{the4}
\end{theorem}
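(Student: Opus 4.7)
The plan is to exploit the fact that equations (75) and (79) in Appendix D describe the maximum-entropy distribution on $[0,A]$ with mean $\xi P = \alpha A$, which is a truncated exponential $f_X^*(x) = c e^{cx}/(e^{cA}-1)$ for $c \ne 0$ and reduces to the uniform density $1/A$ when $c = 0$. All four assertions of the theorem will follow from a careful study of the map $c \mapsto \alpha(c)$ defined by (29), together with direct differentiation of $f_X^*$.

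First I would dispose of the uniform case by taking the limit $c \to 0$ in (29) via a Taylor expansion of $1/(1-e^{-cA})$, which yields $\lim_{c\to 0}\!\bigl[1/(1-e^{-cA}) - 1/(cA)\bigr] = 1/2$. Hence $c=0$ solves (29) precisely when $\alpha = 1/2$, at which point (75) collapses to the uniform density. To determine the sign of $c$ for $\alpha \ne 1/2$, I would set $g(c) = 1/(1-e^{-cA}) - 1/(cA)$ and prove strict monotonicity on $\mathbb{R}$ by showing that the sign of $g'(c)$ agrees with the sign of $(1-e^{-cA})^2 - (cA)^2 e^{-cA}$. Using $1 - e^{-u} = 2e^{-u/2}\sinh(u/2)$, this reduces to $e^{-cA}\bigl[(2\sinh(cA/2))^2 - (cA)^2\bigr]$, which is strictly positive for $cA\ne 0$ by the power-series expansion of $\sinh$. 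Combined with $g(0)=1/2$, $g(+\infty)=1$, and $g(-\infty)=0$, this gives $c>0$ for $\alpha \in (0.5,1]$ and $c<0$ for $\alpha \in (0,0.5)$.

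Next, the monotonicity of $f_X^*$ on $(0,A]$ follows from $(f_X^*)'(x) = c^2 e^{cx}/(e^{cA}-1)$: the numerator is strictly positive, while the denominator shares the sign of $c$, yielding a strictly increasing density when $c>0$ and a strictly decreasing density when $c<0$. For the symmetry claim, I would first verify the algebraic identity $g(-c) = 1 - g(c)$ by a direct manipulation using $1 - 1/(1-e^{-cA}) = -1/(e^{cA}-1)$. This shows that if the parameter associated with $\alpha$ is $c$, then the parameter associated with $1-\alpha$ is $-c$. A short computation then produces $f_X^*(A-x;c) = c e^{cA} e^{-cx}/(e^{cA}-1) = f_X^*(x;-c)$, which is exactly the reflection symmetry about $X = A/2$.

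The main obstacle is the strict monotonicity of $g(c)$: while the intermediate algebraic reductions are routine, the key step hinges on the sharp inequality $(2\sinh(u/2))^2 > u^2$ for $u \ne 0$, and carefully tracking the signs of $e^{cA}-1$ and $c$ in both regimes is essential for the monotonicity and symmetry parts to come out cleanly. Once $g$ is established as strictly increasing with the correct limiting values, every remaining claim reduces to a one-line computation.
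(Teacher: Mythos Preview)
Your proposal is correct and follows the same overall architecture as the paper's proof in Appendix~E: both establish the sign of $c$ by proving that the map $c\mapsto \alpha(c)=\frac{1}{1-e^{-cA}}-\frac{1}{cA}$ is strictly increasing with value $1/2$ at $c=0$, then read off the monotonicity of the density from the sign of $c$, and finally obtain the symmetry by showing that replacing $\alpha$ by $1-\alpha$ flips $c$ to $-c$.

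The one genuine difference is in how the positivity of $g'(c)$ (equivalently of $(1-e^{-cA})^2-(cA)^2e^{-cA}$) is established. The paper factors this as $(1-e^{-cA}+cAe^{-cA/2})(1-e^{-cA}-cAe^{-cA/2})$ and then introduces two auxiliary functions $G(c)$ and $K(c)$, differentiating each in turn to show the second factor is positive for $c>0$; the case $c<0$ is then declared analogous. Your route via the identity $1-e^{-u}=2e^{-u/2}\sinh(u/2)$, which reduces everything to the single inequality $\bigl(2\sinh(u/2)\bigr)^2>u^2$ for $u\neq 0$, is shorter, handles both signs of $c$ simultaneously, and avoids the nested derivative chain. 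You also record the limits $g(+\infty)=1$ and $g(-\infty)=0$, which the paper does not state explicitly; this is a minor addition but makes the bijection onto $(0,1)$ transparent. The remaining steps (the derivative $(f_X^*)'(x)=c^2e^{cx}/(e^{cA}-1)$ and the reflection identity $f_X^*(A-x;c)=f_X^*(x;-c)$) are the same computations as the paper's, just written more compactly.
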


\begin{proof}
See Appendix \ref{Appendix_E}.
\end{proof}

From \emph{Theorem \ref{the3}}, the following corollary and remark can be derived.

\begin{corollary}
In \emph{Theorem \ref{the3}}, when we ignore the signal-dependent noise (i.e., ${\varsigma _{\rm{B}}} \to 0$ and ${\varsigma _{\rm{E}}} \to 0$), the secrecy capacity lower bound (\ref{eq26}) reduces to
\begin{eqnarray}
\mathop {\lim }\limits_{\varsigma _{\rm{B}} \to 0\hfill\atop
\varsigma _{\rm{E}} \to 0\hfill} {C'_{{\rm{Low}}}} = \left\{ \begin{array}{l}
\frac{1}{2}\ln \left[ {\frac{{3H_{\rm{B}}^2\sigma _{\rm{E}}^2{A^2}}}{{2\pi e\sigma _{\rm{B}}^2\left( {{\xi ^2}{P^2}H_{\rm{E}}^2 + 3\sigma _{\rm{E}}^2} \right)}}} \right],\;\;\;\;\;\;\;\;\;\;\;\;\;\;\;\;\;\;\;\;\;\;{\rm{if}}\,\alpha  = 0.5\\
\frac{1}{2}\ln \left[ {\frac{{\sigma _{\rm{E}}^2H_{\rm{B}}^2{e^{ - 2c\xi P}}{{\left( {\frac{{{e^{cA}} - 1}}{c}} \right)}^2}}}{{2\pi e\sigma _{\rm{B}}^2\left( {\frac{{H_{\rm{E}}^2A\left( {cA - 2} \right)}}{{c\left( {1 - {e^{ - cA}}} \right)}} + \frac{{2H_{\rm{E}}^2}}{{{c^2}}} - H_{\rm{E}}^2{\xi ^2}{P^2} + \sigma _{\rm{E}}^2} \right)}}} \right],\,{\rm{if}}\,\alpha  \ne 0.5
\end{array} \right..
\label{eq30}
\end{eqnarray}
\label{cor4}
\end{corollary}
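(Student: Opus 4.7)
The plan is to take the limit $\varsigma_{\rm B},\varsigma_{\rm E}\to 0$ term-by-term in the expression for $C'_{\rm Low}$ from \emph{Theorem \ref{the3}}, treating the two branches $\alpha=0.5$ and $\alpha\ne 0.5$ separately. Most manipulations are routine limits of smooth functions, but the exponential-integral contributions in $C_{2}$ require a careful asymptotic analysis; that is where the main obstacle lies.

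First I would establish that $f_{\rm low}(H_{\rm B},\xi,P)\to\ln(H_{\rm B})$. Writing $t=\varsigma_{\rm B}^{2}\sigma_{\rm B}^{2}$, the last two summands of (\ref{eq12}) combine as $t^{-1}\bigl[\sqrt{H_{\rm B}\xi P(H_{\rm B}\xi P+2t)}-(H_{\rm B}\xi P+t)\bigr]$, a $0/0$ indeterminacy whose Taylor expansion of the square root shows it to be $O(t)$, so it vanishes and only $\tfrac{1}{2}\ln(H_{\rm B}^{2})$ survives. In addition, $\ln\!\bigl((1+H_{\rm E}\varsigma_{\rm E}^{2}A)/(1+H_{\rm B}\varsigma_{\rm B}^{2}A)\bigr)\to 0$, and using $\ln(1+x)/x\to 1$ the pair $\mp\ln(1+H_{k}\varsigma_{k}^{2}A)/(2AH_{k}\varsigma_{k}^{2})$ in $C_{1}$ each converges to $\mp 1/2$ and cancels. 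Substituting $\xi P=A/2$ (which holds when $\alpha=0.5$) into the first branch of (\ref{eq30}) and comparing with what remains of $C_{1}$ then verifies the $\alpha=0.5$ case.

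The main obstacle is the exponential-integral bracket in $C_{2}$. The key observation is that the constant $c$ defined by (\ref{eq29}) depends only on $\alpha$ and $A$, hence is \emph{independent of} $\varsigma_{\rm B}$ and $\varsigma_{\rm E}$; consequently the arguments $c/(H_{k}\varsigma_{k}^{2})$ diverge to $\pm\infty$ as $\varsigma_{k}\to 0$. Applying the asymptotic $Ei(z)=e^{z}/z\,(1+O(1/z))$ as $|z|\to\infty$ to each of the four $Ei$ evaluations yields, for instance, $e^{-c/(H_{k}\varsigma_{k}^{2})}Ei\!\bigl(c(1+H_{k}\varsigma_{k}^{2}A)/(H_{k}\varsigma_{k}^{2})\bigr)\sim H_{k}\varsigma_{k}^{2}e^{cA}/\bigl(c(1+H_{k}\varsigma_{k}^{2}A)\bigr)\to 0$, and analogously for the matching $Ei(c/(H_{k}\varsigma_{k}^{2}))$ terms; the argument works for both signs of $c$. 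Since the prefactor $\ln\!\bigl((1+H_{\rm E}\varsigma_{\rm E}^{2}A)/(1+H_{\rm B}\varsigma_{\rm B}^{2}A)\bigr)e^{cA}$ also tends to $0$, the entire bracket vanishes. The last step is to collect the survivors of $C_{2}$ --- the limit of $f_{\rm low}$, the term $-c\xi P$ (rewritten as $\tfrac{1}{2}\ln e^{-2c\xi P}$), the log containing $(e^{cA}-1)^{2}/c^{2}$, and the final log, in which $H_{\rm E}\xi P\varsigma_{\rm E}^{2}\sigma_{\rm E}^{2}\to 0$ --- and merge them into a single logarithm, reproducing the second branch of (\ref{eq30}).
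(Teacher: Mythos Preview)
Your proposal is correct and follows the same approach the paper has in mind: the paper merely states that ``the proof of \emph{Corollary~\ref{cor4}} is straightforward'' from \emph{Theorem~\ref{the3}}, and you have carefully carried out the term-by-term limit that this claim entails. Your handling of the $Ei$ terms via the large-argument asymptotic is sound; an even quicker route is to recall from (\ref{eq82}) that the entire bracket equals $E_X\bigl[\ln\bigl((1+H_{\rm E}\varsigma_{\rm E}^2 X)/(1+H_{\rm B}\varsigma_{\rm B}^2 X)\bigr)\bigr]$, which visibly tends to $0$.
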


\begin{proof}
According to \emph{Theorem \ref{the3}}, the proof of \emph{Corollary \ref{cor4}} is straightforward.
\end{proof}

\begin{remark}
Under constraints (\ref{eq2}), (\ref{eq3}) and (\ref{eq4}),
a secrecy capacity lower bound (20) for VLC having only the signal-independent noise was derived \cite{BIB28}.
As can be seen, the derived lower bound (\ref{eq30}) in \emph{Corollary \ref{cor4}} is smaller than the lower bound (20) in \cite{BIB28}.
This is because the principle of OELIE is employed in \emph{Corollary \ref{cor4}}, while the EPI is used in \cite{BIB28}.
This indicates that the EPI approach is more efficient to analyze the secrecy capacity lower bound for VLC without signal-dependent noise.
\label{rem4}
\end{remark}

\subsection{Upper Bound on Secrecy Capacity}
\label{section4_2}
When adding an additional peak optical intensity constraint, the secrecy capacity's dual expression (\ref{eq19}) also holds.
According to (\ref{eq19}) and \emph{Theorem \ref{the2}}, we have the following theorem.

\begin{theorem}
For VLC having constraints (\ref{eq2}), (\ref{eq3}) and (\ref{eq4}), we derive an upper bound on secrecy capacity as
\begin{equation}
{C'_{{\rm{Upp}}}} = \frac{1}{2}\ln \left[ {\frac{{{H_{\rm{E}}}\varsigma _{\rm{E}}^2\sigma _{\rm{E}}^2\left( {{H_{\rm{B}}}A + \varsigma _{\rm{B}}^2\sigma _{\rm{B}}^2} \right)}}{{\varsigma _{\rm{B}}^2\sigma _{\rm{B}}^2\left( {H_{\rm{E}}^2A + \frac{{H_{\rm{E}}^2}}{{{H_{\rm{B}}}}}\varsigma _{\rm{B}}^2\sigma _{\rm{B}}^2 + M} \right)}}} \right],
\label{eq31}
\end{equation}
where $M = H_{\rm{E}}^2\varsigma _{\rm{B}}^2\sigma _{\rm{B}}^2/{H_{\rm{B}}} + {H_{\rm{E}}}\varsigma _{\rm{E}}^2\sigma _{\rm{E}}^2$.

\label{the5}
\end{theorem}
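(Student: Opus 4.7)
The plan is to exploit the dual expression of secrecy capacity in (\ref{eq19}), which remains valid after imposing the additional peak constraint (\ref{eq3}). For any choice of conditional PDF $g_{Y_{\rm B}|Y_{\rm E}}$, one obtains
\begin{equation*}
C_{\rm s} \le E_{X^* Y_{\rm E}}\!\left[D\!\left(f_{Y_{\rm B}|X Y_{\rm E}}(y_{\rm B}|X,Y_{\rm E}) \,\|\, g_{Y_{\rm B}|Y_{\rm E}}(y_{\rm B}|Y_{\rm E})\right)\right],
\end{equation*}
where $X^*$ is the capacity-achieving input now satisfying (\ref{eq2})--(\ref{eq4}). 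The task is to select a tractable $g$ that produces a bound exploiting the peak amplitude $A$.

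First, since the noises at Bob and Eve are mutually independent, $Y_{\rm B}$ and $Y_{\rm E}$ are conditionally independent given $X$, so $f_{Y_{\rm B}|X Y_{\rm E}}=f_{Y_{\rm B}|X}$ coincides with the Gaussian density in (\ref{eq8}). Following the same philosophy as the proof of \emph{Theorem \ref{the2}}, I would take $g_{Y_{\rm B}|Y_{\rm E}}(y_{\rm B}|y_{\rm E})$ to be Gaussian with mean $(H_{\rm B}/H_{\rm E})y_{\rm E}$, a choice that cancels the $X$-linear drift of $Y_{\rm B}$ that can be inferred from $Y_{\rm E}$, and with a variance parameter calibrated to the interval $[0,A]$ rather than only to $\xi P$.

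Second, the KL divergence between the two Gaussians has a closed form. Taking the expectation over $(X,Y_{\rm E})$ via (\ref{eq1}), the squared-mean-difference term reduces, by the vanishing of odd-order noise moments, to $(H_{\rm B}^2/H_{\rm E})\varsigma_{\rm E}^2\sigma_{\rm E}^2\,E_X(X)+(H_{\rm B}/H_{\rm E})^2\sigma_{\rm E}^2$, while the remaining pieces involve the variance parameter of $g$ and $E_X[\ln(1+H_{\rm B}\varsigma_{\rm B}^2X)]$. I would then eliminate the $\xi P$-dependence using the peak-constraint consequence $E_X(X)=\xi P\le A$ (from (\ref{eq3})--(\ref{eq4})) and the non-negativity bound $E_X[\ln(1+H_{\rm B}\varsigma_{\rm B}^2X)]\ge 0$ (from (\ref{eq2})). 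Minimizing the resulting bound over the free variance parameter of $g_{Y_{\rm B}|Y_{\rm E}}$, and rearranging in terms of $M=H_{\rm E}^2\varsigma_{\rm B}^2\sigma_{\rm B}^2/H_{\rm B}+H_{\rm E}\varsigma_{\rm E}^2\sigma_{\rm E}^2$, should collapse the expression into the compact ratio form (\ref{eq31}).

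The principal obstacle is picking the variance structure of $g_{Y_{\rm B}|Y_{\rm E}}$ so that the optimization produces exactly the factored form of (\ref{eq31}); naive Gaussian choices with constant variance yield strictly weaker bounds, and a variance that is an affine function of $y_{\rm E}$ mimicking the signal-dependent variance of $Y_{\rm B}$ may be required. A secondary subtlety is that any concavity-based sharpening of the bound on $E_X[\ln(1+H_{\rm B}\varsigma_{\rm B}^2X)]$ reintroduces a $\xi P$-dependence through Jensen's inequality, so the proof must stop at the $\ge 0$ lower bound in order to preserve the clean $A$-only form that appears in (\ref{eq31}).
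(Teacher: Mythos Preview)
Your plan diverges from the paper's proof in three places, and the third is fatal for recovering (\ref{eq31}).

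First, the paper does \emph{not} fix $\mu=H_{\rm B}/H_{\rm E}$. It takes $g_{Y_{\rm B}|Y_{\rm E}}(y_{\rm B}|y_{\rm E})=(2\pi s^2)^{-1/2}\exp\{-(y_{\rm B}-\mu y_{\rm E})^2/(2s^2)\}$ with both $\mu$ and $s^2$ free, carries them through, and only minimizes over $\mu$ at the very end (the optimal $\mu$ in the paper is \emph{not} $H_{\rm B}/H_{\rm E}$). Second, the peak constraint enters the paper's proof pointwise, through $H_{\rm B}^2X^2\le H_{\rm B}^2AX$ (valid since $0\le X\le A$), not through the coarser moment bound $E_X(X)\le A$ that you propose. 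This is what produces the linear-in-$A$ numerator in (\ref{eq31}).

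Third, and most importantly, your idea to ``stop at the $\ge 0$ lower bound'' for $E_X[\ln(1+H_{\rm B}\varsigma_{\rm B}^2X)]$ cannot work. That term carries a negative sign in the KL upper bound, so replacing it by $0$ gives a valid but much looser inequality; in particular, the remaining numerator then scales like $H_{\rm B}\varsigma_{\rm B}^2\sigma_{\rm B}^2A+(H_{\rm B}^2/H_{\rm E})\varsigma_{\rm E}^2\sigma_{\rm E}^2A$ while the denominator stays $\sigma_{\rm B}^2$, so your bound diverges like $\tfrac{1}{2}\ln A$ as $A\to\infty$. By contrast, (\ref{eq31}) tends to the finite limit $\tfrac{1}{2}\ln(H_{\rm B}\varsigma_{\rm E}^2\sigma_{\rm E}^2/(H_{\rm E}\varsigma_{\rm B}^2\sigma_{\rm B}^2))$. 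The paper avoids this blow-up precisely because its $I_1$ term (coming from the chain-rule decomposition $-\mathcal H(Y_{\rm B}|X^*,Y_{\rm E})=-\mathcal H(Y_{\rm B}|X^*)-\mathcal H(Y_{\rm E}|X^*,Y_{\rm B})+\mathcal H(Y_{\rm E}|X^*)$) contains the extra pieces $\ln(MX+N)$ and $\ln\frac{1+H_{\rm B}\varsigma_{\rm B}^2X}{1+H_{\rm E}\varsigma_{\rm E}^2X}$; these combine with $I_2$ into ratios that stay bounded in $X$. The paper then invokes the ``optimal input escapes to infinity'' property to replace $E_{X^*}[\cdot]$ of those ratios by their $X\to\infty$ limits, after which the minimization over $\mu$ gives exactly (\ref{eq31}). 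Your outline has neither the $MX+N$ denominator nor the escape-to-infinity step, so it cannot collapse to the stated form.
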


\begin{proof}
See Appendix \ref{Appendix_F}.
\end{proof}

\begin{corollary}
In \emph{Theorem \ref{the5}}, when we ignore the signal-dependent noise (i.e., ${\varsigma _{\rm{B}}} \to 0$ and ${\varsigma _{\rm{E}}} \to 0$), the secrecy capacity upper bound (\ref{eq31}) reduces to
\begin{equation}
\mathop {\lim }\limits_{\scriptstyle{\varsigma _{\rm{B}}} \to 0\hfill\atop
\scriptstyle{\varsigma _{\rm{E}}} \to 0\hfill} {C'_{{\rm{Upp}}}} = \frac{1}{2}\ln \left[ {\frac{{\left( {H_{\rm{B}}^2A\xi P + \sigma _{\rm{B}}^2} \right)\sigma _{\rm{E}}^2}}{{\left( {H_{\rm{E}}^2A\xi P + 2\frac{{H_{\rm{E}}^2}}{{H_{\rm{B}}^2}}\sigma _{\rm{B}}^2 + \sigma _{\rm{E}}^2} \right)\sigma _{\rm{B}}^2}}} \right].
\label{eq32}
\end{equation}

\label{cor5}
\end{corollary}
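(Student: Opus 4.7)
The plan is to obtain (\ref{eq32}) by specializing the upper bound of Theorem \ref{the5} to the signal-independent noise regime $\varsigma_{\rm B}\to 0$, $\varsigma_{\rm E}\to 0$. First I would substitute the explicit expression $M = H_{\rm E}^2\varsigma_{\rm B}^2\sigma_{\rm B}^2/H_{\rm B} + H_{\rm E}\varsigma_{\rm E}^2\sigma_{\rm E}^2$ into (\ref{eq31}) and rewrite the argument of the logarithm as a ratio of two polynomials in the small quantities $\varsigma_{\rm B}^2\sigma_{\rm B}^2$ and $\varsigma_{\rm E}^2\sigma_{\rm E}^2$. Because a literal substitution $\varsigma_{\rm B}=\varsigma_{\rm E}=0$ produces an indeterminate $0/0$ form, the limit cannot be read off by inspection and the vanishing $\varsigma^2$ factors must be tracked carefully.

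Second, I would re-run the last portion of the proof of Theorem \ref{the5} (Appendix \ref{Appendix_F}) under the simplified channel model $Y_k = H_k X + Z_{k,0}$ with $Z_{k,0}\sim\mathcal{N}(0,\sigma_k^2)$. The dual expression (\ref{eq19}) still applies; the difference is that $f_{Y_{\rm B}|X Y_{\rm E}}(y_{\rm B}|X,Y_{\rm E})$ is now an exact Gaussian whose mean is affine in $(X,Y_{\rm E})$ and whose conditional variance depends only on $\sigma_{\rm B}^2,\sigma_{\rm E}^2,H_{\rm B},H_{\rm E}$. A Gaussian auxiliary $g_{Y_{\rm B}|Y_{\rm E}}(y_{\rm B}|Y_{\rm E})$ of the same affine-mean form but with a free variance then makes the inner KL divergence in (\ref{eq19}) closed-form.

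Third, taking the expectation of that KL divergence with respect to $(X^*,Y_{\rm E})$ introduces the moment $E_X(X)=\xi P$ through (\ref{eq4}) and a second moment $E_X(X^2)$, which under the peak constraint (\ref{eq3}) is bounded by $E_X(X^2)\le A\xi P$. This is exactly how the product $A\xi P$ enters the numerator and denominator of (\ref{eq32}), even though no $\xi P$ factor is manifest in (\ref{eq31}). Collecting the logarithmic terms and simplifying then reproduces (\ref{eq32}).

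The main obstacle will be precisely this reconciliation between the signal-dependent bound (\ref{eq31}), in which the first-moment constraint $\xi P$ is absorbed into the $\varsigma_k^2\sigma_k^2$ noise-variance terms, and its signal-independent counterpart (\ref{eq32}), in which $\xi P$ must reappear through a direct second-moment bound. Once the auxiliary $g_{Y_{\rm B}|Y_{\rm E}}$ is chosen so that its parameters remain finite and positive in the limit $\varsigma_{\rm B},\varsigma_{\rm E}\to 0$, the remaining steps are routine Gaussian KL-divergence algebra.
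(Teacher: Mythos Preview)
Your diagnosis is correct and more careful than the paper itself. The paper states Corollary~\ref{cor5} with no proof at all (not even the ``straightforward, omitted'' tag used for the analogous Corollaries~\ref{cor1}, \ref{cor2}, \ref{cor4}), implicitly suggesting that (\ref{eq32}) drops out of (\ref{eq31}) by sending $\varsigma_{\rm B},\varsigma_{\rm E}\to 0$. As you observe, that cannot work literally: the ratio in (\ref{eq31}) is $0/0$ in this limit and, more importantly, (\ref{eq31}) contains no $\xi P$ anywhere, so no amount of limit-taking on (\ref{eq31}) alone can recover (\ref{eq32}).

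Your proposed fix---return to the Appendix~\ref{Appendix_F} argument and specialize before the ``escape to infinity'' step (\ref{eq100})--(\ref{eq101})---is exactly right. With $\varsigma_{\rm B}=\varsigma_{\rm E}=0$ the ratio $\frac{1+H_{\rm B}X\varsigma_{\rm B}^2}{1+H_{\rm E}X\varsigma_{\rm E}^2}$ collapses to $1$, while the expression inside the second expectation in (\ref{eq99}) becomes \emph{linear} in $X$, namely $(H_{\rm B}^2AX+\sigma_{\rm B}^2)/N$ with $N=H_{\rm E}^2\sigma_{\rm B}^2/H_{\rm B}^2+\sigma_{\rm E}^2$. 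Hence no asymptotic argument is needed: $E_{X^*}(X)=\xi P$ from (\ref{eq4}) gives the expectation exactly, and then optimizing over $\mu$ yields (\ref{eq32}). Your remark that $A\xi P$ enters through the second-moment bound $E_X(X^2)\le A\,E_X(X)=A\xi P$ (which is precisely the step $X^2\le AX$ used to pass from the second to the third line of (\ref{eq95})) is also on point. One small clarification: in the signal-dependent derivation the mean constraint $\xi P$ is not ``absorbed into the $\varsigma_k^2\sigma_k^2$ terms''---it is simply never invoked, being replaced by the escape-to-infinity heuristic; your specialization restores its role.
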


\begin{remark}
Under constraints (\ref{eq2}), (\ref{eq3}) and (\ref{eq4}),
a secrecy capacity upper bound (26) for VLC without signal-dependent noise was derived \cite{BIB28}.
As can be seen, the derived upper bound (\ref{eq32}) in \emph{Corollary \ref{cor5}} is the same as that (26) in \cite{BIB28}.
This also indicates that the result of \cite{BIB28} is just a special case of \emph{Theorem \ref{the5}} in this paper.
\label{rem5}
\end{remark}

\subsection{Asymptotic Behavior Analysis}
\label{section4_3}
According to \emph{Theorem \ref{the3}} and \emph{Theorem \ref{the5}},
when the peak optical intensity of the LED $A$ tends to infinity,
we can get the asymptotic secrecy capacity bounds.

\begin{corollary}
For VLC having constraints (\ref{eq2}), (\ref{eq3}) and (\ref{eq4}),
we derive asymptotic lower and upper bounds on secrecy capacity as
\begin{eqnarray}
\left\{ \begin{array}{l}
\mathop {\lim }\limits_{A \to \infty }\!\! {C'_{{\rm{Low}}}} \!=\!\! \left\{ \begin{array}{l}
\frac{1}{2}\ln \left( {\frac{{6{H_{\rm{B}}}\varsigma _{\rm{E}}^2\sigma _{\rm{E}}^2}}{{\pi e{H_{\rm{E}}}\varsigma _{\rm{B}}^2\sigma _{\rm{B}}^2}}} \right),\,{\rm{if}}\,\alpha  = 0.5\\
\!\!\mathop {\lim }\limits_{A \to \infty } \!\frac{1}{2}\!\ln\!\!\! \left[ {\frac{{{H_{\rm{B}}}{H_{\rm{E}}}\varsigma _{\rm{E}}^2\sigma _{\rm{E}}^2{{({e^{cA}} - 1)}^2}}}{{2\pi e{c^2}\varsigma _{\rm{B}}^2\sigma _{\rm{B}}^2{e^{2c\xi P}}\left[ {H_{\rm{E}}^2\left( {\frac{{A\left( {cA - 2} \right)}}{{c\left( {1 - {e^{ - cA}}} \right)}} + \frac{2}{{{c^2}}} - {\xi ^2}{P^2}} \right) + {H_{\rm{E}}}\xi P\varsigma _{\rm{E}}^2\sigma _{\rm{E}}^2 + \sigma _{\rm{E}}^2} \right]}}} \!\!\right]\!,\;\,{\rm{if}}\,\; \alpha  \ne 0.5
\end{array} \right.\\
\mathop {\lim }\limits_{A \to \infty } {C'_{{\rm{Upp}}}} = \frac{1}{2}\ln \left( {\frac{{{H_{\rm{B}}}\varsigma _{\rm{E}}^2\sigma _{\rm{E}}^2}}{{{H_{\rm{E}}}\varsigma _{\rm{B}}^2\sigma _{\rm{B}}^2}}} \right).
\end{array} \right.
\label{eq33}
\end{eqnarray}
\label{cor6}
\end{corollary}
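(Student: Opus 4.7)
The plan is to derive the two asymptotic limits separately, starting from the closed-form bounds in \emph{Theorem \ref{the3}} and \emph{Theorem \ref{the5}}, and perform a dominant-term analysis as $A \to \infty$ with the average-to-peak ratio $\alpha = \xi P/A$ held fixed (so that $\xi P$ grows linearly with $A$).

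For the upper bound, I would work directly with (\ref{eq31}). Both the numerator and denominator inside the logarithm are linear in $A$ to leading order, with dominant coefficients $H_{\rm E}\varsigma_{\rm E}^2\sigma_{\rm E}^2 H_{\rm B}$ and $\varsigma_{\rm B}^2\sigma_{\rm B}^2 H_{\rm E}^2$ respectively. Dividing numerator and denominator by $A$ and letting $A \to \infty$ kills every $O(1)$ constant term and yields the claimed limit $\frac{1}{2}\ln[H_{\rm B}\varsigma_{\rm E}^2\sigma_{\rm E}^2/(H_{\rm E}\varsigma_{\rm B}^2\sigma_{\rm B}^2)]$. This step is essentially routine.

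For the lower bound with $\alpha = 0.5$, I would start from $C_1$ in (\ref{eq27}). The bulk of the work is analyzing $f_{\rm low}(H_{\rm B},\xi,P)$ from (\ref{eq12}) under $\xi P = A/2 \to \infty$. The first logarithmic piece in (\ref{eq12}) clearly tends to $\ln H_{\rm B}$, but the remaining two terms individually diverge linearly in $A$; I would use the expansion $\sqrt{u(u+2v)} = u + v - v^2/(2u) + O(u^{-2})$ with $u = H_{\rm B}\xi P$ and $v = \varsigma_{\rm B}^2\sigma_{\rm B}^2$ to show that the linear-in-$A$ divergences cancel exactly, leaving $f_{\rm low} \to \ln H_{\rm B}$. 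The other ingredients of $C_1$ are simpler: the ratio inside $\frac{1}{2}\ln[6A^2\sigma_{\rm E}^2/\cdots]$ reduces to $\frac{1}{2}\ln[6\sigma_{\rm E}^2/(\pi e\sigma_{\rm B}^2 H_{\rm E}^2)]$, the log-ratio collapses to $\frac{1}{2}\ln[H_{\rm E}\varsigma_{\rm E}^2/(H_{\rm B}\varsigma_{\rm B}^2)]$, and the two terms of the form $\ln(1+\cdot)/(A\cdot)$ in $C_1$ vanish. Summing recovers $\frac{1}{2}\ln[6H_{\rm B}\varsigma_{\rm E}^2\sigma_{\rm E}^2/(\pi e H_{\rm E}\varsigma_{\rm B}^2\sigma_{\rm B}^2)]$.

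For $\alpha \ne 0.5$, the implicit equation (\ref{eq29}) ties $cA$ to $\alpha$ alone, so holding $\alpha$ fixed makes $cA$ a constant and forces $c = O(1/A)$. Substituting $\xi P = \alpha A$ directly into $C_2$ of (\ref{eq28}) reproduces exactly the formal-limit expression claimed in the corollary; no further closed-form simplification is attempted, since the $\ln A$ contributions from the $(e^{cA}-1)^2/c^2$ term and from the bracket in the last log are expected to cancel only in combination and leave an $\alpha$-dependent constant. The main obstacle throughout is the delicate cancellation inside $f_{\rm low}$ for the $\alpha = 0.5$ case, which demands tracking subleading terms in the square-root expansion rather than discarding them prematurely.
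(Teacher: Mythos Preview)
Your treatment of the upper bound and of the $\alpha=0.5$ lower bound is essentially the paper's argument (the paper simply cites the limit $f_{\rm low}\to\ln H_{\rm B}$ from \cite{BIB34} rather than carrying out your square-root expansion, but the content is the same).

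There is, however, a real gap in your $\alpha\ne 0.5$ case. You claim that ``substituting $\xi P=\alpha A$ directly into $C_2$ of (\ref{eq28}) reproduces exactly the formal-limit expression claimed in the corollary.'' It does not. The expression in (\ref{eq33}) contains neither $f_{\rm low}$ nor any exponential-integral terms: those pieces have already been evaluated and folded into the prefactor $H_{\rm B}H_{\rm E}\varsigma_{\rm E}^2/(\varsigma_{\rm B}^2)$ inside the logarithm. To pass from $C_2$ to the stated limit you must show two things you did not address: (i) $f_{\rm low}(H_{\rm B},\xi,P)\to\ln H_{\rm B}$, which follows from your own $\alpha=0.5$ analysis since $f_{\rm low}$ depends only on $\xi P\to\infty$; and (ii) the entire bracketed $Ei$-expression in (\ref{eq28}) tends to $\tfrac{1}{2}\ln\!\big(H_{\rm E}\varsigma_{\rm E}^2/(H_{\rm B}\varsigma_{\rm B}^2)\big)$. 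Step (ii) is not a direct substitution and is not obvious from the $Ei$ form with $c\to 0$ and $cA$ fixed.

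The paper's route for (ii) is to recognize the $Ei$-bracket as $\tfrac{1}{2}E_X\!\big[\ln\big((1+H_{\rm E}\varsigma_{\rm E}^2 X)/(1+H_{\rm B}\varsigma_{\rm B}^2 X)\big)\big]$ (this is exactly (\ref{eq82})) and then invoke the ``input distribution escapes to infinity'' principle: as $A\to\infty$ the mass of $X$ drifts to infinity, so the integrand is asymptotically the constant $\ln(H_{\rm E}\varsigma_{\rm E}^2/(H_{\rm B}\varsigma_{\rm B}^2))$ and the expectation collapses to that constant. Once (i) and (ii) are combined with the remaining three terms of $C_2$, everything packages into the single logarithm displayed in (\ref{eq33}). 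Your proposal needs this step to be complete.
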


\begin{proof}
See Appendix \ref{Appendix_G}.
\end{proof}

\begin{remark}
In \emph{Corollary \ref{cor6}}, when $\alpha  = 0.5$, both the asymptotic lower and upper bounds do not scale with $A$ but converge to real and positive constants.
Specifically, the gap between the asymptotic upper bound and the asymptotic lower bound is given by
\begin{eqnarray}
{C'_{{\rm{gap}}}} &=& \frac{1}{2}\ln \left( {\frac{{{H_{\rm{B}}}\varsigma _{\rm{E}}^2\sigma _{\rm{E}}^2}}{{{H_{\rm{E}}}\varsigma _{\rm{B}}^2\sigma _{\rm{B}}^2}}} \right) - \frac{1}{2}\ln \left( {\frac{{6{H_{\rm{B}}}\varsigma _{\rm{E}}^2\sigma _{\rm{E}}^2}}{{\pi e{H_{\rm{E}}}\varsigma _{\rm{B}}^2\sigma _{\rm{B}}^2}}} \right) \nonumber\\
 &=& \frac{1}{2}\ln \left( {\frac{{\pi e}}{6}} \right) \approx 0.1765\,{\rm{nat/transmission}}{\rm{.}}
\label{eq34}
\end{eqnarray}
This indicates that the asymptotic performance gap is small.

When $\alpha  \ne 0.5$, it is challenging to obtain an exact asymptotic lower bound on the secrecy capacity,
and thus the performance gap between the upper and lower bounds on secrecy capacity can only be evaluated by using numerical results in Section \ref{section5}.
\label{rem6}
\end{remark}

\section{Numerical Results}
\label{section5}
In this section, some classic numerical results will be provided.
The accuracy of the derived secrecy capacity bounds will also be verified.
In the simulation, we assume that $\sigma _{\rm{B}}^2 = \sigma _{\rm{E}}^2 = 1$ and $\varsigma _{\rm{B}}^2 = \varsigma _{\rm{E}}^2 = 1.5$.

\subsection{Results of VLC Having Non-negativity and Average Optical Intensity Constraints}
To verify the accuracy of the lower bound (\ref{eq14}) and the upper bound (\ref{eq20}), we provide Fig. \ref{fig2}-Fig. \ref{fig5} as well as Table \ref{tab1} in this subsection.

Fig. \ref{fig2} shows the secrecy capacity bounds versus the nominal optical intensity $P$ with different $H_{\rm B}/H_{\rm E}$ when $\xi  = 0.3$.
In the low optical intensity regime, the secrecy capacity bounds approach zero, and the PLS performance is bad.
However, in the high optical intensity regime, the secrecy capacity bounds increase rapidly first and then tend to stable values with $P$.
Moreover, for a fixed $P$, the secrecy capacity bounds increase with $H_{\rm B}/H_{\rm E}$.
This indicates that the larger the difference between ${H_{\rm{B}}}$ and ${H_{\rm{E}}}$ is, the PLS performance becomes better.
To further quantitative the differences between the upper and lower bounds on secrecy capacity in the high optical intensity regime, we show the performance gaps for different $H_{\rm B}/H_{\rm E}$ scenarios in Table \ref{tab1}.
It can be observed that all performance gaps for different $H_{\rm B}/H_{\rm E}$ scenarios are about 0.4674 nat/transmission, which indicates that the asymptotic secrecy capacity upper and lower bounds are tight.
This conclusion coincides with that in \emph{Remark 3}.

\begin{figure}
\centering
\rotatebox[origin=c]{-90}{\includegraphics[width=9cm]{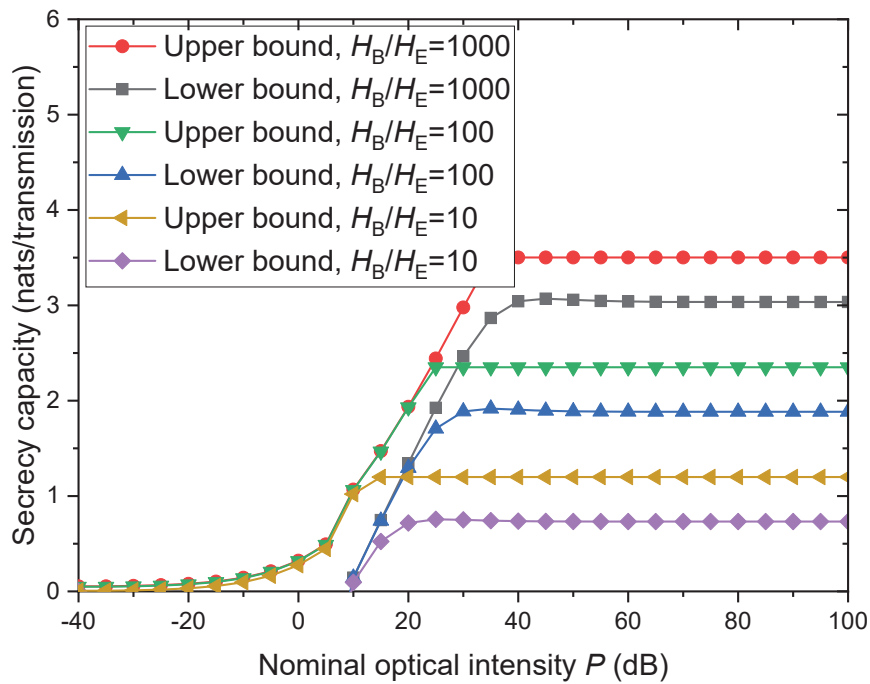}}
\caption{Secrecy capacity bounds versus nominal optical intensity $P$ with different ${H_{\rm{B}}}/{H_{\rm{E}}}$ when $\xi  = 0.3$}
\label{fig2}
\end{figure}

\begin{table}[!h]
\renewcommand{\arraystretch}{1}
\caption{Performance gaps between (\ref{eq14}) and (\ref{eq20}) at high optical intensity in Fig. \ref{fig2}}
\label{tab1}
\centering
\begin{tabular}{|c|c|c|c|}
\hline\hline
\textbf{P (dB)} & \bm{${H_{\rm{B}}}/{H_{\rm{E}}} = 1000$} & \bm{${H_{\rm{B}}}/{H_{\rm{E}}} = 100$} & \bm{${H_{\rm{B}}}/{H_{\rm{E}}} = 10$}\\
\hline
85 & 0.4673 & 0.4674 & 0.4674 \\
\hline
90 & 0.4673 & 0.4674 & 0.4674 \\
\hline
95 & 0.4674 & 0.4674 & 0.4674 \\
\hline
100& 0.4674 & 0.4674 & 0.4674 \\
\hline\hline
\end{tabular}
\end{table}

\begin{figure}
\centering
\rotatebox[origin=c]{-90}{\includegraphics[width=9cm]{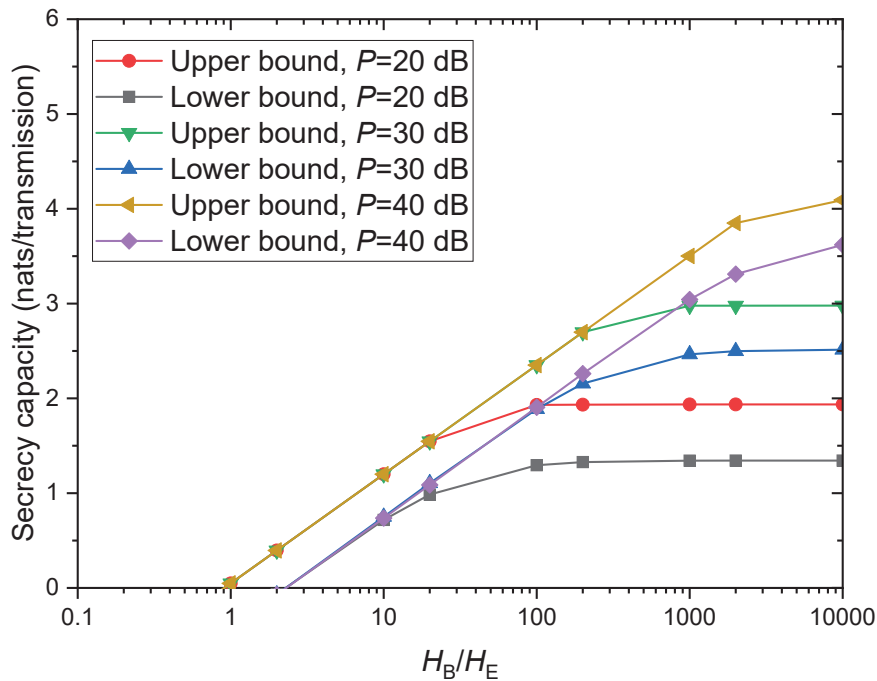}}
\caption{Secrecy capacity bounds versus ${H_{\rm{B}}}/{H_{\rm{E}}}$\emph{} with different $P$ when $\xi  = 0.3$ \label{fig3}}
\end{figure}

Fig. \ref{fig3} plots the secrecy capacity bounds versus $H_{\rm B}/H_{\rm E}$ with different $P$ when $\xi  = 0.3$.
As can be seen, when $H_{\rm B}/H_{\rm E} \le 1$,
the main channel is worse than the eavesdropping channel, the secrecy capacities are zero.
When $H_{\rm B}/H_{\rm E} > 1$,
the main channel outperforms the eavesdropping channel, and the secrecy capacity bounds increase first and then tend to stable values with the increase of $H_{\rm B}/H_{\rm E}$.
For small $H_{\rm B}/H_{\rm E}$ values (for example, when $1 \le H_{\rm B}/H_{\rm E} \le 10$),
all lower bounds (or all upper bounds) coincide with each other.
In this case, the secrecy capacity cannot be enhanced by increasing the nominal optical intensity.
However, for large $H_{\rm B}/H_{\rm E}$ values,
the secrecy performance can be improved by enlarging the nominal optical intensity.
Moreover, for large $H_{\rm B}/H_{\rm E}$ values (for examples, $H_{\rm B}/H_{\rm E} > 100$ when $P = 20\,{\rm{dB}}$; $H_{\rm B}/H_{\rm E} > 2000$ when $P = 30\,{\rm{dB}}$),
Eve can almost not able to eavesdrop information due to the small channel gain of the eavesdropping channel,
and thus the stable secrecy capacity bounds in this case can be approximated as the channel capacity bounds between Alice and Bob.

Fig. \ref{fig4} plots the secrecy capacity bounds versus the dimming target $\xi $ with different $P$ when $H_{\rm B}/H_{\rm E} = 1000$.
As can be seen, all secrecy capacity bounds are monotonically non-decreasing functions with respect to $\xi $.
At small optical intensity (i.e., $P = 20\,{\rm{dB}}$), the secrecy capacity bounds increase rapidly as the increase of $\xi $.
However, in large optical intensity regime (i.e., $P = 40\,{\rm{dB}}$ and $P = 60\,{\rm{dB}}$), the secrecy capacity bounds increase first and then tends to constant values.
This indicates that large dimming target has a strong impact on PLS performance when $P$ is small, but has a weak impact on PLS performance when $P$ is large.
Moreover, it can be observed that the capacity bounds when $P = 40\,{\rm{dB}}$ are almost the same as that when $P = 60\,{\rm{dB}}$.
This indicates that for fixed $H_{\rm B}/H_{\rm E}$ and $\xi $ values,
enlarging the optical intensity cannot enhance the secrecy performance of VLC without limitation at all.

\begin{figure}
\centering
\rotatebox[origin=c]{-90}{\includegraphics[width=9cm]{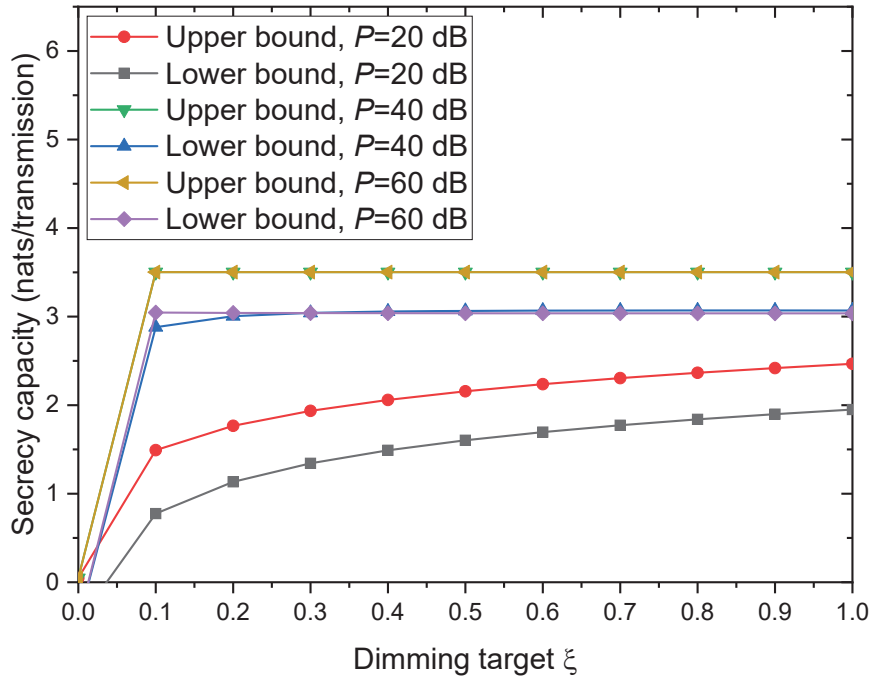}}
\caption{Secrecy capacity bounds versus dimming target $\xi $ with different $P$ when ${H_{\rm{B}}}/{H_{\rm{E}}} =1000$ \label{fig4}}
\end{figure}

Fig. \ref{fig5} shows the effects of the noise and the eavesdropper on capacity performance when $\xi  = 0.3$ and $H_{\rm B}/H_{\rm E} = 1000$.
In this figure, the secrecy capacity bounds for VLC having the signal-dependent noise (i.e., lower bound (\ref{eq14}) and upper bound (\ref{eq20}) in this paper),
the secrecy capacity bounds for VLC having the signal-independent noise (i.e., lower bound (8) and upper bound (16) in \cite{BIB28}),
and the channel capacity bounds for VLC having the signal-dependent noise (i.e., lower bound (33) and upper bound (35) in \cite{BIB16}),
the channel capacity bounds for VLC having the signal-independent noise (i.e., lower bound (37) and upper bound (28) in \cite{BIB04}) are provided.
It can be observed that all secrecy capacity bounds increase and then tend to stable values as the increase of $P$.
However, all channel capacity bounds monotonously increase with $P$. Moreover, for a fixed type of noise, the channel capacity bounds always larger than the secrecy capacity bounds.
This indicates that the existence of the eavesdropper in the system degrade the information transmission ability.
Furthermore, compared with the signal-independent noise, the signal-dependent noise will decrease the channel capacity and the secrecy capacity of the VLC system.

\begin{figure}
\centering
\rotatebox[origin=c]{-90}{\includegraphics[width=9cm]{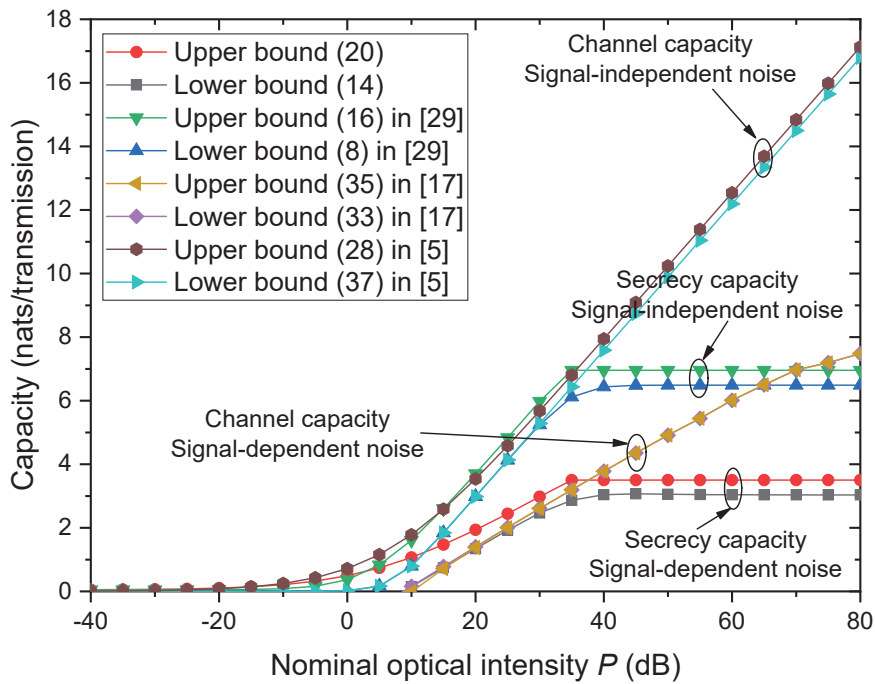}}
\caption{The effects of the noise and the eavesdropper on capacity performance when $\xi  = 0.3$ and ${H_{\rm{B}}}/{H_{\rm{E}}} =1000$. \label{fig5}}
\end{figure}

\subsection{Results of VLC Having Non-negativity, Average Optical Intensity and Peak Optical Intensity Constraints}
To verify \emph{Theorem \ref{the4}}, Fig. 6 shows the maxentropic input PDFs for different $\alpha$ values when $A = {10^6}\,{\rm{W}}$.
It can be seen that the curve of the PDF when $\alpha  = 0.5$ does not vary with the input $X$.
In other words, the value of the probability for each point is equal to $1/A$, which indicates that the input follows a uniform distribution in $[0,A]$, i.e., (\ref{eq75}).
When $\alpha  < 0.5$, the PDF is a monotonically decreasing function of $x \in (0,A]$.
However, when $\alpha  > 0.5$, the PDF becomes a monotonically increasing function of $x \in (0,A]$.
Moreover, it can be observed that the PDF curves with $\alpha $ and $1 - \alpha $ are symmetric with respect to $X = A/2$.
Therefore, all conclusions in \emph{Theorem \ref{the4}} hold.

\begin{figure}
\centering
\rotatebox[origin=c]{-90}{\includegraphics[width=9cm]{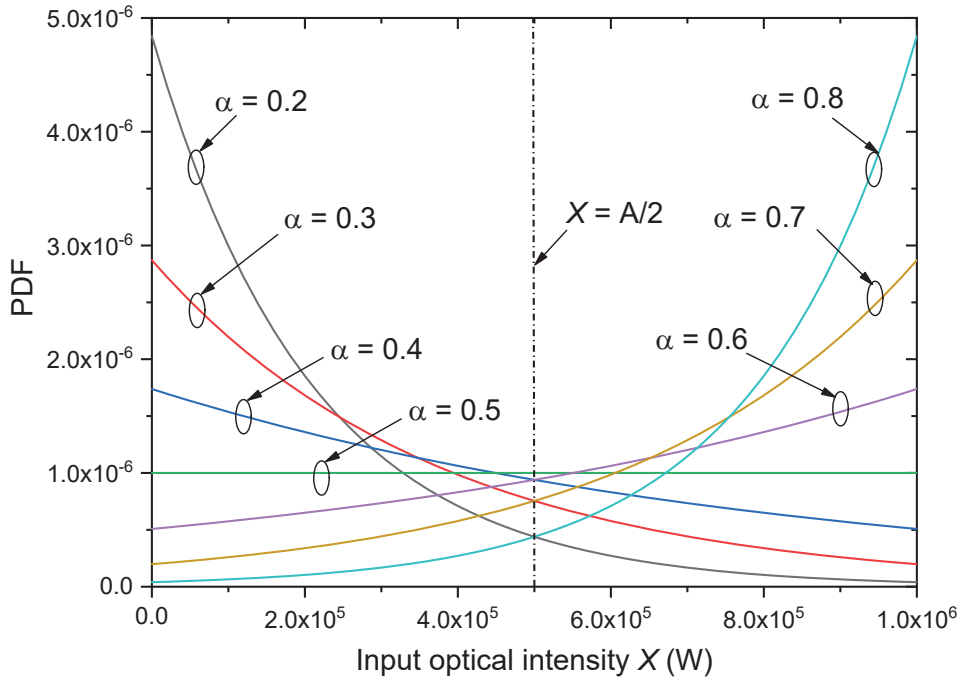}}
\caption{The maxentropic input PDFs for different $\alpha $ values when $A = {10^6}\,{\rm{W}}$ \label{fig6}}
\end{figure}

To verify the accuracy of the lower bound (\ref{eq26}) and the upper bound (\ref{eq31}), we plot Fig. \ref{fig7}-Fig. \ref{fig9} and Table \ref{tab2} in this subsection.

\begin{figure}
    \centering
    \subfigure[$\alpha  = 0.2$ (i.e., $\xi  = 0.3,\,A/P = 1.5$)]{
    \begin{minipage}{10cm}
    \centering
        \rotatebox[origin=c]{-90}{\includegraphics[width=8cm]{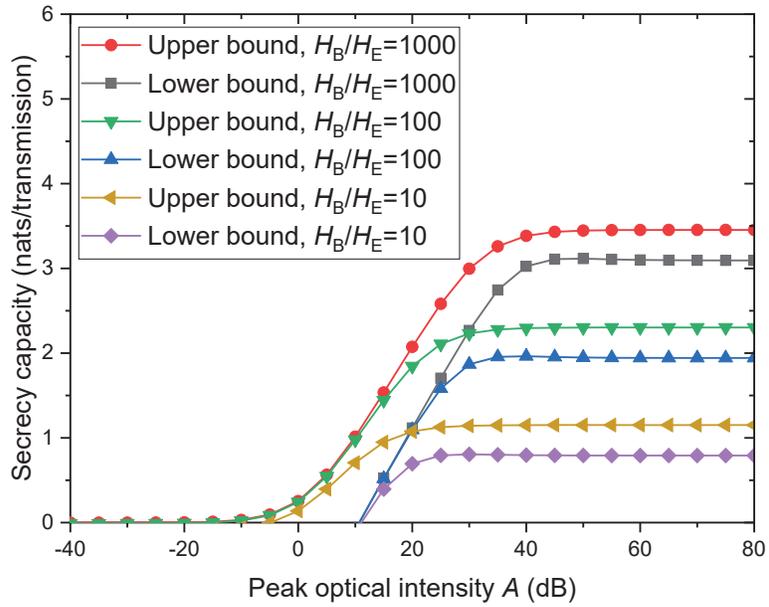}}
    \end{minipage}
    }
    \subfigure[$\alpha  = 0.5$ (i.e., $\xi  = 0.5,\,A/P = 1$)]{
    \begin{minipage}{10cm}
     \centering
        \rotatebox[origin=c]{-90}{\includegraphics[width=8cm]{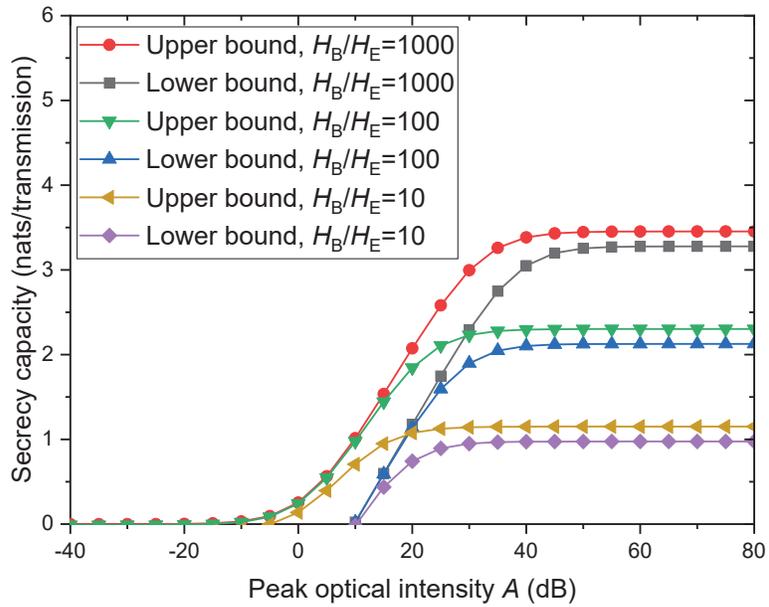}}
    \end{minipage}
    }
    \caption{Secrecy capacity bounds versus peak optical intensity $A$ with different ${H_{\rm{B}}}/{H_{\rm{E}}}$}
    \label{fig7}
\end{figure}

Fig. \ref{fig7}(a) and (b) plot the secrecy capacity bounds versus the peak optical intensity $A$ with different ${H_{\rm{B}}}/{H_{\rm{E}}}$ when $\alpha  = 0.2$ and 0.5.
Similar to Fig. \ref{fig2}, for a fixed $A$, the secrecy capacity bounds increase with ${H_{\rm{B}}}/{H_{\rm{E}}}$.
This indicates that the larger the difference between the main channel and the eavesdropping channel is, the better the secrecy capacity performance becomes.
Moreover, when $A$ is small (for example, $A \le  - 10\;{\rm{dB}}$), the secrecy capacity bounds are almost zero.
As the increase of $A$, the secrecy capacity bounds increase rapidly. When $A$ is large (for examples, $A \ge 30\,{\rm{dB}}$ when ${H_{\rm{B}}}/{H_{\rm{E}}} = 10$; $A \ge 40\,{\rm{dB}}$ when ${H_{\rm{B}}}/{H_{\rm{E}}} = 100$; and $A \ge 50\,{\rm{dB}}$ when ${H_{\rm{B}}}/{H_{\rm{E}}} = 1000$),
all secrecy capacity bounds tend to constants, which indicates that the secrecy capacity bounds are not affected by $A$ in large optical intensity regime.
Furthermore, Table \ref{tab2} shows the performance gaps between the upper and lower bounds on secrecy capacity in large optical intensity regime.
As it is seen, when $\alpha  = 0.2$, the performance gaps for different ${H_{\rm{B}}}/{H_{\rm{E}}}$ are about 0.36 nat/transmission, which is small.
When $\alpha  = 0.5$, the gaps are about 0.1765 nat/transmission, which agrees with \emph{Remark \ref{rem6}}.

\begin{table}[]
\caption{Performance gaps between the lower bound (\ref{eq26}) and the upper bound (\ref{eq31}) at high optical intensity in Fig. \ref{fig7}}
\label{tab2}
\centering
\begin{tabular}{|c|c|c|c|c|}
\hline\hline
\bm{$\alpha$}                & \textbf{A (dB)} & \bm{${H_{\rm{B}}}/{H_{\rm{E}}} = 1000$}   & \bm{${H_{\rm{B}}}/{H_{\rm{E}}} = 100$}    & \bm{${H_{\rm{B}}}/{H_{\rm{E}}} = 10$}     \\ \hline
\multirow{4}{*}{0.2} & 65     & 0.3574 & 0.3596 & 0.3600   \\ \cline{2-5}
                     & 70     & 0.359  & 0.3599 & 0.3600   \\ \cline{2-5}
                     & 75     & 0.3596 & 0.3599 & 0.3600   \\ \cline{2-5}
                     & 80     & 0.3599 & 0.3600 & 0.3600   \\ \hline
\multirow{4}{*}{0.5} & 65     & 0.1767 & 0.1765 & 0.1765 \\ \cline{2-5}
                     & 70     & 0.1765 & 0.1765 & 0.1765 \\ \cline{2-5}
                     & 75     & 0.1765 & 0.1765 & 0.1765 \\ \cline{2-5}
                     & 80     & 0.1765 & 0.1765 & 0.1765 \\ \hline\hline
\end{tabular}
\end{table}

Fig. \ref{fig8} plots the secrecy capacity bounds versus ${H_{\rm{B}}}/{H_{\rm{E}}}$ with different $A$ when $\xi  = 0.3$ and $A/P = 1.5$.
Similar to Fig. \ref{fig3}, we pay more attention to the results when ${H_{\rm{B}}}/{H_{\rm{E}}} > 1$.
When the main channel is better than the eavesdropping channel (i.e., ${H_{\rm{B}}}/{H_{\rm{E}}} > 1$),
all secrecy capacity bounds increase fast and then tend to stable values with ${H_{\rm{B}}}/{H_{\rm{E}}}$.
When ${H_{\rm{B}}}/{H_{\rm{E}}}$ is sufficiently large (for example, ${H_{\rm{B}}}/{H_{\rm{E}}} > 1000$ when $A = 20\,{\rm{dB}}$),
Eve can almost not able to eavesdrop information, the secrecy capacity bounds approximately reduce to the channel capacity bounds, which do not vary with ${H_{\rm{B}}}/{H_{\rm{E}}}$.
Moreover, in the large optical intensity regime, the secrecy performance improves with an increase of $A$.
These aforementioned conclusions are similar to that in Fig. \ref{fig3}.

\begin{figure}
\centering
\rotatebox[origin=c]{-90}{\includegraphics[width=9cm]{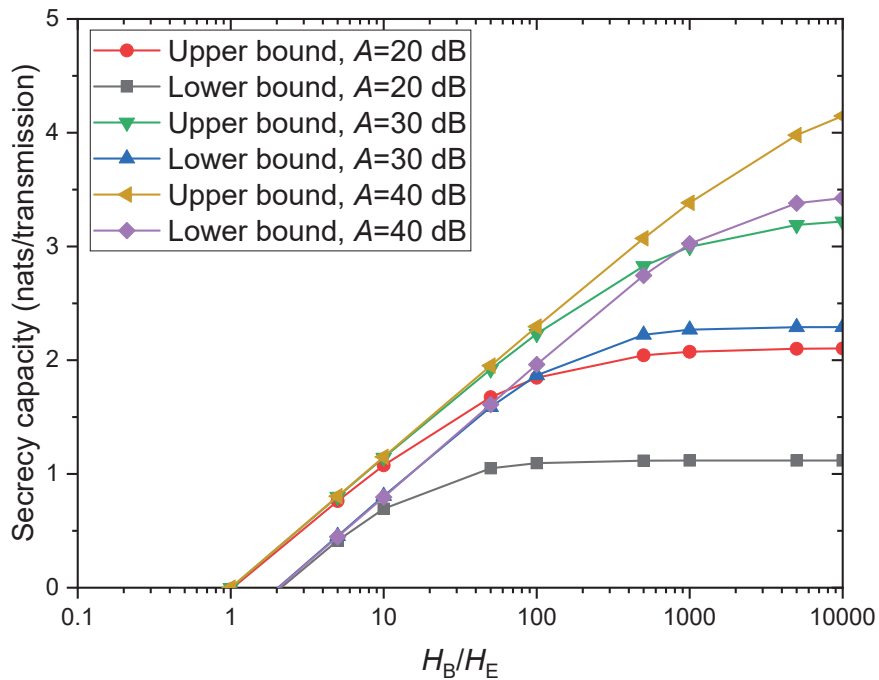}}
\caption{Secrecy capacity bounds versus ${H_{\rm{B}}}/{H_{\rm{E}}}$ with different $A$ when $\xi  = 0.3$ and $A/P = 1.5$}
\label{fig8}
\end{figure}

Fig. \ref{fig9} plots the secrecy capacity bounds versus $A$ for different VLC scenarios when $\xi  = 0.3$, $A/P = 1$, and ${H_{\rm{B}}}/{H_{\rm{E}}} = 1000$.
Here, the secrecy capacity bounds for VLC having the signal-dependent noise (i.e., lower bound (\ref{eq26}) and upper bound (\ref{eq31}) in this paper),
the secrecy capacity bounds for VLC having the signal-independent noise (i.e., lower bound (20) and upper bound (26) in \cite{BIB28}),
and the channel capacity bounds for VLC having the signal-dependent noise (i.e., lower bound (17) and upper bound (25) in \cite{BIB16}),
the channel capacity bounds for VLC having the signal-independent noise (i.e., lower bound (17) and upper bound (34) in \cite{BIB06}) are provided.
As can be observed, the trends of secrecy capacity bounds and channel capacity bounds are different.
Specifically, no matter the noise is signal-dependent or signal-independent, the channel capacity bounds in \cite{BIB06} and \cite{BIB16} increase approximately linearly with $A$.
However, as the increase of $A$, the secrecy capacity bounds in this paper and \cite{BIB28} increase first and then tend to stable values.
Moreover, the signal-dependent noise degrades the channel capacity or the secrecy capacity of the VLC system.
Furthermore, at large $A$, for the VLC system having the signal-independent noise or signal-dependent noise, the channel capacity bounds are always larger than the secrecy capacity, this is because the system performance degrades due to the wiretap of Eve.

\begin{figure}
\centering
\rotatebox[origin=c]{-90}{\includegraphics[width=9cm]{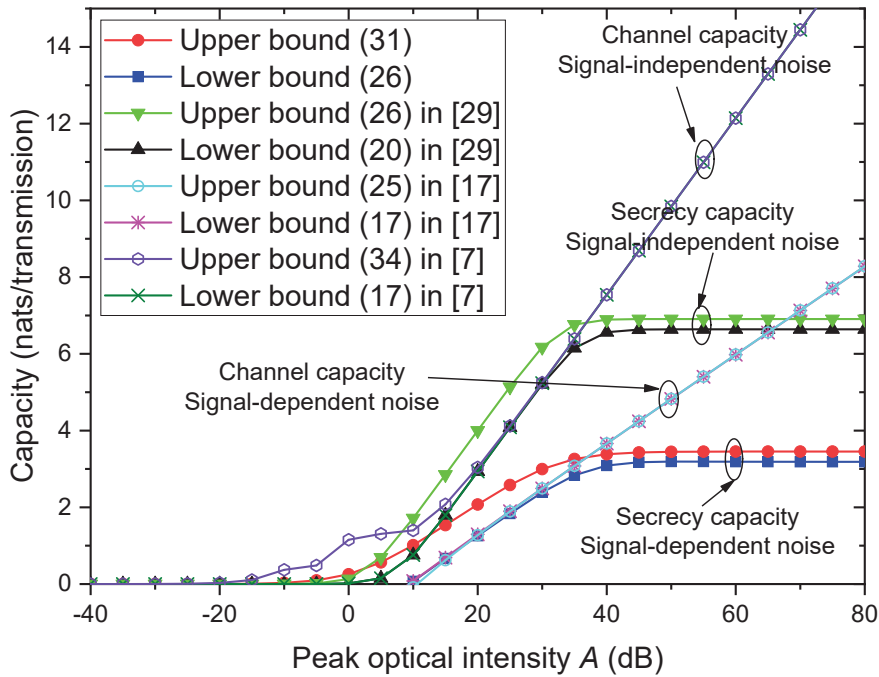}}
\caption{Capacity bounds versus peak optical intensity $A$ for different scenarios when $\xi  = 0.3$, $A/P = 1$ and ${H_{\rm{B}}}/{H_{\rm{E}}} =1000$}
\label{fig9}
\end{figure}

\section{Conclusions}
\label{section6}
We analyzed the PLS performance for the indoor VLC system having the signal-dependent noise.
Two kinds of signal constraints were considered:
one had the non-negativity and average optical intensity constraints;
the other had the non-negativity, average optical intensity and peak optical intensity constraints.
By considering these two kinds of signal constraints,
we derived tight lower and upper bounds on the secrecy capacity.
Letting the signal-dependent noise variance approach zero,
we derived the secrecy capacity bounds for the system having only the signal-independent noise.
For large optical intensity,
we also derived tight asymptotic secrecy capacity bounds.
From numerical results, we observed that the signal-dependent Gaussian noise has a strong impact on secrecy capacity,
thereby degrading the PLS performance.
At large optical intensity, we observed that the secrecy capacity does not increase with $P$ or $A$, but keeps stable value. Such an observation for secrecy capacity is different from that for channel capacity.
After deriving the secrecy capacity bounds, our future work will focus on exploiting novel schemes to improve the PLS performance of the VLC system. Moreover, experimental platforms will also be established in future work to validate the derived results.

\numberwithin{equation}{section}
\appendices
\section{Proof of \emph{Theorem \ref{the1}}}
\label{Appendix_A}
In (\ref{eq13}), we can obtain a lower bound on secrecy capacity by maximizing the source entropy under the constraints (\ref{eq2}) and (\ref{eq4}), i.e.,
\begin{eqnarray}
&&\mathop {\max }\limits_{{f_X}\left( x \right)} {\cal H}( X) =  - \int_0^\infty  {{f_X} (x){\rm{ln}}\left[ {{f_X}( x )} \right]{\rm{d}}x} \nonumber \\
{\rm{s}}{\rm{.t}}{\rm{.}}\;&&\int_0^\infty  {{f_X}( x){\rm{d}}x}  = 1 \label{eq35}\\
&&{E_X}\left( X \right) = \int_0^\infty  {x{f_X}( x){\rm{d}}x}  = \xi P. \nonumber
\end{eqnarray}
By using the variational method, we obtain an optimized PDF of the input $X$ as
\begin{eqnarray}
{f_X}(x) = \left\{ \begin{array}{l}
\frac{1}{{\xi P}}{e^{ - \frac{1}{{\xi P}}x}},\;{\rm{if}}\,x > 0\\
0,\quad \quad \quad {\rm{otherwise}}
\end{array} \right..
\label{eq36}
\end{eqnarray}
According to the input PDF (\ref{eq36}), ${\cal H}( X)$ is given by
\begin{equation}
{\cal H}( X ) = \ln \left( {e\xi P} \right).
\label{eq37}
\end{equation}
Moreover, for the negative exponential distribution (\ref{eq36}), we can obtain
\begin{eqnarray}
\left\{ \begin{array}{l}
{E_X}(X) = \xi P\\
{\rm var}(X) = {\xi ^2}{P^2}\\
{E_{X,{Z_{{\rm{E}},1}}}}(\sqrt X {Z_{{\rm{E}},1}}) = {E_X}(\sqrt X ){E_{{Z_{{\rm{E}},1}}}}({Z_{{\rm{E}},1}}) = 0
\end{array} \right..
\label{eq38}
\end{eqnarray}
The last equality in (\ref{eq38}) holds because $X$ and ${Z_{{\rm{E}},1}}$ are independent variables and the mean of ${Z_{{\rm{E}},1}}$ is zero.

Then, we can obtain ${\mathop{\rm var}} \left( {{Y_{\rm{E}}}} \right)$ as
\begin{eqnarray}
{\mathop{\rm var}} \left( {{Y_{\rm{E}}}} \right) \!\!\!&=&\!\! {\rm{var}}\left( {{H_{\rm{E}}}X{\rm{ + }}\sqrt {{H_{\rm{E}}}X} {Z_{{\rm{E}},1}}} \right) + {\mathop{\rm var}} \left( {{Z_{{\rm{E,0}}}}} \right) \nonumber \\
 &=&\!\! {\rm{var}}\left( {{H_{\rm{E}}}X} \right){\rm{ + var}}\left( {\sqrt {{H_{\rm{E}}}X} {Z_{{\rm{E}},1}}} \right) \nonumber\\
 &+&\!\! {\rm{2}}{E_{_{X,{Z_{{\rm{E}},1}}}}}\left\{ {\left[ {{H_{\rm{E}}}X - E_{X} \left( {{H_{\rm{E}}}X} \right)} \right]\left[ {\sqrt {{H_{\rm{E}}}X} {Z_{{\rm{E}},1}} - E_{X,{Z_{{\rm{E}},1}}} (\sqrt {{H_{\rm{E}}}X} {Z_{{\rm{E}},1}})} \right]} \right\} + \sigma _{\rm{E}}^2 \nonumber\\
 &=&\!\! H_{\rm{E}}^2{\xi ^2}{P^2} + {H_{\rm{E}}}{\mathop{\rm var}} \left( {\sqrt X {Z_{{\rm{E}},1}}} \right) + \sigma _{\rm{E}}^2.
\label{eq39}
\end{eqnarray}
Moreover, ${\mathop{\rm var}} ( {\sqrt X {Z_{{\rm{E}},1}}} )$ can be further expressed as
\begin{eqnarray}
{\mathop{\rm var}} \left( {\sqrt X {Z_{{\rm{E}},1}}} \right) &=& {E_{_{X,{Z_{{\rm{E}},1}}}}}(XZ_{{\rm{E}},1}^2) - {\left[ {{E_{_{X,{Z_{{\rm{E}},1}}}}}\left( {\sqrt X {Z_{{\rm{E}},1}}} \right)} \right]^2} \nonumber\\
 &=& {E_X}(X){E_{Z_{{\rm{E}},1}^{}}}(Z_{{\rm{E}},1}^2) - 0 \nonumber\\
 &=& \xi P\varsigma _{\rm{E}}^2\sigma _{\rm{E}}^2.
\label{eq40}
\end{eqnarray}
Substituting (\ref{eq40}) into (\ref{eq39}), we can finally obtain the variance of $Y_{\rm{E}}$ as
\begin{equation}
{\mathop{\rm var}} \left( {{Y_{\rm{E}}}} \right){\rm{ = }}H_{\rm{E}}^2{\xi ^2}{P^2} + {H_{\rm{E}}}\xi P\varsigma _{\rm{E}}^2\sigma _{\rm{E}}^2 + \sigma _{\rm{E}}^2.
\label{eq41}
\end{equation}
Moreover, ${E_X}\left[ {\ln \left( {\frac{{1 + {H_{\rm{E}}}\varsigma _{\rm{E}}^2X}}{{1 + {H_{\rm{B}}}\varsigma _{\rm{B}}^2X}}} \right)} \right]$ can be obtained by using the optimized PDF, i.e.,
\begin{eqnarray}
{E_X}\left[ {\ln \left( {\frac{{1 + {H_{\rm{E}}}\varsigma _{\rm{E}}^2X}}{{1 + {H_{\rm{B}}}\varsigma _{\rm{B}}^2X}}} \right)} \right] \!\!\!\!\!
 &=&\!\!\!\!  - \int_0^\infty  {\ln \left( {\frac{{1 + {H_{\rm{E}}}\varsigma _{\rm{E}}^2X}}{{1 + {H_{\rm{B}}}\varsigma _{\rm{B}}^2X}}} \right){\rm{d}}\left( {{e^{ - \frac{1}{{\xi P}}x}}} \right)} \nonumber\\
 &=&\!\!\!\! \left[\! {{e^{\frac{1}{{{H_{\rm{B}}}\varsigma _{\rm{B}}^2\xi P}}}}Ei\left(\! { - \frac{1}{{{H_{\rm{B}}}\varsigma _{\rm{B}}^2\xi P}}} \!\right) \!-\! {e^{\frac{1}{{{H_{\rm{E}}}\varsigma _{\rm{E}}^2\xi P}}}}Ei\left(\! { - \frac{1}{{{H_{\rm{E}}}\varsigma _{\rm{E}}^2\xi P}}} \!\right)} \!\right].
\label{eq42}
\end{eqnarray}
Substituting (\ref{eq37}), (\ref{eq39}) and (\ref{eq42}) into the lower bound (\ref{eq13}), we obtain \emph{Theorem \ref{the1}}. \;\;\;\;\;\;\;\;\;\;\;$\blacksquare$

\section{Proof of \emph{Theorem \ref{the2}}}
\label{Appendix_B}
According to the secrecy capacity's dual expression (\ref{eq19}), the secrecy capacity is upper-bounded as
\begin{equation}
{C_s} \le {E_{{X^*}{Y_{\rm{E}}}}}\left\{ {D\left( {\left. {{f_{{Y_{\rm{B}}}|X{Y_{\rm E}}}}({y_{\rm{B}}}|X,{Y_{\rm{E}}})} \right\|{g_{{Y_{\rm{B}}}|{Y_{\rm E}}}}({y_{\rm{B}}}|{Y_{\rm{E}}})} \right)} \right\}.
\label{eq43}
\end{equation}

Using the concept of relative entropy, eq. (\ref{eq43}) can be decomposed into two parts
\begin{eqnarray}
{C_s} &\le& {E_{{X^*}{Y_{\rm{E}}}}}\left\{ {\int_{ - \infty }^\infty  {{f_{{Y_{\rm{B}}}|X{Y_{\rm E}}}}({y_{\rm{B}}}|X,{Y_{\rm{E}}})\ln \left[ {\frac{{{f_{{Y_{\rm{B}}}|X{Y_{\rm E}}}}({y_{\rm{B}}}|X,{Y_{\rm{E}}})}}{{{g_{{Y_{\rm{B}}}|{Y_{\rm E}}}}({y_{\rm{B}}}|{Y_{\rm{E}}})}}} \right]{\rm{d}}{y_{\rm{B}}}} } \right\} \nonumber\\
 &=& \underbrace {{E_{{X^*}{Y_{\rm{E}}}}}\left\{ {\int_{ - \infty }^\infty  {{f_{{Y_{\rm{B}}}|X{Y_{\rm E}}}}({y_{\rm{B}}}|X,{Y_{\rm{E}}})\ln \left[ {{f_{{Y_{\rm{B}}}|X{Y_{\rm E}}}}({y_{\rm{B}}}|X,{Y_{\rm{E}}})} \right]{\rm{d}}{y_{\rm{B}}}} } \right\}}_{{I_1}} \nonumber\\
&&\underbrace { - {E_{{X^*}{Y_{\rm{E}}}}}\left\{ {\int_{ - \infty }^\infty  {{f_{{Y_{\rm{B}}}|X{Y_{\rm E}}}}({y_{\rm{B}}}|X,{Y_{\rm{E}}})\ln \left[ {{g_{{Y_{\rm{B}}}|{Y_{\rm E}}}}({y_{\rm{B}}}|{Y_{\rm{E}}})} \right]{\rm{d}}{y_{\rm{B}}}} } \right\}}_{{I_2}},
\label{eq44}
\end{eqnarray}
where ${I_1}$ in (\ref{eq44}) can be written as
\begin{eqnarray}
{I_1} &=&  - {\cal H}({Y_{\rm{B}}}|{X^*},{Y_{\rm{E}}}) \nonumber\\
 &=&  - \left[ {{\cal H}({Y_{\rm{B}}}|{X^*}) + {\cal H}({Y_{\rm{E}}}|{X^*},{Y_{\rm{B}}}) - {\cal H}({Y_{\rm{E}}}|{X^*})} \right],
\label{eq45}
\end{eqnarray}
where ${\cal H}\left( {{Y_{\rm{B}}}\left| {{X^*}} \right.} \right)$ is given by
\begin{eqnarray}
{\cal H}\left( {{Y_{\rm{B}}}\left| {{X^*}} \right.} \right) \!\!\!&=&\!\!\!  - {E_{{X^*}}}\left\{ {\int_{ - \infty }^\infty  {\frac{{{e^{ - \frac{{{{({y_{\rm{B}}} - {H_{\rm{B}}}X)}^2}}}{{2\left( {1 + {H_{\rm{B}}}X\varsigma _{\rm{B}}^2} \right)\sigma _{\rm{B}}^2}}}}}}{{\sqrt {2\pi \left( {1 + {H_{\rm{B}}}X\varsigma _{\rm{B}}^2} \right)\sigma _{\rm{B}}^2} }}} \ln \left[ {\frac{{{e^{ - \frac{{{{({y_{\rm{B}}} - {H_{\rm{B}}}X)}^2}}}{{2\left( {1 + {H_{\rm{B}}}X\varsigma _{\rm{B}}^2} \right)\sigma _{\rm{B}}^2}}}}}}{{\sqrt {2\pi \left( {1 + {H_{\rm{B}}}X\varsigma _{\rm{B}}^2} \right)\sigma _{\rm{B}}^2} }}} \right]{\rm{d}}{y_{\rm{B}}}} \right\} \nonumber\\
 &=&\!\!\! {E_{{X^*}}}\left\{ {\frac{1}{2}\ln \left[ {2\pi e\left( {1 + {H_{\rm{B}}}X\varsigma _{\rm{B}}^2} \right)\sigma _{\rm{B}}^2} \right]} \right\}.
\label{eq46}
\end{eqnarray}
Similarly, ${\cal H}\left( {{Y_{\rm{E}}}\left| {{X^*}} \right.} \right)$ is given by
\begin{equation}
{\cal H}\left( {{Y_{\rm{E}}}\left| {{X^*}} \right.} \right) = {E_{{X^*}}}\left\{ {\frac{1}{2}\ln \left[ {2\pi e\left( {1 + {H_{\rm{E}}}X\varsigma _{\rm{E}}^2} \right)\sigma _{\rm{E}}^2} \right]} \right\}.
\label{eq47}
\end{equation}

According to the received signal model (\ref{eq1}), we can rewrite ${Y_{\rm{E}}}$ as
\begin{equation}
{Y_{\rm{E}}} = \frac{{{H_{\rm{E}}}}}{{{H_{\rm{B}}}}}{Y_{\rm{B}}} - \frac{{{H_{\rm{E}}}}}{{{H_{\rm{B}}}}}\sqrt {{H_{\rm{B}}}X} {Z_{{\rm{B}},1}} - \frac{{{H_{\rm{E}}}}}{{{H_{\rm{B}}}}}{Z_{{\rm{B}},0}} + \sqrt {{H_{\rm{E}}}X} {Z_{{\rm{E}},1}} + {Z_{{\rm{E}},0}}.
\label{eq48}
\end{equation}

The conditional PDF ${f_{{Y_{\rm{E}}}|X{Y_{\rm{B}}}}}({y_{\rm{E}}}|X,{Y_{\rm{B}}})$ is given by
\begin{equation}
{f_{{Y_{\rm{E}}}|X{Y_{\rm{B}}}}}({y_{\rm{E}}}|X,{Y_{\rm{B}}}) = \frac{{{e^{ - \frac{{{{\left( {{y_{\rm{E}}} - \frac{{{H_{\rm{E}}}}}{{{H_{\rm{B}}}}}{Y_{\rm{B}}}} \right)}^2}}}{{2\left( {MX + N} \right)}}}}}}{{\sqrt {2\pi \left( {MX + N} \right)} }},
\label{eq49}
\end{equation}
where $M = H_{\rm{E}}^2\varsigma _{\rm{B}}^2\sigma _{\rm{B}}^2/{H_{\rm{B}}} + {H_{\rm{E}}}\varsigma _{\rm{E}}^2\sigma _{\rm{E}}^2$ and $N = H_{\rm{E}}^2\sigma _{\rm{B}}^2/H_{\rm{B}}^2 + \sigma _{\rm{E}}^2$.
According the conditional PDF (\ref{eq49}), we can derive ${\cal H}({Y_{\rm{E}}}|{X^*},{Y_{\rm{B}}})$ as
\begin{equation}
{\cal H}({Y_{\rm{E}}}|{X^*},{Y_{\rm{B}}}) = {E_{{X^*}}}\left\{ {\frac{1}{2}\ln \left[ {2\pi e\left( {MX + N} \right)} \right]} \right\}.
\label{eq50}
\end{equation}

Substituting (\ref{eq46}), (\ref{eq47}) and (\ref{eq50}) into (\ref{eq45}), ${I_1}$ can be written as
\begin{eqnarray}
{I_1} &=&  - {E_{{X^*}}}\left\{ {\frac{1}{2}\ln \left[ {\frac{{2\pi e\sigma _{\rm{B}}^2\left( {MX + N} \right)\left( {1 + {H_{\rm{B}}}X\varsigma _{\rm{B}}^2} \right)}}{{\sigma _{\rm{E}}^2\left( {1 + {H_{\rm{E}}}X\varsigma _{\rm{E}}^2} \right)}}} \right]} \right\} \nonumber\\
 &=&  - \frac{1}{2}\ln \left( {2\pi e\frac{{\sigma _{\rm{B}}^2}}{{\sigma _{\rm{E}}^2}}} \right) - \frac{1}{2}{E_{{X^*}}}\left\{ {\ln \left( {MX + N} \right) + \ln \left( {\frac{{1 + {H_{\rm{B}}}X\varsigma _{\rm{B}}^2}}{{1 + {H_{\rm{E}}}X\varsigma _{\rm{E}}^2}}} \right)} \right\}.
\label{eq51}
\end{eqnarray}

To obtain ${I_2}$ in (\ref{eq44}), we select ${g_{{Y_{\rm{B}}}|{Y_{\rm E}}}}({y_{\rm{B}}}|{y_{\rm{E}}})$ as
\begin{equation}
{g_{{Y_{\rm{B}}}|{Y_{\rm E}}}}({y_{\rm{B}}}|{y_{\rm{E}}}) = \frac{1}{{2{s^2}}}{e^{ - \frac{{\left| {{y_{\rm{B}}} - \mu {y_{\rm{E}}}} \right|}}{{{s^2}}}}},
\label{eq52}
\end{equation}
where $\mu $ and $s$ are two free parameters to be determined.

Substituting the selected conditional PDF (\ref{eq52}) into (\ref{eq44}), we can obtain ${I_2}$ as
\begin{eqnarray}
{I_2} &=&  - {E_{{X^*}{Y_{\rm{B}}}}}\left\{ {\int_{ - \infty }^\infty  {{f_{{Y_{\rm{E}}}|X{Y_{\rm{B}}}}}({y_{\rm{E}}}|X,{Y_{\rm{B}}})\ln \left[ {\frac{1}{{2{s^2}}}{e^{ - \frac{{\left| {{Y_{\rm{B}}} - \mu {y_{\rm{E}}}} \right|}}{{{s^2}}}}}} \right]{\rm{d}}{y_{\rm{E}}}} } \right\} \nonumber\\
 &=& {E_{{X^*}{Y_{\rm{B}}}}}\left\{ {\ln \left( {2{s^2}} \right) + \frac{1}{{{s^2}}}\underbrace {\int_{ - \infty }^\infty  {\frac{{{e^{ - \frac{{{{\left( {{y_{\rm{E}}} - \frac{{{H_{\rm{E}}}}}{{{H_{\rm{B}}}}}{Y_{\rm{B}}}} \right)}^2}}}{{2\left( {MX + N} \right)}}}}}}{{\sqrt {2\pi \left( {MX + N} \right)} }}\left| {{Y_{\rm{B}}} - \mu {y_{\rm{E}}}} \right|{\rm{d}}{y_{\rm{E}}}} }_{{C_1}}} \right\}.
\label{eq53}
\end{eqnarray}
Let $t = {y_{\rm{E}}} - {H_{\rm{E}}}{Y_{\rm{B}}}/{H_{\rm{B}}}$, ${C_1}$ in (\ref{eq53}) can be expressed as
\begin{eqnarray}
{C_1} &=& \int_{ - \infty }^\infty  {\frac{{{e^{ - \frac{{{t^2}}}{{2\left( {MX + N} \right)}}}}}}{{\sqrt {2\pi \left( {MX + N} \right)} }}\left| {\left( {1 - \mu \frac{{{H_{\rm{E}}}}}{{{H_{\rm{B}}}}}} \right){Y_{\rm{B}}} - \mu t} \right|{\rm{d}}t} \nonumber\\
 &\le& \int_{ - \infty }^\infty  {\frac{{{e^{ - \frac{{{t^2}}}{{2\left( {MX + N} \right)}}}}}}{{\sqrt {2\pi \left( {MX + N} \right)} }}\left( {\left| {\mu t} \right| + \left| {\left( {1 - \mu \frac{{{H_{\rm{E}}}}}{{{H_{\rm{B}}}}}} \right){Y_{\rm{B}}}} \right|} \right){\rm{d}}t} \nonumber\\
 &=& \int_{ - \infty }^\infty  {\frac{{{e^{ - \frac{{{t^2}}}{{2\left( {MX + N} \right)}}}}}}{{\sqrt {2\pi \left( {MX + N} \right)} }}\left( {\left| \mu  \right|\left| t \right| + \left| {1 - \mu \frac{{{H_{\rm{E}}}}}{{{H_{\rm{B}}}}}} \right|\left| {{Y_{\rm{B}}}} \right|} \right){\rm{d}}t} \nonumber\\
 &=& 2\left| \mu  \right|\sqrt {\frac{{MX + N}}{{2\pi }}}  + \left| {1 - \mu \frac{{{H_{\rm{E}}}}}{{{H_{\rm{B}}}}}} \right|\left| {{Y_{\rm{B}}}} \right|.
\label{eq54}
\end{eqnarray}
The inequality in (\ref{eq54}) holds due to the fact that $\left| {a - b} \right| \le \left| a \right| + \left| b \right|$.

Substituting (\ref{eq54}) into (\ref{eq53}), we have
\begin{eqnarray}
{I_2} \!\!\!\!\!&=&\!\!\!\!\! {E_{{X^*}{Y_{\rm{B}}}}}\left\{ {\ln \left( {2{s^2}} \right) + \frac{1}{{{s^2}}}\left[ {2\left| \mu  \right|\sqrt {\frac{{MX + N}}{{2\pi }}}  + \left| {1 - \mu \frac{{{H_{\rm{E}}}}}{{{H_{\rm{B}}}}}} \right|\left| {{Y_{\rm{B}}}} \right|} \right]} \right\} \nonumber\\
 &=&\!\!\!\!\! {E_{{X^*}}}\!\!\left\{\!\! {\ln \!\left( {2{s^2}} \right) \!\!+\!\! \frac{1}{{{s^2}}}\!\!\left[\! {2\!\left| \mu  \right|\!\sqrt {\frac{{MX \!+\! N}}{{2\pi }}}  \!\!+\!\! \left|\! {1 \!-\! \mu \frac{{{H_{\rm{E}}}}}{{{H_{\rm{B}}}}}}\! \right|\!\!\underbrace {\int_{ - \infty }^\infty \!\! {\frac{{{e^{ - \frac{{{{\left( {{y_{\rm{B}}} - {H_{\rm{B}}}X} \right)}^2}}}{{2\left( {1 + {H_{\rm{B}}}X\varsigma _{\rm{B}}^2} \right)\sigma _{\rm{B}}^2}}}}}}{{\sqrt {2\pi \left( {1 \!\!+\!\! {H_{\rm{B}}}X\varsigma _{\rm{B}}^2} \right)\sigma _{\rm{B}}^2} }}\!\left| {{y_{\rm{B}}}} \right|\!{\rm{d}}{y_{\rm{B}}}} }_{{C_2}}} \!\!\right]} \!\!\!\right\}\!\!,
\label{eq55}
\end{eqnarray}
where ${C_2}$ can be expressed as
\begin{eqnarray}
{C_2} &=& \int_{ - \infty }^\infty  {\frac{{{e^{ - \frac{{{h^2}}}{{2\left( {1 + {H_{\rm{B}}}X\varsigma _{\rm{B}}^2} \right)\sigma _{\rm{B}}^2}}}}}}{{\sqrt {2\pi \left( {1 + {H_{\rm{B}}}X\varsigma _{\rm{B}}^2} \right)\sigma _{\rm{B}}^2} }}\left| {h + {H_{\rm{B}}}X} \right|{\rm{d}}h} \nonumber\\
 &\le& \int_{ - \infty }^\infty  {\frac{{{e^{ - \frac{{{h^2}}}{{2\left( {1 + {H_{\rm{B}}}X\varsigma _{\rm{B}}^2} \right)\sigma _{\rm{B}}^2}}}}}}{{\sqrt {2\pi \left( {1 + {H_{\rm{B}}}X\varsigma _{\rm{B}}^2} \right)\sigma _{\rm{B}}^2} }}\left( {\left| h \right| + \left| {{H_{\rm{B}}}X} \right|} \right){\rm{d}}h} \nonumber\\
 &=& 2\sqrt {\frac{{\left( {1 + {H_{\rm{B}}}X\varsigma _{\rm{B}}^2} \right)\sigma _{\rm{B}}^2}}{{2\pi }}}  + {H_{\rm{B}}}X.
\label{eq56}
\end{eqnarray}
The inequality in (\ref{eq56}) holds because $\left| {a + b} \right| \le \left| a \right| + \left| b \right|$.

To obtain a tight upper bound, we perform the first partial derivative of (\ref{eq55}) with ${s^2}$ to get a minimum upper bound of ${I_2}$. Therefore, the minimal point is derived as
\begin{equation}
{s^2} = 2\left| \mu  \right|\sqrt {\frac{{MX + N}}{{2\pi }}}  + \left| {1 - \mu \frac{{{H_{\rm{E}}}}}{{{H_{\rm{B}}}}}} \right|{C_2}.
\label{eq57}
\end{equation}
Then, ${I_2}$ is upper-bounded by
\begin{equation}
{I_2} \le {E_{{X^*}}}\left\{\! {\ln\!\! \left[ {4e\left( {\left| \mu  \right|\sqrt {\frac{{MX + N}}{{2\pi }}}  \!\!+\!\! \left| {1 - \mu \frac{{{H_{\rm{E}}}}}{{{H_{\rm{B}}}}}} \right|\left( {\sqrt {\frac{{\left( {1 + {H_{\rm{B}}}X\varsigma _{\rm{B}}^2} \right)\sigma _{\rm{B}}^2}}{{2\pi }}}  + \frac{{{H_{\rm{B}}}X}}{2}} \!\right)} \!\right)} \!\right]} \!\right\}.
\label{eq58}
\end{equation}

Substituting (\ref{eq51}) and (\ref{eq58}) into (\ref{eq44}), ${C_s}$ is upper-bounded by
\begin{eqnarray}
{C_s} \!\!\!\!&\le&\!\!\!\!  - \frac{1}{2}\ln \left( {2\pi e\frac{{\sigma _{\rm{B}}^2}}{{\sigma _{\rm{E}}^2}}} \right) - \frac{1}{2}{E_{{X^*}}}\left\{ {\ln \left( {\frac{{1 + {H_{\rm{B}}}X\varsigma _{\rm{B}}^2}}{{1 + {H_{\rm{E}}}X\varsigma _{\rm{E}}^2}}} \right)} \right\} \nonumber\\
 &&\!\!\!\!\!+ {E_{{X^*}}}\!\left\{\!\! {\ln \!\!\left[ {4e\!\left( {\frac{{\left| \mu  \right|}}{{\sqrt {2\pi } }} \!+\! \left| {1 \!-\! \mu \frac{{{H_{\rm{E}}}}}{{{H_{\rm{B}}}}}} \right|\!\!\left( {\sqrt {\frac{{\left( {1 + {H_{\rm{B}}}X\varsigma _{\rm{B}}^2} \right)\sigma _{\rm{B}}^2}}{{2\pi (MX + N)}}}  \!+\! \frac{{{H_{\rm{B}}}X}}{{2\sqrt {MX + N} }}} \!\right)} \!\right)} \!\right]} \!\right\}.
\label{eq59}
\end{eqnarray}
Since $\frac{{{H_{\rm{B}}}X}}{{2\sqrt {MX + N} }} \le \frac{{{H_{\rm{B}}}X}}{{2\sqrt {MX} }} = \frac{{{H_{\rm{B}}}\sqrt X }}{{2\sqrt M }}$, eq. (\ref{eq59}) can be further written as
\begin{eqnarray}
{C_s} \!\!\!&\le&\!\!\!  - \frac{1}{2}\ln \left( {2\pi e\frac{{\sigma _{\rm{B}}^2}}{{\sigma _{\rm{E}}^2}}} \right) - \frac{1}{2}{E_{{X^*}}}\left\{ {\ln \left( {\frac{{1 + {H_{\rm{B}}}X\varsigma _{\rm{B}}^2}}{{1 + {H_{\rm{E}}}X\varsigma _{\rm{E}}^2}}} \right)} \right\} \nonumber\\
 &&\!\!\!\!+ {E_{{X^*}}}\left\{ {\ln \left[ {4e\left( {\frac{{\left| \mu  \right|}}{{\sqrt {2\pi } }} + \left| {1 - \mu \frac{{{H_{\rm{E}}}}}{{{H_{\rm{B}}}}}} \right|\left( {\sqrt {\frac{{\left( {1 + {H_{\rm{B}}}X\varsigma _{\rm{B}}^2} \right)\sigma _{\rm{B}}^2}}{{2\pi (MX + N)}}}  + \frac{{{H_{\rm{B}}}\sqrt X }}{{2\sqrt M }}} \right)} \right)} \right]} \right\}.
\label{eq60}
\end{eqnarray}
Then, by using Jenson's inequality for the convex function $\ln ( \cdot )$ and $\sqrt {\, \cdot \,} $, we have
\begin{eqnarray}
{C_s} \!\!\!\!&\le&\!\!\!\!  - \frac{1}{2}\ln \left( {2\pi e\frac{{\sigma _{\rm{B}}^2}}{{\sigma _{\rm{E}}^2}}} \right) - \frac{1}{2}\ln \left[ {{E_{{X^*}}}\left( {\frac{{1 + {H_{\rm{B}}}X\varsigma _{\rm{B}}^2}}{{1 + {H_{\rm{E}}}X\varsigma _{\rm{E}}^2}}} \right)} \right] \nonumber\\
 &&\!\!\!\!\!+ \ln \left[ {4e\left( {\frac{{\left| \mu  \right|}}{{\sqrt {2\pi } }} + \left| {1 - \mu \frac{{{H_{\rm{E}}}}}{{{H_{\rm{B}}}}}} \right|\left( {{E_{{X^*}}}\left( {\sqrt {\frac{{\left( {1 + {H_{\rm{B}}}X\varsigma _{\rm{B}}^2} \right)\sigma _{\rm{B}}^2}}{{2\pi (MX + N)}}} } \right) + \frac{{{H_{\rm{B}}}}}{{2\sqrt M }}{E_{{X^*}}}\left( {\sqrt X } \!\right)} \!\right)} \!\right)} \!\right] \nonumber\\
 &\le&\!\!\!\!  - \frac{1}{2}\ln \left( {2\pi e\frac{{\sigma _{\rm{B}}^2}}{{\sigma _{\rm{E}}^2}}} \right) - \frac{1}{2}\ln \left[ {{E_{{X^*}}}\left( {\frac{{1 + {H_{\rm{B}}}X\varsigma _{\rm{B}}^2}}{{1 + {H_{\rm{E}}}X\varsigma _{\rm{E}}^2}}} \right)} \right] \nonumber\\
 &&\!\!\!\!\!+ \ln \left[ {4e\left( {\frac{{\left| \mu  \right|}}{{\sqrt {2\pi } }} + \left| {1 - \mu \frac{{{H_{\rm{E}}}}}{{{H_{\rm{B}}}}}} \right|\left( {\sqrt {{E_{{X^*}}}\left[ {\frac{{\left( {1 + {H_{\rm{B}}}X\varsigma _{\rm{B}}^2} \right)\sigma _{\rm{B}}^2}}{{2\pi (MX + N)}}} \right]}  + \frac{{{H_{\rm{B}}}}}{{2\sqrt M }}\sqrt {{E_{{X^*}}}(X)} } \!\right)} \!\right)} \!\right] \nonumber\\
 &=&  - \frac{1}{2}\ln \left( {2\pi e\frac{{\sigma _{\rm{B}}^2}}{{\sigma _{\rm{E}}^2}}} \right) - \frac{1}{2}\ln \left[ {{E_{{X^*}}}\left( {\frac{{1 + {H_{\rm{B}}}X\varsigma _{\rm{B}}^2}}{{1 + {H_{\rm{E}}}X\varsigma _{\rm{E}}^2}}} \right)} \right] \nonumber\\
 &&+ \ln \left[ {4e\left( {\frac{{\left| \mu  \right|}}{{\sqrt {2\pi } }} + \left| {1 - \mu \frac{{{H_{\rm{E}}}}}{{{H_{\rm{B}}}}}} \right|\left( {\sqrt {{E_{{X^*}}}\left[ {\frac{{\left( {1 + {H_{\rm{B}}}X\varsigma _{\rm{B}}^2} \right)\sigma _{\rm{B}}^2}}{{2\pi (MX + N)}}} \right]}  + \frac{{{H_{\rm{B}}}\sqrt {\xi P} }}{{2\sqrt M }}} \right)} \right)} \right].
\label{eq61}
\end{eqnarray}
In VLC, the received optical intensity is often large to satisfy the illumination requirements.
Therefore, we are more interested in the performance at large optical intensity.
As the nominal optical intensity $P$ is loosened to infinity,
the secrecy-capacity-achieving input distribution ${f_{{X^*}}}(x)$ will escape to infinity \cite{BIB16}, \cite{BIB34}.
This indicates that when $P$ approaches infinity, the probability for any set of finite-intensity input symbols will tend to zero.
Therefore, when $P$ tends to infinity, we have
\begin{eqnarray}
\left\{ \begin{array}{l}
{E_{{X^*}}}\left( {\frac{{1 + {H_{\rm{B}}}X\varsigma _{\rm{B}}^2}}{{1 + {H_{\rm{E}}}X\varsigma _{\rm{E}}^2}}} \right) =  \frac{{1 + {H_{\rm{B}}}\varsigma _{\rm{B}}^2P}}{{1 + {H_{\rm{E}}}\varsigma _{\rm{E}}^2P}}\\
{E_{{X^*}}}\left[ {\frac{{\left( {1 + {H_{\rm{B}}}X\varsigma _{\rm{B}}^2} \right)\sigma _{\rm{B}}^2}}{{2\pi (MX + N)}}} \right] =  \frac{{\left( {1 + {H_{\rm{B}}}P\varsigma _{\rm{B}}^2} \right)\sigma _{\rm{B}}^2}}{{2\pi (MP + N)}}
\end{array} \right..
\label{eq62}
\end{eqnarray}
Moreover, by using L'Hospital's rule for (\ref{eq62}), we have
\begin{eqnarray}
\left\{ \begin{array}{l}
\mathop {\lim }\limits_{P \to \infty } \frac{{1 + {H_{\rm{B}}}\varsigma _{\rm{B}}^2P}}{{1 + {H_{\rm{B}}}\varsigma _{\rm{B}}^2P}} = \frac{{{H_{\rm{B}}}\varsigma _{\rm{B}}^2}}{{{H_{\rm{E}}}\varsigma _{\rm{E}}^2}}\\
\mathop {\lim }\limits_{P \to \infty } \frac{{\left( {1 + {H_{\rm{B}}}P\varsigma _{\rm{B}}^2} \right)\sigma _{\rm{B}}^2}}{{2\pi (MP + N)}} = \frac{{{H_{\rm{B}}}\varsigma _{\rm{B}}^2\sigma _{\rm{B}}^2}}{{2\pi M}}
\end{array} \right..
\label{eq63}
\end{eqnarray}
Substituting (\ref{eq62}) and (\ref{eq63}) into (\ref{eq61}), we have
\begin{equation}
{C_s} \le  - \frac{1}{2}\ln \left( {2\pi e\frac{{{H_{\rm{B}}}\varsigma _{\rm{B}}^2\sigma _{\rm{B}}^2}}{{{H_{\rm{E}}}\varsigma _{\rm{E}}^2\sigma _{\rm{E}}^2}}} \right) \!+\! \ln \!\!\left[\! {4e\!\!\left(\! {\underbrace {\frac{{\left| \mu  \right|}}{{\sqrt {2\pi } }} \!+\! \left| {1 \!-\! \mu \frac{{{H_{\rm{E}}}}}{{{H_{\rm{B}}}}}} \right|\!\left( {\sqrt {\frac{{{H_{\rm{B}}}\varsigma _{\rm{B}}^2\sigma _{\rm{B}}^2}}{{2\pi M}}}  \!+\! \frac{{{H_{\rm{B}}}}}{2}\sqrt {\frac{{\xi P}}{M}} } \right)}_{{C_{\rm{3}}}}} \!\!\right)} \!\right].
\label{eq64}
\end{equation}
In (\ref{eq64}), ${C_{\rm{3}}}$ is a function with respect to $\mu $.
To obtain a tight upper bound on secrecy capacity, we should determine a smaller value of ${C_{\rm{3}}}$.
In the following, three cases are considered:

\textbf{Case 1:} when $\mu  \le 0$, ${C_{\rm{3}}}$ is given by
\begin{eqnarray}
{C_{\rm{3}}} &=&  - \mu \left[ {\frac{1}{{\sqrt {2\pi } }} + \frac{{{H_{\rm{E}}}}}{{{H_{\rm{B}}}}}\left( {\sqrt {\frac{{{H_{\rm{B}}}\varsigma _{\rm{B}}^2\sigma _{\rm{B}}^2}}{{2\pi M}}}  + \frac{{{H_{\rm{B}}}}}{2}\sqrt {\frac{{\xi P}}{M}} } \right)} \right] + \left( {\sqrt {\frac{{{H_{\rm{B}}}\varsigma _{\rm{B}}^2\sigma _{\rm{B}}^2}}{{2\pi M}}}  + \frac{{{H_{\rm{B}}}}}{2}\sqrt {\frac{{\xi P}}{M}} } \right) \nonumber\\
 &&\ge \sqrt {\frac{{{H_{\rm{B}}}\varsigma _{\rm{B}}^2\sigma _{\rm{B}}^2}}{{2\pi M}}}  + \frac{{{H_{\rm{B}}}}}{2}\sqrt {\frac{{\xi P}}{M}} .
\label{eq65}
\end{eqnarray}

\textbf{Case 2:} when $0 \le \mu  \le {H_{\rm{B}}}/{H_{\rm{E}}}$, ${C_{\rm{3}}}$ is given by
\begin{equation}
{C_{\rm{3}}} = \mu \left[ {\frac{1}{{\sqrt {2\pi } }} - \frac{{{H_{\rm{E}}}}}{{{H_{\rm{B}}}}}\left( {\sqrt {\frac{{{H_{\rm{B}}}\varsigma _{\rm{B}}^2\sigma _{\rm{B}}^2}}{{2\pi M}}}  + \frac{{{H_{\rm{B}}}}}{2}\sqrt {\frac{{\xi P}}{M}} } \right)} \right] + \left( {\sqrt {\frac{{{H_{\rm{B}}}\varsigma _{\rm{B}}^2\sigma _{\rm{B}}^2}}{{2\pi M}}}  + \frac{{{H_{\rm{B}}}}}{2}\sqrt {\frac{{\xi P}}{M}} } \right).
\label{eq66}
\end{equation}
If $\frac{1}{{\sqrt {2\pi } }} \ge \frac{{{H_{\rm{E}}}}}{{{H_{\rm{B}}}}}\left( {\sqrt {\frac{{{H_{\rm{B}}}\varsigma _{\rm{B}}^2\sigma _{\rm{B}}^2}}{{2\pi M}}}  + \frac{{{H_{\rm{B}}}}}{2}\sqrt {\frac{{\xi P}}{M}} } \right)$, we have
${C_3} \ge \sqrt {\frac{{{H_{\rm{B}}}\varsigma _{\rm{B}}^2\sigma _{\rm{B}}^2}}{{2\pi M}}}  + \frac{{{H_{\rm{B}}}}}{2}\sqrt {\frac{{\xi P}}{M}}$;
otherwise, we have
${C_3} \ge \frac{{{H_{\rm{B}}}}}{{{H_{\rm{E}}}\sqrt {2\pi } }}$.

\textbf{Case 3:} when $\mu  \ge {H_{\rm{B}}}/{H_{\rm{E}}}$, ${C_{\rm{3}}}$ is given by
\begin{eqnarray}
{C_3} &=& \mu \left[ {\frac{1}{{\sqrt {2\pi } }} + \frac{{{H_{\rm{E}}}}}{{{H_{\rm{B}}}}}\left( {\sqrt {\frac{{{H_{\rm{B}}}\varsigma _B^2\sigma _B^2}}{{2\pi M}}}  + \frac{{{H_{\rm{B}}}}}{2}\sqrt {\frac{{\xi P}}{M}} } \right)} \right] - \left( {\sqrt {\frac{{{H_{\rm{B}}}\varsigma _B^2\sigma _B^2}}{{2\pi M}}}  + \frac{{{H_{\rm{B}}}}}{2}\sqrt {\frac{{\xi P}}{M}} } \right) \nonumber\\
 &\ge& \frac{{{H_{\rm{B}}}}}{{{H_{\rm{E}}}\sqrt {2\pi } }}.
\label{eq67}
\end{eqnarray}

According to the above three cases, we have
\begin{eqnarray}
{C_3} \ge \left\{ \begin{array}{l}
\sqrt {\frac{{{H_{\rm{B}}}\varsigma _B^2\sigma _B^2}}{{2\pi M}}}  + \frac{{{H_{\rm{B}}}}}{2}\sqrt {\frac{{\xi P}}{M}} ,\;{\rm{if}}\;\frac{1}{{\sqrt {2\pi } }} \ge \frac{{{H_{\rm{E}}}}}{{{H_{\rm{B}}}}}\left( {\sqrt {\frac{{{H_{\rm{B}}}\varsigma _B^2\sigma _B^2}}{{2\pi M}}}  + \frac{{{H_{\rm{B}}}}}{2}\sqrt {\frac{{\xi P}}{M}} } \right)\\
\frac{{{H_{\rm{B}}}}}{{{H_{\rm{E}}}\sqrt {2\pi } }},\quad \quad\;\;\;\quad \quad \quad \;\;{\rm{otherwise}}
\end{array} \right..
\label{eq68}
\end{eqnarray}
Substituting (\ref{eq68}) into (\ref{eq64}), we obtain \emph{Theorem \ref{the2}}. \quad\quad\quad\quad\quad\quad\quad\quad\quad\quad\quad\quad\quad\quad\quad\;\;\;$\blacksquare$

\section{Proof of \emph{Corollary \ref{cor3}}}
\label{Appendix_C}
In \emph{Theorem \ref{the1}},
the term ${f_{{\rm{low}}}}\left( {{H_{\rm{B}}},\xi ,P} \right)$ is a monotonically decreasing positive function with respect to ${H_{\rm{B}}}$, $\xi $ and $P$ \cite{BIB34}.
When $P$ tends to infinity, we can obviously obtain
\begin{equation}
\mathop {\lim }\limits_{P \to \infty } {f_{{\rm{low}}}}\left( {{H_{\rm{B}}},\xi ,P} \right) = \ln \left( {{H_{\rm{B}}}} \right),
\label{eq69}
\end{equation}
and
\begin{eqnarray}
&&\mathop {\lim }\limits_{P \to \infty } {e^{\frac{1}{{{H_{\rm{B}}}\varsigma _{\rm{B}}^2\xi P}}}}Ei\left( { - \frac{1}{{{H_{\rm{B}}}\varsigma _{\rm{B}}^2\xi P}}} \right) - {e^{\frac{1}{{{H_{\rm{E}}}\varsigma _{\rm{E}}^2\xi P}}}}Ei\left( { - \frac{1}{{{H_{\rm{E}}}\varsigma _{\rm{E}}^2\xi P}}} \right) \nonumber\\
 &&= \mathop {\lim }\limits_{P \to \infty } {E_X}\left\{ {\ln \left( {\frac{{1 + {H_{\rm{E}}}\varsigma _{\rm{E}}^2X}}{{1 + {H_{\rm{B}}}\varsigma _{\rm{B}}^2X}}} \right)} \right\} \nonumber\\
 &&= \ln \left( {\frac{{{H_{\rm{E}}}\varsigma _{\rm{E}}^2}}{{{H_{\rm{B}}}\varsigma _{\rm{B}}^2}}} \right).
\label{eq70}
\end{eqnarray}
Substituting (\ref{eq69}) and (\ref{eq70}) into \emph{Theorem \ref{the1}}, we obtain the asymptotic lower bound on secrecy capacity as
\begin{eqnarray}
\mathop {\lim }\limits_{P \to \infty } {C_{{\rm{Low}}}}&{\rm{ = }}&\mathop {\lim }\limits_{P \to \infty } \left\{ {\frac{1}{2}\ln \left[ {\frac{{e{H_{\rm{B}}}{H_{\rm{E}}}\varsigma _{\rm{E}}^2\sigma _{\rm{E}}^2{\xi ^2}{P^2}}}{{2\pi \varsigma _{\rm{B}}^2\sigma _{\rm{B}}^2\left( {H_{\rm{E}}^2{\xi ^2}{P^2} + {H_{\rm{E}}}\xi P\varsigma _{\rm{E}}^2\sigma _{\rm{E}}^2 + \sigma _{\rm{E}}^2} \right)}}} \right]} \right\} \nonumber\\
&{\rm{ = }}&\frac{1}{2}\ln \left( {\frac{{e{H_{\rm{B}}}\varsigma _{\rm{E}}^2\sigma _{\rm{E}}^2}}{{2\pi {H_{\rm{E}}}\varsigma _{\rm{B}}^2\sigma _{\rm{B}}^2}}} \right).
\label{eq71}
\end{eqnarray}

For the asymptotic upper bound, the case of $\frac{1}{{\sqrt {2\pi } }} \ge \frac{{{H_{\rm{E}}}}}{{{H_{\rm{B}}}}}\left( {\sqrt {\frac{{{H_{\rm{B}}}\varsigma _B^2\sigma _B^2}}{{2\pi M}}}  + \frac{{{H_{\rm{B}}}}}{2}\sqrt {\frac{{\xi P}}{M}} } \right)$ does not occur when $P$ tends to infinity.
Therefore, we can obtain the asymptotic upper bound as
\begin{equation}
\mathop {\lim }\limits_{P \to \infty } {C_{{\rm{Upp}}}}{\rm{ = }}\frac{1}{2}\ln \left( {\frac{{4e{H_{\rm{B}}}\varsigma _{\rm{E}}^2\sigma _{\rm{E}}^2}}{{{\pi ^2}{H_{\rm{E}}}\varsigma _{\rm{B}}^2\sigma _{\rm{B}}^2}}} \right).
\label{eq72}
\end{equation}
Combining (\ref{eq71}) with (\ref{eq72}), we prove \emph{Corollary \ref{cor3}}. \quad\quad\quad\quad\quad\quad\quad\quad\quad\quad\quad\quad\quad\quad\quad\quad\;\;\;$\blacksquare$

\section{Proof of \emph{Theorem \ref{the3}}}
\label{Appendix_D}
Similar to (\ref{eq35}), the functional optimization problem in this case can be expressed as
\begin{eqnarray}
&& \mathop {\max }\limits_{{f_X}\left( x \right)} {\cal H}( X ) =  - \int_0^A {{f_X}( x ){\rm{ln}}\left[ {{f_X}( x )} \right]{\rm{d}}x} \nonumber\\
{\rm{s}}{\rm{.t}}{\rm{.}}\;&&\int_0^A {{f_X}( x ){\rm{d}}x}  = 1 \label{eq73}\\
&&E\left( X \right) = \int_0^A {x{f_X}( x ){\rm{d}}x}  = \xi P. \nonumber
\end{eqnarray}
By employing the variational method, we can derive the input PDF as \cite{BIB28}
\begin{equation}
{f_X}(x) = {e^{cx + b - 1}},
\label{eq74}
\end{equation}
where $b$ can $c$ are two free parameters.

When $c = 0$, substitute (\ref{eq74}) into the constraints in (\ref{eq73}), we can obtain the input PDF as \cite{BIB28}
\begin{eqnarray}
{f_X}(x) = \left\{ \begin{array}{l}
\frac{1}{A},\;x \in [0,A]\\
0,\;\;{\rm{otherwise}}
\end{array} \right..
\label{eq75}
\end{eqnarray}
In this case, ${E_X}\left( X \right) = A/2 = \xi P$ , we have $\alpha  = 0.5$.
Then, we can obtain ${\cal H}\left( X \right)$, ${\mathop{\rm var}} \left( {{Y_{\rm{E}}}} \right)$ and ${E_X}\left[ {\ln \left( {\frac{{1 + {H_{\rm{E}}}\varsigma _{\rm{E}}^2X}}{{1 + {H_{\rm{B}}}\varsigma _{\rm{B}}^2X}}} \right)} \right]$ in (\ref{eq13}) as
\begin{equation}
{\cal H}(X) = \ln (A),
\label{eq76}
\end{equation}
\begin{eqnarray}
{\mathop{\rm var}} \left( {{Y_{\rm{E}}}} \right) 
 &=& H_{\rm{E}}^2{\rm{var}}\left( X \right) + {H_{\rm{E}}}{\mathop{\rm var}} \left( {\sqrt X {Z_{{\rm{E}},1}}} \right){\rm{ + }}{\mathop{\rm var}} \left( {{Z_{{\rm{E,0}}}}} \right) \nonumber\\
 &=& \frac{{H_{\rm{E}}^2{A^2}}}{{12}} + \frac{A}{2}{H_{\rm{E}}}\varsigma _{\rm{E}}^2\sigma _{\rm{E}}^2 + \sigma _{\rm{E}}^2,
\label{eq77}
\end{eqnarray}
and
\begin{eqnarray}
{E_X}\left[ {\ln \left( {\frac{{1 + {H_{\rm{E}}}\varsigma _{\rm{E}}^2X}}{{1 + {H_{\rm{B}}}\varsigma _{\rm{B}}^2X}}} \right)} \right] \!\!\!\!\!&=&\!\!\!\!\! \int_0^A {\frac{1}{A} \cdot \ln \left( {\frac{{1 + {H_{\rm{E}}}\varsigma _{\rm{E}}^2x}}{{1 + {H_{\rm{B}}}\varsigma _{\rm{B}}^2x}}} \right){\rm{d}}x} \nonumber \\
 &=&\!\!\!\!\! \frac{1}{A}\!\left[\! {A\ln \!\left(\! {\frac{{1 + {H_{\rm{E}}}\varsigma _{\rm{E}}^2A}}{{1 + {H_{\rm{B}}}\varsigma _{\rm{B}}^2A}}} \!\right) \!-\! \int_0^A {\!\left( {\frac{1}{{1 + {H_{\rm{B}}}\varsigma _{\rm{B}}^2x}} - \frac{1}{{1 + {H_{\rm{E}}}\varsigma _{\rm{E}}^2x}}} \!\right)\!{\rm{d}}x} } \!\right] \nonumber \\
 &=&\!\!\!\!\! \ln \left( {\frac{{1 + {H_{\rm{E}}}\varsigma _{\rm{E}}^2A}}{{1 + {H_{\rm{B}}}\varsigma _{\rm{B}}^2A}}} \right) - \frac{{\ln \left( {1 + {H_{\rm{B}}}\varsigma _{\rm{B}}^2A} \right)}}{{A{H_{\rm{B}}}\varsigma _{\rm{B}}^2}} + \frac{{\ln \left( {1 + {H_{\rm{E}}}\varsigma _{\rm{E}}^2A} \right)}}{{A{H_{\rm{E}}}\varsigma _{\rm{E}}^2}}.
\label{eq78}
\end{eqnarray}
Substituting (\ref{eq76}), (\ref{eq77}) and (\ref{eq78}) into (\ref{eq13}), we obtain the lower bound on secrecy capacity for $\alpha  = 0.5$.

When $c \ne 0$, we have $\alpha  \ne 0.5$.
Substituting (\ref{eq74}) into the constraints in (\ref{eq73}), we can derive the input PDF as \cite{BIB28}
\begin{eqnarray}
{f_X}(x) = \left\{ \begin{array}{l}
\frac{{c{e^{cx}}}}{{{e^{cA}} - 1}},x \in [0,A]\\
0,\;\;\;\,\quad {\rm{otherwise}}
\end{array} \right.,
\label{eq79}
\end{eqnarray}
where $c$ is the solution to (\ref{eq29}). Then, we can obtain
\begin{eqnarray}
H\left( X \right) &=&  - \int_0^A {{f_X}\left( x \right)\ln \left( {\frac{{c{e^{cx}}}}{{{e^{cA}} - 1}}} \right){\rm{d}}x} \nonumber \\
 &=& \ln \left( {\frac{{{e^{cA}} - 1}}{c}} \right) - c\xi P,
\label{eq80}
\end{eqnarray}
\begin{eqnarray}
{\mathop{\rm var}} \left( {{Y_{\rm{E}}}} \right) &=& H_{\rm{E}}^2{\rm{var}}\left( X \right) + {H_{\rm{E}}}{\mathop{\rm var}} \left( {\sqrt X {Z_{{\rm{E}},1}}} \right){\rm{ + }}{\mathop{\rm var}} \left( {{Z_{{\rm{E,0}}}}} \right) \nonumber\\
 &=& H_{\rm{E}}^2\left[ {\frac{{A\left( {cA - 2} \right)}}{{c\left( {1 - {e^{ - cA}}} \right)}} + \frac{2}{{{c^2}}} - {\xi ^2}{P^2}} \right] + {H_{\rm{E}}}\xi P\varsigma _{\rm{E}}^2\sigma _{\rm{E}}^2 + \sigma _{\rm{E}}^2,
\label{eq81}
\end{eqnarray}
and
\begin{eqnarray}
{E_X}\left[ {\ln \left( {\frac{{1 + {H_{\rm{E}}}\varsigma _{\rm{E}}^2X}}{{1 + {H_{\rm{B}}}\varsigma _{\rm{B}}^2X}}} \right)} \right] \!\!\!\!\!&=&\!\!\!\!\! \int_0^A {\ln \left( {\frac{{1 + {H_{\rm{E}}}\varsigma _{\rm{E}}^2x}}{{1 + {H_{\rm{B}}}\varsigma _{\rm{B}}^2x}}} \right) \cdot \frac{{c{e^{cx}}}}{{{e^{cA}} - 1}}{\rm{d}}x} \nonumber \\
 &=&\!\!\!\!\! \frac{1}{{{e^{cA}} - 1}}\left\{ {\ln \left( {\frac{{1 + {H_{\rm{E}}}\varsigma _{\rm{E}}^2A}}{{1 + {H_{\rm{B}}}\varsigma _{\rm{B}}^2A}}} \right){e^{cA}}} \right. \nonumber \\
 &&\!\!\!\!\!- {e^{ - \frac{c}{{{H_{\rm{E}}}\varsigma _{\rm{E}}^2}}}}\left[ {Ei\left( {\frac{{c\left( {1 + {H_{\rm{E}}}\varsigma _{\rm{E}}^2A} \right)}}{{{H_{\rm{E}}}\varsigma _{\rm{E}}^2}}} \right) - Ei\left( {\frac{c}{{{H_{\rm{E}}}\varsigma _{\rm{E}}^2}}} \right)} \right] \nonumber \\
&&\!\!\!\!\!\left. { + {e^{ - \frac{c}{{{H_{\rm{B}}}\varsigma _{\rm{B}}^2}}}}\left[ {Ei\left( {\frac{{c\left( {1 + {H_{\rm{B}}}\varsigma _{\rm{B}}^2A} \right)}}{{{H_{\rm{B}}}\varsigma _{\rm{B}}^2}}} \right) - Ei\left( {\frac{c}{{{H_{\rm{B}}}\varsigma _{\rm{B}}^2}}} \right)} \right]} \right\}.
\label{eq82}
\end{eqnarray}
Substituting (\ref{eq80}), (\ref{eq81}) and (\ref{eq82}) into (\ref{eq13}), we obtain the lower bound on secrecy capacity for $\alpha  \ne 0.5$. \quad\quad\quad\quad\quad\quad\quad\quad\quad\quad\quad\quad\quad\quad\quad\quad\quad\quad\quad\quad\quad\quad\quad\quad\quad\quad\quad\quad\quad\quad\quad\quad\quad\quad\;\;$\blacksquare$

\section{Proof of \emph{Theorem \ref{the4}}}
\label{Appendix_E}
\emph{1) Proof of monotonicity:} When $\alpha  = 0.5$, the conclusion is obvious.
In the following, we will prove the conclusion when $\alpha  \ne 0.5$.
When $\alpha  \ne 0.5$, we let $F\left( c \right) = \frac{1}{{1 - {e^{ - cA}}}} - \frac{1}{{cA}},\,c \ne 0$ in (\ref{eq29}).
Taking the first-order derivative of $F\left( c \right)$, we can obtain
\begin{eqnarray}
\frac{{{\rm{d}}F\left( c \right)}}{{{\rm{d}}c}} 
 &=& \frac{{\left( {1 - {e^{ - cA}} + cA{e^{ - \frac{{cA}}{2}}}} \right)\left( {1 - {e^{ - cA}} - cA{e^{ - \frac{{cA}}{2}}}} \right)}}{{{c^2}A{{\left( {1 - {e^{ - cA}}} \right)}^2}}}.
\label{eq83}
\end{eqnarray}
When $c > 0$, we have $1 - {e^{ - cA}} + cA{e^{ - \frac{{cA}}{2}}} > 0$ and ${c^2}A{\left( {1 - {e^{ - cA}}} \right)^2} > 0$ .
Then, we let $G(c) = 1 - {e^{ - cA}} - cA{e^{ - \frac{{cA}}{2}}},\,c > 0$. Whether the value of (\ref{eq83}) is positive or negative depends on $G(c)$.
Taking the first-order derivative of $G(c)$, we have
\begin{eqnarray}
\frac{{{\rm{d}}G\left( c \right)}}{{{\rm{d}}c}} 
 &=& A{e^{ - \frac{{cA}}{2}}}\left( {{e^{ - \frac{{cA}}{2}}} - 1 + \frac{{cA}}{2}} \right).
\label{eq84}
\end{eqnarray}
Let $K(c) = {e^{ - \frac{{cA}}{2}}} - 1 + \frac{{cA}}{2},\,c > 0$, we have
\begin{eqnarray}
\frac{{{\rm{d}}K\left( c \right)}}{{{\rm{d}}c}} 
 &=&  - \frac{A}{2}{e^{ - \frac{{cA}}{2}}} + \frac{A}{2} > 0.
\label{eq85}
\end{eqnarray}
This indicates that $K(c)$ is an increasing function of $c$.
Moreover, we have $K(0) = 0$. As a result, $K(c) > 0,\,\forall c > 0$.
Then, in (\ref{eq84}), we have ${\rm{d}}G\left( c \right)/{\rm{d}}c > 0$ and $G(0) = 0$, and thus $G(c) > 0,\,\forall c > 0$.
After that, in (\ref{eq83}), we have ${\rm{d}}F\left( c \right)/{\rm{d}}c > 0$, which indicates that $F\left( c \right)$ is an increasing function of $c$ when $c > 0$.
Moreover, by using L'Hospita's rule, we have
\begin{eqnarray}
\mathop {\lim }\limits_{c \to 0} F\left( c \right) &=& \mathop {\lim }\limits_{c \to 0} \left( {\frac{{cA - 1 + {e^{ - cA}}}}{{cA\left( {1 - {e^{ - cA}}} \right)}}} \right) \nonumber \\
 &=& 0.5.
\label{eq86}
\end{eqnarray}

Therefore, we have
\begin{equation}
F\left( c \right) = \alpha  > 0.5,\,{\rm{if}}\,c > 0.
\label{eq87}
\end{equation}
According to (\ref{eq87}), we know that when $\alpha  \in (0.5,1]$, $c > 0$ and thus the PDF (\ref{eq79}) is an increasing function of $x$ in $[0,A]$.
By using the similar analysis approach, we can also prove that when $\alpha  \in (0,0.5)$, $c < 0$ and thus the PDF (\ref{eq79}) is a decreasing function of $x$ in $[0,A]$.

\emph{2) Proof of symmetry:} For a fixed $\alpha $ value, the input PDF in (\ref{eq79}) is re-denoted by ${f_X}(x,\alpha )$ to facilitate the notation.
When $x \in [0,A]$, we have
\begin{eqnarray}
{f_X}\left( {A - x,1 - \alpha } \right) &=& \frac{{{c_2}{e^{{c_2}\left( {A - x} \right)}}}}{{{e^{{c_2}A}} - 1}} \nonumber\\
 &=& \frac{{\left( { - {c_2}} \right){e^{\left( { - {c_2}} \right)x}}}}{{{e^{\left( { - {c_2}} \right)A}} - 1}},
\label{eq88}
\end{eqnarray}
where ${c_2}$ is the solution of the following equation
\begin{equation}
1 - \alpha  = \frac{1}{{1 - {e^{\left( { - {c_2}} \right)A}}}} + \frac{1}{{\left( { - {c_2}} \right)A}}.
\label{eq89}
\end{equation}
Then, eq. (\ref{eq89}) can be further written as
\begin{eqnarray}
\alpha  &=& 1 - \frac{1}{{1 - {e^{\left( { - {c_2}} \right)A}}}} - \frac{1}{{\left( { - {c_2}} \right)A}} \nonumber \\
 &=& \frac{1}{{1 - {e^{{c_2}A}}}} - \frac{1}{{\left( { - {c_2}} \right)A}}.
\label{eq90}
\end{eqnarray}
According to (\ref{eq29}) and (\ref{eq90}), we have ${c_2} =  - c$.
Then, eq. (\ref{eq88}) can be further expressed as
\begin{equation}
{f_X}\left( {A - x,1 - \alpha } \right) = \frac{{c{e^{cx}}}}{{{e^{cA}} - 1}} = {f_X}\left( {x,\alpha } \right).
\label{eq91}
\end{equation}
This indicates that the curves of maxentropic input PDF (\ref{eq79}) with $\alpha $ and $1 - \alpha $ are symmetric with respect to $X = A/2$. \quad\quad\quad\quad\quad\quad\quad\quad\quad\quad\quad\quad\quad\quad\quad\quad\quad\quad\quad\quad\quad\quad\quad\quad\quad\quad\quad\quad\quad\;\;$\blacksquare$

\section{Proof of \emph{Theorem \ref{the5}}}
\label{Appendix_F}
In this scenario, eq. (\ref{eq44}) also holds, where ${I_1}$ can also be expressed as (\ref{eq51}).
To obtain ${I_2}$ in (\ref{eq44}), we select ${g_{{Y_{\rm{B}}}|{Y_{\rm E}}}}({y_{\rm{B}}}|{y_{\rm{E}}})$ as
\begin{equation}
{g_{{Y_{\rm{B}}}|{Y_{\rm E}}}}({y_{\rm{B}}}|{y_{\rm{E}}}) = \frac{1}{{\sqrt {2\pi {s^2}} }}{e^{ - \frac{{{{\left( {{y_{\rm{B}}} - \mu {y_{\rm{E}}}} \right)}^2}}}{{2{s^2}}}}},
\label{eq92}
\end{equation}
where $\mu $ and $s$ are free parameters to be determined.
Then, ${I_2}$ is given by
\begin{eqnarray}
{I_2} &=&  - {E_{{X^*}{Y_{\rm{B}}}}}\left\{ {\int_{ - \infty }^\infty  {{f_{{Y_{\rm{E}}}|X{Y_{\rm{B}}}}}({y_{\rm{E}}}|X,{Y_{\rm{B}}})\ln \left[ {\frac{{{e^{ - \frac{{{{\left( {{Y_{\rm{B}}} - \mu {y_{\rm{E}}}} \right)}^2}}}{{2{s^2}}}}}}}{{\sqrt {2\pi {s^2}} }}} \right]{\rm{d}}{y_{\rm{E}}}} } \right\} \nonumber\\
 &=& {E_{{X^*}{Y_{\rm{B}}}}}\left[ {\frac{1}{2}\ln \left( {2\pi {s^2}} \right) + \frac{1}{{2{s^2}}}\underbrace {\int_{ - \infty }^\infty  {\frac{{{e^{ - \frac{{{{\left( {{y_{\rm{E}}} - \frac{{{H_{\rm{E}}}}}{{{H_{\rm{B}}}}}{Y_{\rm{B}}}} \right)}^2}}}{{2\left( {MX + N} \right)}}}}}}{{\sqrt {2\pi \left( {MX + N} \right)} }}{{\left( {{Y_{\rm{B}}} - \mu {y_{\rm{E}}}} \right)}^2}{\rm{d}}{y_{\rm{E}}}} }_{{C_4}}} \right],
\label{eq93}
\end{eqnarray}
where ${C_4}$ can be expressed as
\begin{eqnarray}
{C_4} &=& \int_{ - \infty }^\infty  {\frac{1}{{\sqrt {2\pi } }}{e^{ - \frac{{{t^2}}}{2}}}{{\left[ {\left( {1 - \mu \frac{{{H_{\rm{E}}}}}{{{H_{\rm{B}}}}}} \right){Y_{\rm{B}}} - \mu \sqrt {MX + N} t} \right]}^2}{\rm{d}}t} \nonumber \\
 &=& {\left( {1 - \mu \frac{{{H_{\rm{E}}}}}{{{H_{\rm{B}}}}}} \right)^2}Y_{\rm{B}}^2 + {\mu ^2}\left( {MX + N} \right).
\label{eq94}
\end{eqnarray}

Substituting (\ref{eq94}) into (\ref{eq93}), we can obtain ${I_2}$ as
\begin{eqnarray}
{I_2} \!\!\!\!\!
 &=&\!\!\!\!\! {E_{{X^*}}}\!\left\{\! {\frac{1}{2}\ln \!\left( {2\pi {s^2}} \right) \!+\! \frac{1}{{2{s^2}}}\!\!\left[\! {{{\left(\! {1 \!-\! \mu \frac{{{H_{\rm{E}}}}}{{{H_{\rm{B}}}}}} \!\right)}^2}\!\int_{ - \infty }^\infty  {\frac{{y_{\rm{B}}^2{e^{ - \frac{{{{\left( {{y_{\rm{B}}} - {H_{\rm{B}}}X} \right)}^2}}}{{2\left( {1 \!+\! {H_{\rm{B}}}X\varsigma _{\rm{B}}^2} \right)\sigma _{\rm{B}}^2}}}}}}{{\sqrt {2\pi \left( {1 \!+\! {H_{\rm{B}}}X\varsigma _{\rm{B}}^2} \right)\sigma _{\rm{B}}^2} }}{\rm{d}}{y_{\rm{B}}}}  \!+\! {\mu ^2}\!\left(\! {MX \!+\! N} \!\right)} \!\right]} \!\right\} \nonumber \\
 &=&\!\!\!\!\! {E_{{X^*}}}\!\left\{\! {\frac{1}{2}\ln \!\left( {2\pi {s^2}} \right) \!+\! \frac{1}{{2{s^2}}}\!\left[\! {{{\left(\! {1 \!-\! \mu \frac{{{H_{\rm{E}}}}}{{{H_{\rm{B}}}}}} \!\right)}^2}\!\left[ {\left( {1 \!+\! {H_{\rm{B}}}X\varsigma _{\rm{B}}^2} \right)\sigma _{\rm{B}}^2 \!+\! H_{\rm{B}}^2{X^2}} \right] \!+\! {\mu ^2}\!\left(\! {MX \!+\! N} \!\right)} \!\right]} \!\right\} \nonumber \\
 &\le&\!\!\!\!\! {E_{{X^*}}}\!\left\{ {\underbrace {\frac{1}{2}\ln \!\left( {2\pi {s^2}} \right) \!+\! \frac{1}{{2{s^2}}}\!\left[\! {{{\left(\! {1 \!-\! \mu \frac{{{H_{\rm{E}}}}}{{{H_{\rm{B}}}}}} \right)}^2}\!\!\left( {{H_{\rm{B}}}^2AX \!+\! {H_{\rm{B}}}X\varsigma _{\rm{B}}^2\sigma _{\rm{B}}^2 \!+\! \sigma _{\rm{B}}^2} \!\right) \!+\! {\mu ^2}\!\left( {MX \!+\! N} \!\right)} \!\right]}_{{C_5}}} \!\right\}\!,
\label{eq95}
\end{eqnarray}
where the last inequality holds because $X \le A$.

Taking the first partial derivative of ${C_5}$ with respect to ${s^2}$, and let it be zero, we can obtain the minimum point as
\begin{equation}
{s^2} = {\left( {1 - \mu \frac{{{H_{\rm{E}}}}}{{{H_{\rm{B}}}}}} \right)^2}\left( {{H_{\rm{B}}}^2AX + {H_{\rm{B}}}X\varsigma _{\rm{B}}^2\sigma _{\rm{B}}^2 + \sigma _{\rm{B}}^2} \right) + {\mu ^2}\left( {MX + N} \right).
\label{eq96}
\end{equation}
Then, substituting (\ref{eq96}) into (\ref{eq95}), we can rewrite ${C_5}$ as
\begin{equation}
{C_5} = \frac{1}{2}\ln \left\{ {2\pi e\left[ {{{\left( {1 - \mu \frac{{{H_{\rm{E}}}}}{{{H_{\rm{B}}}}}} \right)}^2}\left( {{H_{\rm{B}}}^2AX + {H_{\rm{B}}}X\varsigma _{\rm{B}}^2\sigma _{\rm{B}}^2 + \sigma _{\rm{B}}^2} \right) + {\mu ^2}\left( {MX + N} \right)} \right]} \right\}.
\label{eq97}
\end{equation}

Substituting (\ref{eq51}), (\ref{eq95}) and (\ref{eq97}) into (\ref{eq44}), we can obtain the upper bound on secrecy capacity as
\begin{eqnarray}
{C_s} \!\!\!&\le&\!\!\!  - \frac{1}{2}\ln \left( {2\pi e\frac{{\sigma _{\rm{B}}^2}}{{\sigma _{\rm{E}}^2}}} \right) - \frac{1}{2}{E_{{X^*}}}\left\{ {\ln \left( {MX + N} \right) + \ln \left( {\frac{{1 + {H_{\rm{B}}}X\varsigma _{\rm{B}}^2}}{{1 + {H_{\rm{E}}}X\varsigma _{\rm{E}}^2}}} \right)} \right\} \nonumber \\
 &&\!\!\!+ \frac{1}{2}{E_{{X^*}}}\left\{ {\ln \left[ {2\pi e\left( {{{\left( {1 - \mu \frac{{{H_{\rm{E}}}}}{{{H_{\rm{B}}}}}} \right)}^2}\left( {{H_{\rm{B}}}^2AX + {H_{\rm{B}}}X\varsigma _{\rm{B}}^2\sigma _{\rm{B}}^2 + \sigma _{\rm{B}}^2} \right) + {\mu ^2}\left( {MX + N} \right)} \!\right)} \!\right]} \!\right\} \nonumber \\
 &=&\!\!\!  - \frac{1}{2}\ln \left( {2\pi e\frac{{\sigma _{\rm{B}}^2}}{{\sigma _{\rm{E}}^2}}} \right) - \frac{1}{2}{E_{{X^*}}}\left\{ {\ln \left( {\frac{{1 + {H_{\rm{B}}}X\varsigma _{\rm{B}}^2}}{{1 + {H_{\rm{E}}}X\varsigma _{\rm{E}}^2}}} \right)} \right\} \nonumber \\
 &&\!\!\!+ \frac{1}{2}{E_{{X^*}}}\left\{ {\ln \left[ {2\pi e\left( {{{\left( {1 - \mu \frac{{{H_{\rm{E}}}}}{{{H_{\rm{B}}}}}} \right)}^2}\frac{{{H_{\rm{B}}}^2AX + {H_{\rm{B}}}X\varsigma _{\rm{B}}^2\sigma _{\rm{B}}^2 + \sigma _{\rm{B}}^2}}{{MX + N}} + {\mu ^2}} \right)} \right]} \right\}.
\label{eq98}
\end{eqnarray}
By Jenson's inequality for the convex function $\ln ( \cdot )$, we further have
\begin{eqnarray}
{C_s} &\le&  - \frac{1}{2}\ln \left( {2\pi e\frac{{\sigma _{\rm{B}}^2}}{{\sigma _{\rm{E}}^2}}} \right) - \frac{1}{2}\ln \left[ {{E_{{X^*}}}\left( {\frac{{1 + {H_{\rm{B}}}X\varsigma _{\rm{B}}^2}}{{1 + {H_{\rm{E}}}X\varsigma _{\rm{E}}^2}}} \right)} \right] \nonumber \\
 &&+ \frac{1}{2}\ln \left[ {2\pi e\left( {{{\left( {1 - \mu \frac{{{H_{\rm{E}}}}}{{{H_{\rm{B}}}}}} \right)}^2}{E_{{X^*}}}\left( {\frac{{{H_{\rm{B}}}^2AX + {H_{\rm{B}}}X\varsigma _{\rm{B}}^2\sigma _{\rm{B}}^2 + \sigma _{\rm{B}}^2}}{{MX + N}}} \right) + {\mu ^2}} \right)} \right].
\label{eq99}
\end{eqnarray}
Because we are more interested in the performance at large optical intensity.
As the peak optical intensity $A$ is loosened to infinity, the secrecy-capacity-achieving input PDF will escape to infinity \cite{BIB16}, \cite{BIB34}.
This indicates that when $A$ approaches infinity, the probability for any set of finite-intensity input symbols will tend to zero.
Therefore, when $A$ tends to infinity, we have
\begin{eqnarray}
\left\{ \begin{array}{l}
{E_{{X^*}}}\left( {\frac{{1 + {H_{\rm{B}}}X\varsigma _{\rm{B}}^2}}{{1 + {H_{\rm{E}}}X\varsigma _{\rm{E}}^2}}} \right) =  \frac{{1 + {H_{\rm{B}}}\varsigma _{\rm{B}}^2A}}{{1 + {H_{\rm{E}}}\varsigma _{\rm{E}}^2A}}\\
{E_{{X^*}}}\left( {\frac{{{H_{\rm{B}}}^2AX + {H_{\rm{B}}}X\varsigma _{\rm{B}}^2\sigma _{\rm{B}}^2 + \sigma _{\rm{B}}^2}}{{MX + N}}} \right) = \frac{{H_{\rm{B}}^2 A^2 + {H_{\rm{B}}}A\varsigma _{\rm{B}}^2\sigma _{\rm{B}}^2 + \sigma _{\rm{B}}^2}}{{MA + N}}
\end{array} \right..
\label{eq100}
\end{eqnarray}
By using L'Hospital's rule, we have
\begin{eqnarray}
\left\{ \begin{array}{l}
\mathop {\lim }\limits_{A \to \infty } \frac{{1 + {H_{\rm{B}}}\varsigma _{\rm{B}}^2A}}{{1 + {H_{\rm{E}}}\varsigma _{\rm{E}}^2A}} = \frac{{{H_{\rm{B}}}\varsigma _{\rm{B}}^2}}{{{H_{\rm{E}}}\varsigma _{\rm{E}}^2}}\\
\mathop {\lim }\limits_{A \to \infty } \frac{{H_{\rm{B}}^2A^2 + {H_{\rm{B}}}A\varsigma _{\rm{B}}^2\sigma _{\rm{B}}^2 + \sigma _{\rm{B}}^2}}{{MA + N}} = \frac{{H_{\rm{B}}^2A + {H_{\rm{B}}}\varsigma _{\rm{B}}^2\sigma _{\rm{B}}^2}}{M}
\end{array} \right..
\label{eq101}
\end{eqnarray}
According to (\ref{eq99}), (\ref{eq100}) and (\ref{eq101}), the upper bound of secrecy capacity can be asymptotically expressed as
\begin{equation}
{C_s} \le  - \frac{1}{2}\ln \left( {2\pi e\frac{{{H_{\rm{B}}}\varsigma _{\rm{B}}^2\sigma _{\rm{B}}^2}}{{{H_{\rm{E}}}\varsigma _{\rm{E}}^2\sigma _{\rm{E}}^2}}} \right) + \frac{1}{2}\ln \left[ {2\pi e\underbrace {\left( {{{\left( {1 - \mu \frac{{{H_{\rm{E}}}}}{{{H_{\rm{B}}}}}} \right)}^2}\frac{{H_{\rm{B}}^2A + {H_{\rm{B}}}\varsigma _{\rm{B}}^2\sigma _{\rm{B}}^2}}{M} + {\mu ^2}} \right)}_{{C_6}}} \right].
\label{eq102}
\end{equation}

Taking the first partial derivative of ${C_6}$ with respect to $\mu $, and let it to be zero, we have
\begin{equation}
\mu  = \frac{{\frac{{{H_{\rm{E}}}}}{{{H_{\rm{B}}}}}\frac{{H_{\rm{B}}^2A + {H_{\rm{B}}}\varsigma _{\rm{B}}^2\sigma _{\rm{B}}^2}}{M}}}{{1 + {{\left( {\frac{{{H_{\rm{E}}}}}{{{H_{\rm{B}}}}}} \right)}^2}\frac{{H_{\rm{B}}^2A + {H_{\rm{B}}}\varsigma _{\rm{B}}^2\sigma _{\rm{B}}^2}}{M}}}.
\label{eq103}
\end{equation}
Substituting (\ref{eq103}) into (\ref{eq102}), we can derive (\ref{eq31}). \quad\quad\quad\quad\quad\quad\quad\quad\quad\quad\quad\quad\quad\quad\quad\quad\quad\;\;$\blacksquare$

\section{Proof of \emph{Corollary \ref{cor6}}}
\label{Appendix_G}
The average to peak optical intensity ratio $\alpha  \in \left( {0,1} \right]$ is a positive constant, and $\alpha A = \xi P$.
When $A$ tends to infinity, $P$ also tends to infinity.
When $\alpha  = 0.5$, we have
\begin{equation}
\mathop {\lim }\limits_{A \to \infty } {f_{{\rm{low}}}}\left( {{H_{\rm{B}}},\xi ,P} \right) = \mathop {\lim }\limits_{P \to \infty } {f_{{\rm{low}}}}\left( {{H_{\rm{B}}},\xi ,P} \right) = \ln \left( {{H_{\rm{B}}}} \right).
\label{eq104}
\end{equation}
By using L'Hospital's rule, we have
\begin{eqnarray}
\left\{ \begin{array}{l}
\mathop {\lim }\limits_{A \to \infty } \frac{1}{2}\ln \left( {\frac{{1 + {H_{\rm{E}}}\varsigma _{\rm{E}}^2A}}{{1 + {H_{\rm{B}}}\varsigma _{\rm{B}}^2A}}} \right) = \frac{1}{2}\ln \left( {\frac{{{H_{\rm{E}}}\varsigma _{\rm{E}}^2}}{{{H_{\rm{B}}}\varsigma _{\rm{B}}^2}}} \right)\\
\mathop {\lim }\limits_{A \to \infty } \frac{{\ln \left( {1 + {H_{\rm{B}}}\varsigma _{\rm{B}}^2A} \right)}}{{2A{H_{\rm{B}}}\varsigma _{\rm{B}}^2}} = 0\\
\mathop {\lim }\limits_{A \to \infty } \frac{{\ln \left( {1 + {H_{\rm{E}}}\varsigma _{\rm{E}}^2A} \right)}}{{2A{H_{\rm{E}}}\varsigma _{\rm{E}}^2}} = 0\\
\mathop {\lim }\limits_{A \to \infty } \frac{1}{2}\ln \left( {\frac{{6{A^2}\sigma _{\rm{E}}^2}}{{\pi e\sigma _{\rm{B}}^2\left( {H_{\rm{E}}^2{A^2} + 6A{H_{\rm{E}}}\varsigma _{\rm{E}}^2\sigma _{\rm{E}}^2 + 12\sigma _{\rm{E}}^2} \right)}}} \right) = \frac{1}{2}\ln \left( {\frac{{6\sigma _{\rm{E}}^2}}{{\pi e\sigma _{\rm{B}}^2H_{\rm{E}}^2}}} \right).
\end{array} \right.
\label{eq105}
\end{eqnarray}
According (\ref{eq104}) and (\ref{eq105}), we can obtain the asymptotic lower bound when $\alpha  = 0.5$ as
\begin{equation}
\mathop {\lim }\limits_{A \to \infty } {C'_{{\rm{Low}}}} = \frac{1}{2}\ln \left( {\frac{{6{H_{\rm{B}}}\varsigma _{\rm{E}}^2\sigma _{\rm{E}}^2}}{{\pi e{H_{\rm{E}}}\varsigma _{\rm{B}}^2\sigma _{\rm{B}}^2}}} \right).
\label{eq106}
\end{equation}

When $\alpha  \ne 0.5$, we can also obtain (\ref{eq104}).
Moreover, according to (\ref{eq82}), (\ref{eq100}) and (\ref{eq101}), we have
\begin{eqnarray}
&&\mathop {\lim }\limits_{A \to \infty } \frac{1}{{2\left( {{e^{cA}} - 1} \right)}}\!\left\{ \!{\ln \!\left( {\frac{{1 + {H_{\rm{E}}}\varsigma _{\rm{E}}^2A}}{{1 + {H_{\rm{B}}}\varsigma _{\rm{B}}^2A}}} \!\right)\!{e^{cA}}} \right. \!-\! {e^{ - \frac{c}{{{H_{\rm{E}}}\varsigma _{\rm{E}}^2}}}}\!\left[ \!{Ei\!\left(\! {\frac{c}{{{H_{\rm{E}}}\varsigma _{\rm{E}}^2}}\left( {1 \!+\! {H_{\rm{E}}}\varsigma _{\rm{E}}^2A} \right)} \!\right) \!-\! Ei\!\left(\! {\frac{c}{{{H_{\rm{E}}}\varsigma _{\rm{E}}^2}}} \!\right)} \!\right] \nonumber \\
&&\quad\left. { + {e^{ - \frac{c}{{{H_{\rm{B}}}\varsigma _{\rm{B}}^2}}}}\left[ {Ei\left( {\frac{c}{{{H_{\rm{B}}}\varsigma _{\rm{B}}^2}}\left( {1 + {H_{\rm{B}}}\varsigma _{\rm{B}}^2A} \right)} \right) - Ei\left( {\frac{c}{{{H_{\rm{B}}}\varsigma _{\rm{B}}^2}}} \right)} \right]} \right\} \nonumber \\
 &&= \mathop {\lim }\limits_{A\to \infty } \frac{1}{2}{E_X}\left[ {\ln \left( {\frac{{1 + {H_{\rm{E}}}\varsigma _{\rm{E}}^2X}}{{1 + {H_{\rm{B}}}\varsigma _{\rm{B}}^2X}}} \right)} \right] \nonumber \\
 &&= \frac{1}{2}\ln \left( {\frac{{{H_{\rm{E}}}\varsigma _{\rm{E}}^2}}{{{H_{\rm{B}}}\varsigma _{\rm{B}}^2}}} \right).
\label{eq107}
\end{eqnarray}
Therefore, the asymptotic lower bound when $\alpha  \ne 0.5$, we have
\begin{eqnarray}
\mathop {\lim }\limits_{A \to \infty } {C'_{{\rm{Low}}}} \!\!\!\!\!&=&\!\!\!\!\! \mathop {\lim }\limits_{A \to \infty } \left\{ {\frac{1}{2}\ln \left( {\frac{{{H_{\rm{B}}}{H_{\rm{E}}}\varsigma _{\rm{E}}^2\sigma _{\rm{E}}^2}}{{\varsigma _{\rm{B}}^2\sigma _{\rm{B}}^2}}\frac{{{{({e^{cA}} - 1)}^2}}}{{{c^2}}}} \right) - c\xi P} \right. \nonumber \\
&&\!\!\!\!\!\left. { - \frac{1}{2}\ln \left[ {2\pi e\left( {H_{\rm{E}}^2\left( {\frac{{A\left( {cA - 2} \right)}}{{c\left( {1 - {e^{ - cA}}} \right)}} + \frac{2}{{{c^2}}} - {\xi ^2}{P^2}} \right) + {H_{\rm{E}}}\xi P\varsigma _{\rm{E}}^2\sigma _{\rm{E}}^2 + \sigma _{\rm{E}}^2} \right)} \right]} \right\} \nonumber \\
 &=&\!\!\!\!\! \mathop {\lim }\limits_{A \to \infty } \!\frac{1}{2}\!\ln \!\!\left[\! {\frac{{{H_{\rm{B}}}{H_{\rm{E}}}\varsigma _{\rm{E}}^2\sigma _{\rm{E}}^2{{({e^{cA}} - 1)}^2}}}{{2\pi e{c^2}\varsigma _{\rm{B}}^2\sigma _{\rm{B}}^2{e^{2c\xi P}}\!\left[ {H_{\rm{E}}^2\!\left( {\frac{{A\left( {cA - 2} \right)}}{{c\left( {1 - {e^{ - cA}}} \right)}} \!+\! \frac{2}{{{c^2}}} \!-\! {\xi ^2}{P^2}} \right) \!\!+\!\! {H_{\rm{E}}}\xi P\varsigma _{\rm{E}}^2\sigma _{\rm{E}}^2 \!+\!\! \sigma _{\rm{E}}^2} \right]}}} \!\!\right]\!.
\label{eq108}
\end{eqnarray}

For the asymptotic upper bound, we have
\begin{eqnarray}
\mathop {\lim }\limits_{A \to \infty } {C'_{{\rm{Upp}}}} &=& \mathop {\lim }\limits_{A \to \infty } \frac{1}{2}\ln \left[ {\frac{{{H_{\rm{E}}}\varsigma _{\rm{E}}^2\sigma _{\rm{E}}^2\left( {{H_{\rm{B}}}A + \varsigma _{\rm{B}}^2\sigma _{\rm{B}}^2} \right)}}{{\varsigma _{\rm{B}}^2\sigma _{\rm{B}}^2\left( {H_{\rm{E}}^2A + \frac{{H_{\rm{E}}^2}}{{{H_{\rm{B}}}}}\varsigma _{\rm{B}}^2\sigma _{\rm{B}}^2 + M} \right)}}} \right] \nonumber\\
 &=& \frac{1}{2}\ln \left( {\frac{{{H_{\rm{B}}}\varsigma _{\rm{E}}^2\sigma _{\rm{E}}^2}}{{{H_{\rm{E}}}\varsigma _{\rm{B}}^2\sigma _{\rm{B}}^2}}} \right).
\label{eq109}
\end{eqnarray}
According to (\ref{eq106}), (\ref{eq108}) and (\ref{eq109}), we prove \emph{Corollary \ref{cor6}}. \quad\quad\quad\quad\quad\quad\quad\quad\quad\quad\quad\quad\quad\quad\quad$\blacksquare$


\begin{thebibliography}{1}
\baselineskip=5.6mm
\bibitem{BIB01}
M. A. Arfaoui, M. D. Soltani, I. Tavakkolnia, A. Ghrayeb, M. Safari, C. M. Assi, and H. Hass, ``Physical layer security for visible light communication systems: A survey,'' \emph{IEEE Commun. Surv. \& Tutor.}, vol. 22, no. 3, pp. 1887-1908, third quarter 2020.

\bibitem{BIB02}
S. Rajagopal, R. D. Roberts, and S.-K. Lim, ``IEEE 802.15.7 visible light communication: Modulation schemes and dimming support,'' \emph{IEEE Commun. Mag.}, vol. 50, no. 3, pp. 72-82, Mar. 2012.

\bibitem{BIB02_1}
C. E. Shannon, ``Communication in the presence of noise," \emph{Proc. IRE}, vol. 37, no. 1, pp. 10-21, Jan. 1949.


\bibitem{BIB03}
K.-I. Ahn and J. K. Kwon, ``Capacity analysis of M-PAM inverse source coding in visible light communications,'' \emph{J. Lightw. Technol.}, vol. 30, no. 10, pp. 1399-1404, May 2012.

\bibitem{BIB04}
J.-B. Wang, Q.-S. Hu, J. Wang, M. Chen, and J.-Y. Wang, ``Tight bounds on channel capacity for dimmable visible light communications,'' \emph{J. Lightw. Technol.}, vol. 31, no. 23, pp. 3771-3779, Dec. 2013.


\bibitem{BIB05}
R. Jiang, Z. Wang, Q. Wang, and L. Dai, ``A tight upper bound on channel capacity for visible light communications,'' \emph{IEEE Commun. Lett.}, vol. 20, no. 1, pp. 97-100, Jan. 2016.

\bibitem{BIB06}
J.-B. Wang, Q.-S. Hu, J. Wang, M. Chen, Y.-H. Huang, and J.-Y. Wang, ``Capacity analysis for dimmable visible light communications,'' in \emph{IEEE Int. Conf. Commun.}, Sydney, Australia, 2014.

\bibitem{BIB12}
S. Ma, R. Yang, Y. He, S. Lu, F. Zhou, N. Al-Dhahir, and S. Li, ``Achieving channel capacity of visible light communication,'' \emph{IEEE Syst. J.}, vol. 15, no. 2, pp. 1652-1663, June 2021.


\bibitem{BIB07}
J.-Y. Wang, J.-B. Wang, N. Huang, and M. Chen, ``Capacity analysis for pulse amplitude modulated visible light communications with dimming control,'' \emph{J. Opt. Soc. Amer.}, vol. 31, no. 3, pp. 561-568, Mar. 2014.

\bibitem{BIB08}
X. Bao, X. Zhu, T. Song, and Y. Ou, ``Protocol design and capacity analysis in hybrid network of visible light communication and OFDMA systems,'' \emph{IEEE Trans. Veh. Technol.}, vol. 63, no. 4, pp. 1770-1778, May 2014.

\bibitem{BIB09}
L. Jia, F. Shu, N. Huang, M. Chen, and J. Wang, ``Capacity and optimum signal constellations for VLC systems,'' \emph{J. Lightwave Technol.}, vol. 38, no. 8, pp. 2180-2189, Apr. 2020.

\bibitem{BIB10}
K. Xu, H.-Y. Yu, Y.-J. Zhu, and Y. Sun, ``On the ergodic channel capacity for indoor visible light communication systems,'' \emph{IEEE Access}, vol. 5, pp. 833-841, Jan. 2017.

\bibitem{BIB11}
S. Ma, H. Li, Y. He, R. Yang, S. Lu, W. Cao, and S. Li, ``Capacity bounds and interference management for interference channel in visible light communication networks,'' \emph{IEEE Trans. Wireless Commun.}, vol. 18, no. 1, pp. 182-193, Jan. 2019.



\bibitem{BIB13}
J. Grubor, S. Randel, K.-D. Langer, and J. Walewski, ``Broadband information broadcasting using LED-based interior lighting,'' \emph{J. Lightwave Technol.}, vol. 26, no. 24, pp. 3883-3892, Dec. 2008.

\bibitem{BIB14}
L. Hanzo, H. Haas, S. Imre, D. O'Brien, M. Rupp, and L. Gyongyosi, ``Wireless myths, realities, and futures: From 3G/4G to optical and quantum wireless,'' \emph{Proc. IEEE}, vol. 100, no. Special Centennial Issue, pp. 1853-1888, May 2012.

\bibitem{BIB15}
Q. Gao, C. Gong, and Z. Xu, ``Joint transceiver and offset design for visible light communications with input-dependent shot noise,'' \emph{IEEE Trans. Wireless Commun.}, vol. 16, no. 5, pp. 2736-2747, May 2017.

\bibitem{BIB16}
J.-Y. Wang, X.-T. Fu, R.-R. Lu, J.-B. Wang, M. Lin, and J. Cheng, ``Tight capacity bounds for indoor visible light communications with signal-dependent noise,'' \emph{IEEE Trans. Wireless Commun.}, vol. 20, no. 3, pp. 1700-1713, Mar. 2021.

\bibitem{BIB17}
C. Yeh, C. Chow, and L. Wei, ``1250 Mbit/s OOK wireless white-light VLC transmission based on phosphor laser diode,'' \emph{IEEE Photon. J.}, vol. 11, no. 3, pp. 1-5, June 2019.

\bibitem{BIB18}
Y. Wang, L. Tao, X. Huang, J. Shi, and N. Chi, ``8-Gb/s RGBY LED-based WDM VLC system employing high-order CAP modulation and hybrid post equalizer,'' \emph{IEEE Photon. J.}, vol. 7, no. 6, pp. 1-7, Dec. 2015.

\bibitem{BIB19}
S. Baig, H. M. Asif, T. Umer, S. Mumtaz, M. Shafiq, and J. Choi, ``High data rate discrete wavelet transform-based PLC-VLC design for 5G communication systems,'' \emph{IEEE Access}, vol. 6, pp. 52490-52499, 2018.

\bibitem{BIB20}
Y. Liang and H. V. Poor, ``Physical layer security in broadcast networks,'' \emph{Security Commun. Netw.}, vol. 2, no. 3, pp. 227-238, May 2009.

\bibitem{BIB21}
C. E. Shannon, ``Communication theory of secrecy systems,'' \emph{Bell Labs Tech. J.}, vol. 28, no. 4, pp. 656-715, Oct. 1949.

\bibitem{BIB22}
A. D. Wyner, ``The wire-tap channel,'' \emph{Bell Syst. Tech. J.}, vol. 54, no. 8, pp. 1355-1387, Oct. 1975.

\bibitem{BIB23}
S. Shafiee, N. Liu, and S. Ulukus, ``Towards the secrecy capacity of the Gaussian MIMO wiretap channel: The 2-2-1 channel,'' \emph{IEEE Trans. Inf. Theory}, vol. 55, no. 9, pp. 4033-4039, Sep. 2009.

\bibitem{BIB24}
R. Liu and H. Poor, ``Secrecy capacity region of a multiple-antenna Gaussian broadcast channel with confidential messages,'' \emph{IEEE Trans. Inf. Theory}, vol. 55, no. 3, pp. 1235-1249, Mar. 2009.

\bibitem{BIB25}
T. Liu and S. Shamai, ``A note on the secrecy capacity of the multiple antenna wiretap channel,'' \emph{IEEE Trans. Inf. Theory}, vol. 55, no. 6, pp. 2547-2553, Jun. 2009.

\bibitem{BIB26}
A. Khisti and G. W. Wornell, ``Secure transmission with multiple antennas - Part II: The MIMOME wiretap channel,'' \emph{IEEE Trans. Inf. Theory}, vol. 56, no. 11, pp. 5515-5532, Nov. 2010.


\bibitem{BIB27}
F. Oggier and B. Hassibi, ``The secrecy capacity of the MIMO wiretap channel,'' \emph{IEEE Trans. Inf. Theory}, vol. 57, no. 8, pp. 4961-4972, Aug. 2011.

\bibitem{BIB28}
J.-Y. Wang, C. Liu, J. Wang, Y. Wu, M. Lin, and J. Cheng, ``Physical-layer security for indoor visible light communications: Secrecy capacity analysis,'' \emph{IEEE Trans. Commun.}, vol. 66, no. 12, pp. 6423-6436, Dec. 2018.

\bibitem{BIB29}
J. Chen and T. Shu, ``Statistical modeling and analysis on the confidentiality of indoor VLC systems,'' \emph{IEEE Trans. Wireless Commun.}, vol. 19, no. 7, pp. 4744-4757, July 2020.

\bibitem{BIB30}
A. Mostafa and L. Lampe, ``Physical-layer security for MISO visible light communication channels,'' \emph{IEEE J. Sel. Areas Commun.}, vol. 33, no. 9, pp. 1806-1818, Sep. 2015.

\bibitem{BIB31}
L. Yin and H. Haas, ``Physical-layer security in multiuser visible light communication networks,'' \emph{IEEE J. Sel. Areas Commun.}, vol. 36, no. 1, pp. 162-174, Jan. 2018.

\bibitem{BIB31_1}
M. Obeed, A. M. Salhab, M.-S. Alouini, and S. A. Zummo, ``Survey on physical layer security in optical wireless communication systems," in \emph{Seventh Int. Conf. Commun. Newt.}, Hammamet, Tunisia, 2018.


\bibitem{BIB32}
M. Soltani and Z. Rezki, ``Optical wiretap channel with input-dependent Gaussian noise under peak- and average-intensity constraints,'' \emph{IEEE Trans. Inf. Theory}, vol. 64, no. 10, pp. 6878-6893, Oct. 2018.

\bibitem{BIB32_1}
M. Soltani and Z. Rezki, ``New results on the rate-equivocation region of the optical wiretap channel with input-dependent Gaussian noise with an average-intensity constraint," in \emph{Inf. Theory and Appl. Workshop}, San Diego, CA, USA, 2020.


\bibitem{BIB33}
A. Mostafa and L. Lampe, ``Optimal and robust beamforming for secure transmission in MISO visible-light communication links,'' \emph{IEEE Trans. Sig. Proces.}, vol. 64, no. 24, pp. 6501-6516, Dec. 2016.

\bibitem{BIB34}
S. M. Moser, ``Capacity results of an optical intensity channel with input-dependent Gaussian noise,'' \emph{IEEE Trans. Inf. Theory}, vol. 58, no. 1, pp. 207-223, Jan. 2012.

\bibitem{BIB35}
I. S. Gradshteyn and I. M. Ryzhik, \emph{Table of Integrals, Series, and Products}. 8th ed., Amsterdam: Academic Press, 2014.

\bibitem{BIB35_1}
G. Pan, J. Ye, and Z. Ding, ``On secure VLC systems with spatially random terminals," \emph{IEEE Commun. Lett.}, vol. 21, no. 3, pp. 492-495, Mar. 2017.


\bibitem{BIB36}
T. Cover and J. Thomas, \emph{Elements of Information Theory}, 2nd ed., Hoboken: Wiley, 2006.




\end{thebibliography}
\end{document}